\documentclass[USenglish, a4paper, thm-restate,numberwithinsect, cleveref]{lipics-v2021}

\usepackage[utf8]{inputenc}
\usepackage[T1]{fontenc}
\usepackage[arrow, matrix, curve]{xy}

\usepackage{amsmath,amsthm,amssymb}

\usepackage{tikz}
\pgfdeclarelayer{background}
\pgfdeclarelayer{foreground}
\pgfsetlayers{background,main,foreground}

\usepackage[disable]{todonotes}
\usepackage{graphicx}
\usepackage{float}
\usepackage{dsfont}
\usepackage{braket}
\usepackage{enumitem}
\usepackage{tikz}
\usepackage{verbatim}
\usepackage{mathrsfs} 
\usepackage{xspace}
\usepackage{mathtools}
\usepackage{nicefrac}
\usepackage{lineno}

{\itshape}{\rmfamily}

\usepackage{cleveref}
\usepackage{apxproof}
\usepackage[noend]{algpseudocode}
\crefformat{brrule}{Branching Rule~#1}
\crefformat{redrule}{Reduction Rule~#1}
\crefformat{observation}{Observation~#1}
\crefformat{definition}{Definition~#1}
\crefformat{corollary}{Corollary~#1}
\crefformat{proposition}{Proposition~#1}
\crefformat{lemma}{Lemma~#1}
\crefformat{theorem}{Theorem~#1}

\usetikzlibrary{calc}

\newtheorem{brrule}{Branching Rule}[section]
\newcommand{\Oh}{\ensuremath{\mathcal{O}}}
\newcommand{\Wsf}{\ensuremath{\mathcal{W}}}
\newcommand{\mT}{\ensuremath{\mathcal{T}}}

\DeclareMathOperator{\swap}{swap}
\DeclareMathOperator{\branch}{branch}
\DeclareMathOperator{\val}{val}

\definecolor{myred}{rgb}{1,0.25,0.25}

\newcommand{\prob}[3]{\begin{quote}  \textsc{#1}\\  \textbf{Input:} #2\\  \textbf{Question:} #3\end{quote}}

\newcommand{\taskprob}[3]{\begin{quote}  \textsc{#1}\\  \textbf{Input:} #2\\  \textbf{Task:} #3\end{quote}}

\newcommand{\W}[1]{\ensuremath{\mathrm{W}[#1]}\xspace}
\newcommand\NP{\ensuremath{\mathrm{NP}}\xspace}

\newcommand\FPT{\ensuremath{\mathrm{FPT}}\xspace}

\newcommand{\CL}{\textsc{Clique}\xspace}

\newcommand{\WVC}{\textsc{Weighted Vertex Cover}\xspace}

\newcommand{\LWVClong}{\textsc{LS Weighted Vertex Cover}\xspace}
\newcommand{\GLWVClong}{\textsc{Gap LS Weighted Vertex Cover}\xspace}
\newcommand{\LVClong}{\textsc{LS Vertex Cover}\xspace}
\newcommand{\GLVClong}{\textsc{Gap LS Vertex Cover}\xspace}

\newcommand{\LWVC}{\textsc{LS WVC}\xspace}
\newcommand{\GLWVC}{\textsc{GLS WVC}\xspace}
\newcommand{\LVC}{\textsc{LS VC}\xspace}
\newcommand{\GLVC}{\textsc{GLS VC}\xspace}

\newcommand{\SWLVC}{\textsc{Knap-LS VC}\xspace}

\newcommand{\vc}{\ensuremath{\mathrm{vc}}}
\newcommand{\tw}{\ensuremath{\mathrm{tw}}}
\newcommand{\mw}{\ensuremath{\mathrm{mw}}}
\newcommand{\md}{\ensuremath{\mathrm{md}}}
\newcommand{\sw}{\ensuremath{\mathrm{sw}}}
\newcommand{\bor}{\ensuremath{\mathrm{border}}}

\newcommand{\mv}{\ensuremath{\mathcal{V}}}

\newcommand{\imp}{\ensuremath{\mathcal{\delta}}}
\newcommand{\leqBin}[2]{\ensuremath{\binom{#1}{\leq #2}}}
\newcommand{\goodswap}{good swap\xspace}

\nolinenumbers

\keywords{Local Search, Structural parameterization, Fixed-parameter tractability}

\hideLIPIcs

\title{Parameterized Local Search for Vertex Cover: When only the Search Radius is Crucial}
\titlerunning{Parameterized Local Search for Vertex Cover}

\author{Christian Komusiewicz}{Friedrich Schiller University Jena, Institute of Computer Science, Germany}{c.komusiewicz@uni-jena.de}{https://orcid.org/0000-0003-0829-7032}{}

\author{Nils Morawietz}{Friedrich Schiller University Jena, Institute of Computer Science, Germany}{nils.morawietz@uni-jena.de}{https://orcid.org/0000-0002-7283-4982}{Supported by the Deutsche Forschungsgemeinschaft (DFG), project OPERAH, KO~3669/5-1.}

\authorrunning{Christian Komusiewicz, Nils Morawietz}
\Copyright{Christian Komusiewicz, Nils Morawietz}

\ccsdesc[500]{Theory of computation~Graph algorithms analysis}
\ccsdesc[500]{Mathematics of computing~Discrete mathematics}

\begin{document}
\maketitle
\begin{abstract}
A vertex set~$W$ in a graph~$G$ is a \emph{valid~$k$-swap} for a vertex cover~$S$ of~$G$ if~$W$ has size at most~$k$ and~$S'=(S \setminus W) \cup (W \setminus S)$, the symmetric difference of~$S$ and~$W$, is a vertex cover of~$G$.
If~$|S'| < |S|$, then~$W$ is~\emph{improving}.
In \LVClong, one is given a vertex cover~$S$ of a graph~$G$ and wants to know if there is a valid improving~$k$-swap for~$S$ in~$G$. In applications of \LVClong, $k$ is a very small parameter that can be set by a user to determine the trade-off between running time and solution quality. Consequently, $k$ can be considered to be a constant. Motivated by this and the fact that \LVClong is W[1]-hard with respect to~$k$, we aim for algorithms with running time~$\ell^{f(k)}\cdot n^{\Oh(1)}$ where~$\ell$ is a structural graph parameter upper-bounded by~$n$.  
We say that such a running time~\emph{grows mildly with respect to~$\ell$ and strongly with respect to~$k$}.
We obtain algorithms with such a running time for~$\ell$ being the~$h$-index of~$G$, the treewidth of~$G$, or the modular-width of~$G$. In addition, we consider a novel parameter, the maximum degree over all quotient graphs in a modular decomposition of~$G$.
Moreover, we adapt these algorithms to the more general problem where each vertex is assigned a weight and where we want to find a valid~$d$-improving~$k$-swap, that is, a valid~$k$-swap which decreases the weight of the vertex cover by at least~$d$.
\end{abstract}

\section{Introduction}
Local search is one of the most successful heuristic strategies to tackle hard
optimization problems~\cite{CSLS13,HS04,LHC+20}. Consequently, understanding when local search yields good results
and improving local search approaches is of utmost importance. In its easiest form, local
search follows a hill-climbing approach on the space of feasible solutions of the
optimization problem at hand. In this setting, one chooses some initial feasible solution
and then iteratively replaces the current solution by a better one in its local neighborhood until reaching a
local optimum, that is, a solution that has no better solution in its
neighborhood. 
Intuitively, it is clear that the larger the local search neighborhood, the
better the final solution will be. 
At the same time, searching
a larger neighborhood takes longer. In particular, for hard optimization problems, the running time will be superpolynomial when the neighborhood is too large. 
As a consequence, there is a trade-off between running
time and solution quality that is governed by the size of the local search neighborhood.

Parameterized local search offers a framework that may guide the design process for algorithms that attempt to search larger local neighborhoods. 
When applying parameterized local search to an optimization problem, the first step is to define a measure of distance between solutions. 
The local search neighborhood of a solution is then the set of solutions within distance at most~$k$. 
Here, $k$ is an \emph{operational} parameter that can be set by the user and that does not depend on the input data. 
The hope is now that the superpolynomial part of the running time for searching the local neighborhood depends solely on~$k$ while the dependence on the input size~$n$ is only polynomial. 
More precisely, the ultimate goal of parameterized local search is to devise an algorithm that determines
in~$f(k)\cdot n^{\Oh(1)}$~time whether there exists a better solution within distance at most~$k$ of the current one. 
Parameterized algorithms with this running time are called FPT-algorithms~\cite{DF13,C+15}.  Most local search problems, however, turn out to be \W1-hard with respect to the parameter~$k$~\cite{BIJK19,FFL+12,GHK14,GHNS13,Marx08,Szei11}. This makes it unlikely that they admit an FPT-algorithm when parameterized by~$k$~\cite{DF13,C+15}.

When applying local
search to \textsc{Vertex Cover}, the set of feasible solutions of a graph~$G=(V,E)$ is
naturally defined as the collection of vertex covers of~$G$, that is, vertex sets~$S\subseteq V$ that cover all edges of the graph.
The most obvious choice for a local
search neighborhood is the $k$-swap neighborhood. 
Here, two vertex sets~$S$
and~$S'$ are $k$-swap neighbors if and only if~$(S\setminus S')\cup (S'\setminus S)$ has
size at most~$k$. 
The problem of deciding whether a given vertex cover~$S$ of a graph~$G$
has a smaller vertex cover in its~$k$-swap neighborhood, called \LVClong, is \W1-hard with respect to~$k$~\cite{FFL+12}. Thus, at first it may seem unlikely that
parameterized local search can be successfully applied to \LVClong. 
There
are, however, some positive results for \LVClong. 
In particular, \LVClong admits an FPT-algorithm for $\Delta(G)+k$, where $\Delta(G)$
is the maximum degree of the input graph~\cite{FFL+12}.
That is, it can be solved in~$f(\Delta(G),k)\cdot n^{\Oh(1)}$~time. While this running time bound is certainly interesting for bounded-degree graphs, it
does not necessarily deliver on the promise of parameterized local search that the
superpolynomial part of the running time depends mostly on~$k$: 
for example~$f(\Delta(G),k)$ could be~$2^{\Delta(G)\cdot k}$. For many real-world instances, such a running time would be prohibitively large already for~$k=2$~\cite{KMSS25}. It is also known, however,
that \LVClong can be solved in time~$\Oh(2^k \cdot (\Delta(G) - 1)^{k/2}\cdot k^3 \cdot n)$~\cite{KK17}. 
In this running time only~$k$ appears in the exponent, while~$\Delta(G)$ appears solely in the
base of the exponential function. 
Consequently, for small values of~$k$ the running time guarantee
can still be practically relevant, even when~$\Delta$ is not too small. 
In particular, the latter
running time bound is polynomial when~$k$ is constant. The practical usefulness of the algorithm with this running time
was confirmed by experiments which showed that \LVClong can be solved
efficiently for~$k$ up to~25~\cite{KK17}.

\subparagraph{Our Setting.}
We aim to find further algorithms for \LVClong
that achieve running times like the one above which can be considered practical even though the superpolynomial running time part depends not only on the operational parameter~$k$ but also on some structural parameter~$\ell$. In particular, we want running times~$f(k,\ell)\cdot n^{\Oh(1)}$ where~$f$ is polynomial whenever~$k$ is constant. 
The class of functions~$f$ that provide this guarantee can be formalized as follows. 
\begin{definition}\label{def:smrt}
  Let~$f\colon \mathbb{N}\times \mathbb{N}\to \mathbb{N}$ be a function. We say that~$f$ \emph{grows mildly with respect to~$\ell$ and strongly with respect to}~$k$ if~$f(\ell,k)\in \Oh(\ell^{\,g(k)})$ for some computable function~$g$ depending only on~$k$.
\end{definition}
In the language of Definition~\ref{def:smrt}, we are interested in obtaining FPT-algorithms where~$f$ grows strongly only with
respect to~$k$ and mildly with respect to some other parameter~$\ell$. In our opinion, the usefulness of this setting is not limited to local search problems. 
Instead, it may be useful whenever
\begin{itemize}
\item two parameters~$k$ and~$\ell$ are studied, 
\item $k$ is known to be very small on relevant input instances, \item $k$ is known to be much smaller than~$\ell$ on these instances, and
\item the problem is W[1]-hard with respect to~$k$.
\end{itemize}
As a final remark, observe that Definition~\ref{def:smrt} is reminiscent of the definition of XP-algorithms which are the algorithms with running time~$n^{g(k)}$. Hence, for~$\ell<n$ our desired algorithms can be viewed as improved XP-algorithms for the parameter~$k$.   
\subparagraph{Our Results.} We provide FPT-algorithms for \LVClong
parameterized by~$k$ and several structural parameters of~$G$. Besides~$k$, we consider the treewidth of the input graph~$G$, denoted by~$\tw(G)$, the $h$-index of the input graph~$G$, denoted by~$h(G)$, the modular-width of~$G$, denoted by~$\mw(G)$, and a novel parameter, the maximum degree over all quotient graphs in a minimum-width modular decomposition, denoted by~$\Delta_{\md}(G)$.
In all our FPT-algorithms, the running time grows strongly with respect to~$k$ and only mildly with respect to the particular structural parameter. Moreover, for all these algorithms, the running time depends only linearly on the size of the input graph.

Many of our algorithms actually solve the more \GLWVClong problem where the
input graph is vertex-weighted, the cost of a vertex cover is the sum
of its vertex weights, and we search for a swap that improves the
current solution by at least~$d$ for some input value~$d$.  Local
search approaches for \textsc{Weighted Vertex Cover} have been studied
from a more practical perspective which motivates our study of
weighted variants of~\LVC. In addition, for weighted local search
problems, there may be exponentially long chains of local improvements
before one finds a local optimum~\cite{JPY88} even for swaps of
constant size~\cite{KM25}. Here, using a gap-variant of local search
could reduce the number of necessary steps by increasing the
improvement per step. We now discuss the results in detail; an overview of the results is given in Table~\ref{tab:results}.

\begin{table}[h!]
\centering
\scalebox{.85}{
\begin{tabular}{l c c c c c c}
\hline
 & $\tw(G)$ & $\Delta(G)$ & $h(G)$ & $\mw(G)$ & $\Delta_{\md}(G)$ & $\sw(G)$ \\
\hline
GLS WVC & $r(G)^{k+1}$  & $k! \cdot (\Delta-1)^{k}$ & $k! \cdot (h-1)^{k}$ & $\mw(G)^{k+1}$ &  & $\sw(G)^{k+1}$ \\
GLS VC  & $r(G)^{\frac{k+d}{2}+1}$ & $k!\cdot (\Delta-1)^{\frac{k+d}{2}}$ & $k! \cdot (h-1)^{k}$  & $\mw(G)^{\frac{k+d}{2}+1}$ &  $(\Delta_{\md}(k^2+k))^k$  & $\sw(G)^{\frac{k+d}{2}+1}$ \\
  LS WVC     & $r(G)^{k+1}$ & $2^k \cdot (\Delta-1)^{k}$ & $k! \cdot (h-1)^{k}$  & $\mw(G)^{k+1}$ &  & $\sw(G)^{k+1}$ \\
  LS VC      & $r(G)^{\frac{k+1}{2}+1}$ & $2^k \cdot (\Delta-1)^{k/2}$~\cite{KK17} & $k! \cdot (h-1)^{k}$ & $\mw(G)^{\frac{k+1}{2}+1}$ & $(\Delta_{\md}(k^2+k))^k$ & $\sw(G)^{\frac{k+d}{2}+1}$ \\
\hline
\end{tabular}
}
\caption{Running time bounds for the different FPT-algorithms; the linear dependence on~$n$ and factors polynomial in~$k$ are omitted. In the running times for~$\tw(G)$, we assume that a tree decomposition of width~$r$ is provided.}
\label{tab:results}
\end{table}

The~$h$-index of a graph~$G$ is the largest number~$h$ such that~$G$ has at least~$h$ vertices with degree at least~$h$~\cite{ES12}. For \GLWVClong, we obtain an algorithm with running time~$\Oh(k!\cdot (h(G)-1)^{k}\cdot n)$. 
This can be seen as an improvement over the FPT-algorithm
for~$\Delta(G)$ and~$k$~\cite{KK17} since~$h(G)$ is never larger than~$\Delta(G)$ and often much smaller~\cite{KMSS25}. In fact, in many
real-world instances the input graphs are scale-free, and on scale-free graphs~$h(G)$ is drastically smaller than~$\Delta(G)$. 
Even in such graphs, in order to speak of an
improvement, it is imperative that the running time of the FPT-algorithm grows mildly with respect to~$h(G)$ and strongly with respect to~$k$: a running
time of~$\Oh(2^{h(G)\cdot k}\cdot n)$ would be less desirable than the previous one
for~$\Delta(G)$ and~$k$ since the exponent would not be confined to the operational
parameter~$k$. 
The FPT-algorithm for~$h(G)$ and~$k$ follows a two-step approach. First, we consider
all possibilities of how an improving swap may interact with the high-degree
vertices. 
Then, for each such possibility, we find an improving swap in the remaining
low-degree part of the graph. 
Intuitively, this can be done by applying the previous
algorithm for bounded-degree graphs. 
There are, however, two main obstacles that we need
to overcome, in particular to obtain a good running time dependence on~$h(G)$
and~$k$. 
First, the partial swap on the high-degree vertices may actually produce a
handicap which is why it is not sufficient to find an improving swap on the low-degree
part. 
Instead, we must find a swap that decreases the vertex cover size by at least~$d$ for
some number~$d$. 
In other words, we need to solve the above-mentioned gap-version of
\LVClong.  Second, the swap may be disconnected on the graph induced by
the low-degree vertices. 
To identify the connected components of the swap, we branch on
the neighborhoods of small connected swaps that are optimal connected swaps for their
size, the important observation being that no such swap is used only if at least one neighbor of such a swap is swapped.

The FPT-algorithm for~$\tw(G)$ and~$k$ has running time~$\Oh((\tw(G)^{3k} + k^2)\cdot  n)$. 
It is based on dynamic programming on the tree
decomposition. 
The main observation is that for a bag of the tree decomposition it is
sufficient to consider all possibilities of how an improving swap interacts with the
bag. 
This reduces the running time for \LVClong to~$\Oh((\tw(G)^{3 \cdot\lceil \frac{k}{2}\rceil} + k^2)\cdot  n)$.
Hence, compared to the algorithm for \GLWVClong, we are able to consider swaps of double the size. 

We then consider parameters that are related to modular decompositions. 
These parameters measure a different structural aspect, the similarity of neighborhoods in the graph, than treewidth or the degree-related parameterizations. 
In particular, they can be very small in dense graphs.     
To the best of our knowledge, this is the first time that such parameters are considered in the context of parameterized local search. 
For \GLWVClong we develop an FPT-algorithm with running time~$\Oh(\mw(G)^k \cdot k \cdot (\mw(G) + k) \cdot n + m)$, where~$\mw(G)$ is the modular-width of~$G$, the size of the largest vertex set of any quotient graph of a modular decomposition of~$G$. 
The algorithm is based on bottom-up dynamic programming over the decomposition and considers all possibilities how an improving swap may interact with a node of the decomposition. 
We then show an improvement of this algorithm in terms of the structural parameter. More precisely, we show that when processing a node of the decomposition in the dynamic programming algorithm, one may apply a branching algorithm to determine how the swap interacts with the current node. Superficially, this branching algorithm resembles the one for $\Delta(G)$ and~$k$ but including the information computed for other nodes of the decomposition requires to combine the branching with~\textsc{Knapsack} DP-algorithms. 
This gives an FPT-algorithm for the parameters~$\Delta_{\md}(G)$ and~$k$. 
Recall that~$\Delta_{\md}(G)$ is the maximum degree over all quotient graphs of a modular decomposition of minimum width which is upper-bounded by~$\mw(G)$. 
We believe that this novel parameter can be useful in further algorithmic applications of modular decompositions because it often takes on substantially smaller values than the maximum degree and the modular-width~\cite{KMSS25}. 
We remark that the presented algorithm for~$\Delta_{\md}(G)$ and~$k$ only solves the unweighted gap version of \LVC; an extension to~\GLWVClong~seems possible but somewhat tedious.
In a second improvement, we show that instead of modular-width one can also obtain FPT-algorithms when using the smaller splitwidth.  
We complement these algorithms by conditional lower bounds that are based on the assumption that matrix-multiplication-based algorithms for \textsc{Clique} are running-time-optimal~\cite{ABW18}. We show that under this assumption, we may not expect a very large improvement of the previously known and the new algorithms.

\section{Preliminaries}

For integers~$i$ and~$j$ with~$i \leq j$, we define~$[i,j] :=   \{k \in \mathbb{N}\mid i \leq k \leq j\}$.
For a set~$A$, we denote with~${{A}\choose {2}}:=   \{\{a,b\}\mid a \in A, b\in A, a \neq b\}$ the collection of all size-two subsets of~$A$.
For two sets~$A$ and~$B$, we denote with~$A \oplus B :=   (A \setminus B) \cup (B \setminus A)$ the \emph{symmetric difference} of~$A$ and~$B$.

\subparagraph{Graph Notation.}
An (undirected) graph~$G=(V,E)$ consists of a set of vertices~$V$ and a set of edges~$E \subseteq {{V}\choose {2}}$. For a given graph~$G$, we let~$V(G)$ denote its vertex set and~$E(G)$ its edge set.
For vertex sets~$S\subseteq V$ and~$T\subseteq V$, we denote with~$E_G(S,T) :=   \{\{s,t\}\in E \mid s\in S, t\in T\}$ the edges between~$S$ and~$T$ and we use~$E_G(S) :=   E_G(S,S)$ as a shorthand.
Moreover, we define~$G[S] :=   (S,E_G(S,S))$ as the~\emph{subgraph of~$G$ induced by~$S$}. 
For a vertex~$v\in V$, we denote with~$N_G(v):=   \{w\in V\mid \{v,w\}\in E\}$ the \emph{open neighborhood} of~$v$ in~$G$ and with~$N_G[v]:=   \{v\}\cup N_G(v)$ the \emph{closed neighborhood} of~$v$ in~$G$. 
Analogously, for a vertex set~$S\subseteq V$, we define~$N_G[S] :=   \bigcup_{v\in S} N_G[v]$ and~$N_G(S) :=   \bigcup_{v\in S} N_G(v)\setminus S$.
If~$G$ is clear from the context, we may omit the subscript.

\subparagraph{Vertex Cover Local Search.}
A vertex set~$S\subseteq V$ is a~\emph{vertex cover} of~$G$ if at least one endpoint of each edge in~$E$ is contained in~$S$.
Let~$S$ be a vertex cover of~$G$. 
Then, $V\setminus S$ is an~\emph{independent set} in~$G$, that is, the vertices of~$V\setminus S$ are pairwise non-adjacent.
A~\emph{$k$-swap}, for~$k\in \mathbb{N}$, is a vertex set~$W$ of size at most~$k$ and~$W$ is said to be \emph{valid for a vertex cover~$S$ in~$G$} if~$S \oplus W$ is also a vertex cover of~$G$.
For each valid swap~$W$ for~$S$ in~$G$, both~$W \cap S$ and~$W \setminus S$ are independent sets and~$N(W) \setminus S = N(W\cap S) \setminus S \subseteq W$.
A swap~$W$ is~\emph{connected} if~$G[W]$ is connected.
Let~$\omega\colon  V \to \mathbb{N}$ be a weight function.
For each vertex set~$X\subseteq V$, we set~$\omega(X) :=   \sum_{x\in X}\omega(x)$.
\begin{definition}
The~\emph{improvement} of~$W$ is defined as~$\imp^S_\omega(W) :=   \omega(W\cap S) - \omega(W \setminus S)$.
Moreover, $W$~is~\emph{improving} if~$\imp_\omega^S(W) > 0$ and~\emph{$d$-improving} for some~$d\in \mathbb{N}$ if~$\imp_\omega^S(W)\geq d$.
\end{definition}
If~$S$ or~$\omega$ are clear from the context, we may omit them.
Next, we formally define the local search problems for~\textsc{Vertex Cover} we consider in this work.

\prob{\LWVClong (\LWVC)}{A graph~$G=(V,E)$, a weight function~$\omega\colon  V \to \mathbb{N}$, a vertex cover~$S$ of~$G$, and~$k\in \mathbb{N}$.}{Is there a valid improving~$k$-swap~$W\subseteq V$ for~$S$ in~$G$?}

\prob{\GLWVClong (\GLWVC)}{A graph~$G=(V,E)$, a weight function~$\omega\colon  V \to \mathbb{N}$, a vertex cover~$S$ of~$G$, $k\in \mathbb{N}$, and~$d\in\mathbb{N}$.}{Is there a valid~$d$-improving~$k$-swap~$W\subseteq V$ for~$S$ in~$G$?}

Moreover, we define \GLVClong (\GLVC) as the special case of~\GLWVC where~$\omega(v) = 1$ for each~$v\in V$ and~$d\in[1,k]$, and~\LVClong (\LVC) as the special case of~\GLVC where~$d = 1$.
Note that for an instance of~\GLVC the improvement of a swap~$W$ is~$|W\cap S| -|W \setminus S|$.
Let~$I=(G,S,\omega,k,d)$ be an instance of~\GLWVC.
We say that~\emph{$W\subseteq V(G)$ is a \goodswap for~$I$}, if~$W$ is a valid~$d$-improving~$k$-swap for~$S$ in~$G$.
Further, we say that a \goodswap~$W$ for~$I$ is~\emph{minimal} if each proper subset~$W'$ of~$W$ is not a \goodswap for~$I$ and~\emph{minimum} if there is no \goodswap~$W'$ for~$I$ with~$|W'|<|W|$.

We may also consider the \emph{permissive} version of \LVC where one is not limited to search in the local neighborhood for a better solution. More precisely, in the permissive problem the algorithm may either output any better solution or report that the current solution is locally optimal.

\section{Basic Observations and Lower Bounds}
In this section, we first define the notion of swap-instances.
These are instances obtained from an instance~$I$ of~\GLWVC after applying some partial swap. 
Swap-instances will be useful for describing certain parts of our algorithms such as branching rules. 
We then make some observations on certain useful properties of improving swaps.
Those are mostly generalizations of known results for~\LVC.
Finally, we present our running time lower bounds for the considered parameters; these bounds also hold for the permissive version.

\subparagraph{Swap-Instances.}
In our algorithms, we may change instances by performing some partial swaps, for example during branching. 
We call the instance obtained by such an operation a~\emph{swap-instance}.
Intuitively, the swap-instance~$\swap(I,W)$ for an instance~$I$ of~\GLWVC and a (partial) swap~$W$ is the~\GLWVC-instance obtained as follows: First, swap~$W$. 
Now,~$W$ might not be a valid swap because some vertices of~$W$ may been moved to the independent set without moving all of their neighbors to the vertex cover. To repair the swap we swap these vertices, giving an extended swap set~$W'\supseteq W$. 
To simplify the instance, the set~$W'$ is then removed from the instance together with the neighbors of~$W\cap S$ in~$S$.
Finally, to maintain equivalence, the remaining budget~$k$ is decreased by the number of swapped vertices and the required improvement~$d$ is decreased by the improvement of~$W'$.

The unique inclusion-minimal superset~$W'$ of~$W$ that has to be swapped to again obtain a vertex cover is called the~\emph{extension of~$W$ with respect to~$I$}.
Note that the extension of~$W$ with respect to~$I$ is exactly~$W \cup (N(W) \setminus S)$, since each independent set vertex adjacent with at least one vertex of~$S$ that is swapped out of the vertex cover, has to be swapped to obtain a vertex cover.

Formally, swap-instances are defined as follows; an example of a swap-instance is shown in~\Cref{fig swap instance}.
\begin{definition}
  Let~$I=(G,\omega, S, k, d)$ be an instance of~\GLWVC and let~$W \subseteq V(G)$ be a~$k$-swap. Define~$W' :=   W \cup (N(W) \setminus S)$ to be \emph{the extension of~$W$ with respect to~$I$}.
  The~\emph{swap-instance for~$I$ and~$W$} is the instance $$\swap(I,W) :=   (G',\omega', S' :=  S \setminus W, k', d')$$ with~$G':=  G -(N(W \cap S) \cup W')$, $k' :=  k - |W'|$, $d':=  d - \imp(W') = d - \omega(W' \cap S) + \omega(W' \setminus S)$, and~$\omega'$ being the restriction of~$\omega$ to~$V(G')$.
\end{definition}

\begin{figure}[t]
\begin{center}
\begin{tikzpicture}[yscale=.6, xscale = 1.5]
\tikzstyle{knoten}=[circle,fill=white,draw=black,minimum size=7pt,inner sep=0pt]
\tikzstyle{blocked}=[circle,fill=black,draw=black,minimum size=7pt,inner sep=0pt]
\tikzstyle{bez}=[inner sep=0pt]

		\node[knoten] (v5)  [label=below:{$v_4$}] at (-1.75,1) {};
		\node[knoten] (v6)[label=below:{$v_5$}] at ($(v5) + (1,0)$) {};
		\node[knoten] (v2) [label=below:{$v_6$}] at ($(v6) + (1,0)$) {};
		\node[knoten] (v7)[label=below:{$v_7$}] at ($(v2) + (1,0)$)  {};
		\node[knoten] (v8) [label=below:{$v_8$}] at ($(v7) + (1,0)$)  {};
		\node[blocked] (v1) [label={$v_1$}] at ($(v2) + (0,3)$) {};	
		\node[blocked] (v3) [label={$v_2$}] at ($(v1) + (1,0)$) {};
		\node[blocked] (v4)[label={$v_3$}] at ($(v3) + (1,0)$) {};

		\draw[-, ultra thick] (v1) to (v3);
		\draw[-, ultra thick] (v1) to (v5);
		\draw[-, ultra thick] (v1) to (v6);
		\draw[-, ultra thick] (v2) to (v3);
		\draw[-, ultra thick] (v2) to (v4);
		\draw[-, ultra thick] (v3) to (v7);
		\draw[-, ultra thick] (v3) to (v8);
		\draw[-, ultra thick] (v4) to (v7);
		\draw[-, ultra thick] (v4) to (v8);

		\node (a)[label=left:{$W'$}] at ($(v5) + (.3,3)$) {};
		\node (b)[label=left:{$W$}] at ($.5*(v1) + .5*(v2) + (.3,0)$) {};

\begin{pgfonlayer}{background}

		\node (a)[label=left:{$G$}] at (0,6.5) {};
		\node (a2) [label=left:{$G'$}] at (4.25,6.5) {};
		\draw[rounded corners, fill=green!30] (-2.6, 5.8) --
  ++(4.2, 0) -- ++(0,-3) -- ++(-.75, 0)-- ++(0, -3.4) -- ++(-3.45,0) -- cycle;
  
\draw[rounded corners, fill=yellow!30] (-2.5, 5.5) rectangle (.75, -0.3) {};
\draw[rounded corners, fill=blue!30] ($(v1) + (-.4,1.2)$) rectangle ($(v2) + (.4,-1)$) {};

\end{pgfonlayer}

		\node (a3) at (2.8,2) {\huge $\leadsto$};
		\node[blocked] (w4)[label={$v_3$}] at (4.5,4) {};  
		\node[knoten] (w7)[label=below:{$v_7$}] at (3.5,1)  {};
		\node[knoten] (w8) [label=below:{$v_8$}] at (4.5,1)  {};
		\draw[-, ultra thick] (w4) to (w7);
		\draw[-, ultra thick] (w4) to (w8);    
		\end{tikzpicture}
\end{center}
\caption{An instance~$I:=  (G,S,k,d)$ of \GLVC (left) and the swap-instance~$\swap(I,W) :=   (G',S',k',d')$ (right)  obtained from the swap~$W:=  \{v_1, v_6\}$. The vertex cover vertices are black, the independent set vertices are white.
The green area contains the vertices of~$N(W \cap S) \cup W'$ which are in~$G$ but not in~$G'$.
Since~$W'$ has size 4 and contains only one vertex of~$S$, $k' :=   k - 4$ and~$d' :=   d + 2$.
Moreover, the vertex~$v_2$ is not contained in~$G'$ since~$v_1$ is adjacent to~$v_2$ and leaves the vertex cover, which implies that~$v_2$ cannot leave the vertex cover afterwards.
}
\label{fig swap instance}
\end{figure}
Informally, the following lemma states that the swap instance~$\swap(I,W)$ tells us whether there is any \goodswap containing some partial swap~$W$.

\begin{lemma}
Let~$I=(G=(V,E),\omega, S, k, d)$ be an instance of~\GLWVC and let~$W \subseteq V$ be a vertex set such that~$W\cap S$ is an independent set. 
There is a \goodswap~$W^*$ for~$I$ with~$W\subseteq W^*$ if and only if~$\swap(I,W)$ is a yes-instance of~\GLWVC.
\end{lemma}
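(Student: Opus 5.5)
The plan is to establish an explicit correspondence $W^* \mapsto W^* \setminus W'$ between good swaps for~$I$ that contain~$W$ and good swaps for~$\swap(I,W)$, where $W' = W \cup (N(W)\setminus S)$ is the extension. Write $R := N(W\cap S) \cup W'$ for the set of removed vertices, so that $G' = G - R$. Two structural facts will be used in both directions. First, since $W\cap S$ is independent, no vertex of $N(W\cap S)\cap S$ lies in~$W$. Every vertex of $W'\setminus W$ lies outside~$S$, so $N(W\cap S)\cap S$ is disjoint from~$W'$. Second, the improvement is additive over disjoint sets: $\imp(A\cup B)=\imp(A)+\imp(B)$ for $A\cap B=\emptyset$.

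($\Rightarrow$) Let $W^*\supseteq W$ be a good swap for~$I$.
\begin{enumerate}
\item I first show $W'\subseteq W^*$. Take $x\in N(W)\setminus S$. Then $x$ has a neighbor $y\in W\cap S$ (a neighbor of~$x$ in~$W$ lying outside~$S$ would give an edge with both endpoints outside the vertex cover~$S$). The vertex $y$ leaves the cover under $W^*$, so validity of~$W^*$ forces $x\in W^*$.
\item Next, $W^*$ contains no vertex of $N(W\cap S)\cap S$. Such a vertex would leave the cover together with an adjacent vertex of $W\cap S$, leaving that edge uncovered.
\item Hence $W'' := W^*\setminus W'$ lies in $V(G')$ and satisfies $|W''| \le k-|W'| = k'$.
\item By additivity, $\imp(W'') = \imp(W^*)-\imp(W') \ge d-\imp(W') = d'$.
\item For validity in~$G'$, every edge of~$G'$ is covered by $S\oplus W^*$. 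Restricted to $V(G')$, this set coincides with $(S\setminus W)\oplus W'' = S'\oplus W''$ on $V(G')$.
\end{enumerate}

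($\Leftarrow$) Let $W''$ be a good swap for $\swap(I,W)$ and set $W^* := W'\cup W''$; this union is disjoint. The size bound $|W^*|\le k$ and the improvement bound $\imp(W^*)\ge d$ follow immediately from additivity. The only real work is validity of~$W^*$, which I would argue edge by edge.
\begin{enumerate}
\item Edges inside~$G'$ are covered by validity of~$W''$ for~$S'$ in~$G'$.
\item An edge with an endpoint in $W'\setminus S$ is covered, because that endpoint enters the cover.
\item An edge with an endpoint in $N(W\cap S)\cap S$ is covered, because by the first structural fact that vertex is in neither $W'$ nor $V(G')\supseteq W''$, so it stays in the cover.
\item The remaining case is an edge $\{y,z\}$ with $y\in W\cap S$. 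Its other endpoint~$z$ lies either in $N(W\cap S)\cap S$, which stays in the cover, or outside~$S$, in which case $z\in W'$ enters the cover.
\end{enumerate}
Since $R = (W'\setminus S)\cup(W\cap S)\cup(N(W\cap S)\cap S)$, these cases exhaust all edges.

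I expect the main obstacle to be this case analysis for edges touching the removed set~$R$. The delicate point is exactly where the hypothesis that $W\cap S$ is independent enters: it guarantees that the $S$-neighbors of $W\cap S$ are not themselves swapped out. A minor technical point is that $S' = S\setminus W$ may contain vertices outside $V(G')$. I would handle this by interpreting validity in~$G'$ with respect to $S'\cap V(G')$, which changes nothing. Another minor point is that $k'$ may be negative; then both sides are trivially false, since by the forward direction every such $W^*$ has size at least $|W'|>k$.
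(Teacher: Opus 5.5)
Your proposal is correct and follows the paper's proof: both directions use the correspondence $W^*\mapsto W^*\setminus W'$ and $X\mapsto X\cup W'$, with additivity of $\imp$ giving the budget and improvement bounds, and with independence of $W\cap S$ ensuring that its $S$-neighbours are never swapped. Your edge-by-edge validity check in the backward direction is a more detailed rendering of the paper's argument, which shows that every swapped-out $S$-vertex has all its non-$S$ neighbours swapped in and that no two adjacent $S$-vertices are swapped. The same holds for your explicit check of validity in $G'$ in the forward direction and your remarks on $S'\not\subseteq V(G')$ and negative $k'$.
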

\begin{proof}
Let~$(G',\omega', S', k', d') :=   \swap(I,W)$ and let~$W' = W \cup (N(W) \setminus S)$ be the extension of~$W$ with respect to~$I$.

Consider a \goodswap~$W^*$ for~$I$ with~$W\subseteq W^*$. 
Since~$W^*$ is valid, $W'\subseteq W^*$.
Moreover, no vertex of~$W^*\cap S$ has a neighbor in~$W\cap S = W' \cap S$.
Hence, $W^*\setminus W'$ contains only vertices of~$G'$, since~$G'$ contains all vertices of~$G$ except for the vertices of~$N(W\cap S)\cup W'$. 
Moreover, $W^*\setminus W'$ has size~$|W^*|-|W'| \leq k-|W'|  =  k'$, and~$\imp(W^*\setminus W') = \imp(W^*) - \imp(W') \geq d-\imp(W') = d'$.
As a consequence, $W^*\setminus W'$ is a \goodswap for~$\swap(I,W)$.

Let~$X$ be a \goodswap for~$\swap(I,W)$ and let~$W^* :=   X \cup W'$.
By definition, $W^*$ has size at most~$k$, $W^*$~is~$d$-improving, and~$W$ is a subset of~$W^*$.
Thus, it remains to show that~$W^*$  is a valid swap for~$S$ in~$G$.
Since i)~$W'$ is a valid swap for~$S$ in~$G$, ii)~$X$ is a valid swap for~$S'$ in~$G'$, and iii)~$G'$ contains all vertices of~$(V\setminus W')\setminus S$, we have the following: for each vertex~$v\in W^* \cap S$, each neighbor of~$v$ in~$V\setminus S$ is contained in~$W^*$.
Hence, $W^*$ is valid if there are no two adjacent vertices of~$S$ in~$W^*$.
Since~$W'\cap S$ and~$X\cap S$ are both independent sets in~$G$ and since no vertex of~$V(G')\cap S \supseteq X\cap S$ is adjacent to some vertex of~$W'\cap S$, this property holds.
Consequently, $W^*$ is valid. 
\end{proof}

Consider the swap-instance of an instance~$I:=  (G,S,k,d)$ of the unweighted problem \GLVC and some~$W\subseteq V(G)$.
Then~$k' + d' = k + d -  2 \cdot |W \cap S|$, since~$\imp(W') = -|W'| + 2 \cdot |W \cap S|$.
This observation has the following consequence for our branching algorithms: 
If in each branching step, we swap at least one vertex out of the vertex cover, the depth of the branching-tree is at most~$\frac{k+d}{2}$. 
Let~$I$ be an instance of~\GLWVC and let~$W$ be the subset of some valid swap.
When we replace the instance~$I$ by~$\swap(I,W)$ we may say that we~\emph{swap~$W$ in~$I$}.

\subparagraph{Properties of Improving Swaps.}
Next, we generalize some known properties of \goodswap{}s of~\LVC to the more general problems~\GLVC and~\LWVC.
Consider some improving swap~$W$ for~$S$ in~$G$.
Then, each connected component in~$G[W]$ is a valid swap and since~$W$ is improving, at least one connected component in~$G[W]$ is an improving swap for~$S$ in~$G$. 
Hence, the following holds.
\begin{observation}\label{connected}
Let~$I=(G,\omega,S,k)$ be an instance of~\LWVC.
If~$I$ is a yes-instance of~\LWVC, then there is a \goodswap~$W$ for~$I$ such that~$W$ is connected. 
\end{observation}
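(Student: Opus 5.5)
Starting from a good swap $W$ for $I$, i.e.\ a valid improving $k$-swap for $S$ in $G$, we will show that one connected component of $G[W]$ is itself a good swap. Let $C_1,\dots,C_t$ be the vertex sets of the connected components of $G[W]$. The proof has three steps: (i) each $C_i$ is a $k$-swap, (ii) each $C_i$ is valid, and (iii) some $C_i$ is improving. Step (i) holds trivially because $|C_i|\le |W|\le k$.

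For validity we use the characterization from the preliminaries. A swap $X$ is valid for $S$ exactly when $X\cap S$ is independent and $N(X\cap S)\setminus S\subseteq X$. For the forward direction, an edge inside $X\cap S$ would be uncovered in $S\oplus X$. Similarly, an edge from $v\in X\cap S$ to some $u\notin S\cup X$ would be uncovered. For the converse, take an edge $\{a,b\}$ that is not covered by $S\oplus X$. Since $S$ is a vertex cover, one endpoint, say $a$, lies in $S\cap X$. The other endpoint $b$ is then either in $S\cap X$, contradicting independence, or in $V\setminus (S\cup X)$, contradicting the neighborhood condition.

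Now fix a component $C_i$. The set $C_i\cap S$ is a subset of the independent set $W\cap S$, so it is independent. Let $v\in C_i\cap S$ and let $u\in N(v)\setminus S$. By validity of $W$ we have $u\in W$. Since $u$ is adjacent to $v$ in $G[W]$, it lies in the same component, so $u\in C_i$. Hence $C_i$ is valid. This locality argument is the only point that needs care, and it is where connectivity of the components is used.

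For improvement, the $C_i$ partition $W$, so $\imp(W)=\sum_i \imp(C_i)$ by additivity of $\omega$ over the disjoint sets $C_i\cap S$ and $C_i\setminus S$. Since $\imp(W)>0$, some summand satisfies $\imp(C_i)>0$. That $C_i$ is a connected valid improving $k$-swap, which proves the observation. None of the steps is a real obstacle.
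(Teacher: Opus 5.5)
Your proposal is correct and follows the paper's argument: each connected component of $G[W]$ is a valid swap, and since the improvement is additive over the components, at least one component is improving. You additionally spell out why each component is valid, using the characterization that $X\cap S$ is independent and $N(X\cap S)\setminus S\subseteq X$; the paper takes this step for granted.
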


Next, we show that for the unweighted problem~\GLVC it is sufficient to consider instances where~$k+d$ is even.
\begin{lemma}\label{lem:parity}
Let~$I=(G,S,k,d)$ be an instance of~\GLVC where~$k+d$ is odd and~$k\geq 1$.
Then, $I$ is a yes-instance of~\GLVC if and only if~$I':=  (G,S,k-1,d)$ is a yes-instance of~\GLVC. 
\end{lemma}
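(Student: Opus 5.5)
The key identity is that in the unweighted setting every swap~$W$ has improvement $\imp(W) = |W\cap S| - |W\setminus S| = 2|W\cap S| - |W|$, so $|W| + \imp(W) = 2|W\cap S|$ is always even. The whole argument rests on this parity fact.

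The reverse direction is immediate. A \goodswap for~$I'$ has size at most~$k-1 \leq k$ and is $d$-improving, so it is also a \goodswap for~$I$.

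For the forward direction, let~$W$ be a \goodswap for~$I$, so $|W|\leq k$ and $\imp(W)\geq d$. If $|W|\leq k-1$, then $W$ is already a \goodswap for~$I'$. So suppose $|W| = k$. Then $\imp(W) \equiv |W| = k \pmod 2$. Since $k+d$ is odd, $k$ and $d$ have different parity, so $\imp(W)\neq d$. Combined with $\imp(W)\geq d$, this gives $\imp(W)\geq d+1$.

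It remains to shrink~$W$ by one vertex while losing at most one unit of improvement and keeping validity. I would remove a vertex $v\in W\setminus S$ whose removal keeps the swap valid. By the characterization in the preliminaries, $W\setminus\{v\}$ is valid as long as $v\notin N(W\cap S)$. Removing such a~$v$ raises the improvement by one, which is even better than needed.

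The main obstacle is that every vertex of $W\setminus S$ might be a neighbor of $W\cap S$, and then the step above is unavailable. In that case I would instead remove a vertex $u\in W\cap S$ together with its private neighbors in $W\setminus S$, meaning those not adjacent to any other vertex of $W\cap S$. The problem is that this may change the improvement by more than one.

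The cleaner route is to use minimality. Take a \goodswap~$W$ for~$I$ of minimum size. If $|W|\leq k-1$ we are done, so assume $|W| = k$; as shown, $\imp(W)\geq d+1$. Since $d\geq 1$, $W$ is improving, so by the argument of \Cref{connected} some connected component~$C$ of~$G[W]$ is valid and improving.

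If $G[W]$ is disconnected, then either $C$ or $W\setminus C$ (also valid, being a union of components) should be checked for having improvement at least~$d$ with fewer vertices, contradicting minimality. If that fails, the parity slack must be used more carefully.

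If $G[W]$ is connected, I would pick a vertex $u\in W\cap S$ that is not a cut vertex of $G[W]$, for example a leaf of a spanning tree, and delete it together with the $W\setminus S$ neighbors that become isolated. Each such deletion lowers the improvement by $1 - (\text{number of removed independent-set vertices})$, which is at most one. So the improvement stays at least $d$ and the resulting swap is valid and strictly smaller, again a contradiction.

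I expect the hardest step to be verifying this last case analysis cleanly.
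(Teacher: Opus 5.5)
Your parity observation ($|W|+\imp(W)=2|W\cap S|$, hence $\imp(W)\ge d+1$ whenever $|W|=k$) is exactly the right key fact, and your reverse direction is fine. The gap is in how you then shrink~$W$. You leave the disconnected case open: checking whether $C$ or $W\setminus C$ is $d$-improving need not succeed (e.g.\ two components of improvement~$2$ each with $d=3$), and ``the parity slack must be used more carefully'' is not an argument.

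The whole detour comes from overlooking that removing a single vertex $u\in W\cap S$ from a valid swap always keeps it valid. Indeed, $S\oplus(W\setminus\{u\})=(S\oplus W)\cup\{u\}$ contains a vertex cover. This lowers the size by one and the improvement by exactly one. So with $\imp(W)\ge d+1$, the set $W\setminus\{u\}$ is a \goodswap for~$I'$, and $W\cap S\neq\emptyset$ since $\imp(W)>0$. The ``obstacle'' you describe is therefore illusory: dropping private neighbors of~$u$ in $W\setminus S$ only \emph{raises} the improvement. Moreover, your connected-case argument never actually uses connectivity or the non-cut-vertex choice, so it would have covered the disconnected case verbatim. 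No minimum-size swap is needed.

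With this fix your proof is essentially the paper's, run in the opposite order. The paper takes an inclusion-minimal \goodswap and removes one vertex of $W\cap S$ to conclude $\imp(W)=d$ from minimality. It then uses the parity of $|W|+\imp(W)$ to rule out $|W|=k$.
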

\begin{proof}
$(\Rightarrow)$ 
Let~$W$ be a minimal \goodswap for~$I$.
Since~$d\geq 1$, $W$ contains a vertex~$v$ of~$S$.
Consider the swap~$W' :=   W\setminus \{v\}$.
The only difference between swapping~$W$ and~$W'$ is that~$W'$ keeps~$v$ in the vertex cover, so $W'$ is a valid~$k$-swap for~$S$ in~$G$.
Because~$W$ is minimal, $W'$ is not a \goodswap for~$I$.
Hence, $W'$ is not~$d$-improving.
This implies that~$\imp(W) = d$, since~$\imp(W) \geq d > \imp(W') = \imp(W) - 1$.
Next, we show that~$W$ has size at most~$k-1$ which then implies that~$W$ is a \goodswap for~$I'$.

Recall that the improvement of~$W$ is~$\imp(W) = |W\cap S| - |W\setminus S| = |W| - 2\cdot |W\setminus S|$.
Hence, $|W|$ is odd if and only if~$\imp(W)$ is odd, that is, $|W| + \imp(W)$ is even.
Recall that~$\imp(W)=d$ and that~$k+d$ is odd.
Consequently, $|W| + \imp(W) = |W| + d \leq k + d$ implies that~$W$ has size less than~$k$.
Hence, $W$ is a \goodswap for~$I'$.

$(\Leftarrow)$
Each~$(k-1)$-swap is a~$k$-swap.
Hence, each \goodswap for~$I'$ is a \goodswap for~$I$.
\end{proof}

Some of our algorithms branch over all possible intersections of a hypothetical \goodswap~$W$ with a given vertex set~$X$.
The following lemma shows that for~\GLVC, we only have to consider intersections of size at most~$\frac{k+d}{2}$ of~$X$ and~$W$.
Namely, we only have to consider the vertices~$S_X$ of~$W\cap X$ in~$S$ and the vertices~$C_X$ of~$W\cap X$ in~$V\setminus S$ that are not contained in the extension of~$S_X$. 
\begin{lemma}\label{lem: consider only small subsets}
Let~$I = (G,S,k,d)$ be an instance of~\GLVC and let~$W$ be a \goodswap for~$I$.
Further, for a set of vertices~$X\subseteq V(G)$, let~$S_X :=   W \cap X \cap S$ and~$C_X :=   W \cap X \setminus N[S_X]$. 
If~$|S_X \cup C_X| > \frac{k+d}{2}$, then~$W$ is not minimal.
\end{lemma}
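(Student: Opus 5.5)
The plan is to assume that $W$ is a minimal \goodswap and show $|S_X \cup C_X| \le \frac{k+d}{2}$; this is the contrapositive of the statement. I would split this into two inequalities: $|W\cap S| \le \frac{k+d}{2}$ and $|S_X|+|C_X| \le |W\cap S|$. Throughout I use the following consequence of minimality. Suppose $U \subsetneq W$ is a valid swap. Then $|U|\le k$, so $U$ cannot be $d$-improving. Hence $\imp(W\setminus U) = \imp(W)-\imp(U) \ge 1$.

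\emph{Step 1: $\imp(W)=d$ and $|W\cap S|\le\frac{k+d}{2}$.} This is exactly the argument in the proof of \Cref{lem:parity}. Since $d\ge 1$, $W$ contains some $v\in S$. Then $W\setminus\{v\}$ is still valid, because keeping a vertex in the cover never breaks validity, and its improvement is $\imp(W)-1$. Minimality forces $\imp(W)-1<d$, so $\imp(W)=d$. From $|W\cap S|-|W\setminus S|=d$ and $|W|\le k$ we get $|W\cap S| = \frac{|W|+d}{2}\le\frac{k+d}{2}$.

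\emph{Step 2: $|C_X|\le |(W\cap S)\setminus S_X|$.} Since $S_X\subseteq N[S_X]$, we have $C_X\subseteq W\setminus S$, and no vertex of $C_X$ is adjacent to $S_X$. Once Step 2 is shown, adding $|S_X|$ to both sides gives $|S_X|+|C_X|\le|W\cap S|$, which finishes the proof together with Step 1. I expect Step 2 to be the main obstacle, and I plan to prove it with Hall's theorem. I would construct a matching in the bipartite graph between $C_X$ and $(W\cap S)\setminus S_X$ formed by the edges of $G$.

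First, every $c\in W\setminus S$ has a neighbor in $W\cap S$. Otherwise $W\setminus\{c\}$ would be a valid proper subswap with larger improvement, contradicting minimality. For $c \in C_X$ this neighbor lies outside $S_X$.

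For Hall's condition, fix a nonempty $A\subseteq C_X$ and let $B := N(A)\cap W\cap S$, which is nonempty and disjoint from $S_X$. Let $R$ consist of $B$ together with all vertices of $W\setminus S$ whose neighbors in $W\cap S$ all lie in $B$. In particular $A\subseteq R$.

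I claim $U := W\setminus R$ is a valid swap. Each vertex of $U\cap S$ lies in $(W\cap S)\setminus B$. Each of its neighbors outside $S$ lies in $W$, and such a neighbor has a $W\cap S$-neighbor outside $B$, so it was not put into $R$. Moreover, $U\cap S$ and $U\setminus S$ are independent, since they are subsets of the independent sets $W\cap S$ and $W\setminus S$.

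Since $B\neq\emptyset$, $U$ is a proper subset of $W$, and the consequence of minimality above gives $\imp(R)\ge 1$. This means $|B| - |R\setminus S|\ge 1$, so $|B|\ge |R\setminus S|+1 > |A|$. Thus Hall's condition holds, and a matching saturating $C_X$ into $(W\cap S)\setminus S_X$ exists. This proves the inequality of Step 2.

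Combining Steps 1 and 2 gives $|S_X\cup C_X|\le |W\cap S|\le\frac{k+d}{2}$ for every minimal $W$. Hence if $|S_X\cup C_X|>\frac{k+d}{2}$, then $W$ is not minimal.
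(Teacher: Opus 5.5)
Your proof is correct, but it takes a different route from the paper's. The paper argues directly. It notes that $C_X$ and $N(S_X)\setminus S$ are disjoint subsets of $W\setminus S$, and that every $d$-improving $k$-swap has at most $\frac{k-d}{2}$ vertices outside $S$. The hypothesis $|S_X|+|C_X|>\frac{k+d}{2}$ then gives $|S_X|>d+|N(S_X)\setminus S|$. So the sub-swap $S_X\cup(N(S_X)\setminus S)\subseteq W$ is itself a \goodswap, with improvement larger than $d$. Either it is a proper subset of $W$, or it equals $W$ and one vertex of $S_X$ can be dropped.

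You instead prove the contrapositive through two structural facts about minimal good swaps. The first is $\imp(W)=d$, hence $|W\cap S|\le\frac{k+d}{2}$. The second is the expansion property $|N(A)\cap W\cap S|\ge|A|+1$ for every nonempty $A\subseteq W\setminus S$, which yields $|S_X|+|C_X|\le|W\cap S|$. Your checks that $U=W\setminus R$ is valid and that $\imp(R)\ge 1$ are sound.

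Hall's theorem is superfluous here. Applying your argument once with $A=C_X$ (when $C_X\neq\emptyset$) already gives $|C_X|<|N(C_X)\cap W\cap S|\le|(W\cap S)\setminus S_X|$, which is all Step~2 needs.

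Your route buys a stronger intermediate statement that does not mention $k$ or $d$: $|S_X\cup C_X|\le|W\cap S|$ for every $X$ whenever $W$ is minimal. The paper's route buys brevity: a single counting step, with no analysis of the structure of minimal swaps.
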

\begin{proof}
First, we show that~$W^* :=   S_X \cup (N(S_X) \setminus S)$ is a \goodswap for~$I$.
Note that~$W^*$ contains no vertex of~$C_X$ and~$W^*$ is a (not necessarily proper) subset of~$W$, since~$W$ is a valid swap for~$S$ in~$G$ and contains all vertices of~$S_X$.
By definition, $W^*$ is a valid~$k$-swap for~$S$ in~$G$.
It remains to show that~$W^*$ is~$d$-improving.
To this end, note that each~$d$-improving~$k$-swap contains at most~$\frac{k-d}{2}$ vertices of~$V \setminus S$.
In particular, $|W\setminus S| = |C_X| + |N(S_X) \setminus S| \leq \frac{k-d}{2}$.
Since~$|S_X \cup C_X| > \frac{k+d}{2}$, this implies~$|S_X| > d + \frac{k-d}{2} - |C_X| \geq d + |(N(S_X) \setminus S)|$.
Hence, $W^*$ is~$d$-improving and thus a \goodswap for~$I$.

If~$W^*$ is a proper subset of~$W$, then the statement already holds.
Hence, assume in the following that~$W=W^*$.
This then implies that~$C_X = \emptyset$.
As a consequence, $S_X$ has size more than~$\frac{k+d}{2}$.
Note that this implies that the improvement of~$W$ is at least~$d+1$, since~$|W\setminus S_X| = |W| - |S_X| < k - \frac{k+d}{2} = \frac{k-d}{2}$.
Let~$v$ be an arbitrary vertex of~$S_X$.
Consider the swap~$W' :=   W\setminus \{v\}$.
Note that $W'$ is a valid~$k$-swap for~$S$ in~$G$ with~$\imp(W') = \imp(W) - 1 \geq d$.
Hence, $W'$ is a \goodswap for~$I$.
Moreover, since~$W'$ is a proper subset of~$W$, the statement holds.
\end{proof}

\subparagraph{Linear-time Solvable Special Cases.}
To obtain \FPT running times that are linear in the input size when the considered parameters are fixed, we handle instances with small values of~$k$ separately.

\begin{lemma}\label{lem: polytime param small}
\GLWVC can be solved in~$\Oh(n + m)$ time if~$k\leq 2$ and 
\GLVC can be solved in~$\Oh(n + m)$ time if~$k+d\leq 4$.
\end{lemma}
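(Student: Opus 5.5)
The plan is to enumerate the possible shapes of a minimal \goodswap; for these small parameter values every such swap has at most two vertices. Let~$W$ be a minimal \goodswap. Since~$d\geq 1$ (otherwise~$\emptyset$ is trivially a \goodswap and we answer yes), $W$ contains a vertex of~$S$. For \GLWVC with~$k\leq 2$ this gives three shapes: (i)~$W=\{v\}$ with~$v\in S$, $N(v)\subseteq S$, and~$\omega(v)\geq d$; (ii)~$W=\{u,v\}\subseteq S$, which by validity is independent, with~$N(u)\cup N(v)\subseteq S$ and~$\omega(u)+\omega(v)\geq d$; (iii)~$W=\{v,w\}$ with~$v\in S$, $w\notin S$, $N(v)\setminus S=\{w\}$, and~$\omega(v)-\omega(w)\geq d$. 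In case (iii) the vertex~$w$ needs no further condition: it enters the cover, and~$v$, its only neighbor outside~$S$ that needs covering, is handled by~$w$ itself.

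To detect these shapes, first compute for every~$v\in S$ the number~$c(v):=|N(v)\setminus S|$ and, when~$c(v)=1$, the unique neighbor in~$V\setminus S$. This takes~$\Oh(n+m)$ time in one pass over the adjacency lists.
\begin{itemize}
\item Case (i) is a scan over all~$v\in S$ with~$c(v)=0$.
\item Case (iii) is a scan over all~$v\in S$ with~$c(v)=1$, comparing~$\omega(v)$ with the weight of its stored neighbor.
\end{itemize}

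Case (ii) is the main obstacle, because naively checking all pairs is quadratic. Let~$A:=\{v\in S\mid c(v)=0\}$. We need two non-adjacent vertices~$u,v\in A$ with~$\omega(u)+\omega(v)\geq d$, and it suffices to maximize~$\omega(u)+\omega(v)$ over non-adjacent pairs in~$A$. Sort~$A$ by decreasing weight as~$a_1,a_2,\dots$; bucket sort does not apply to arbitrary weights, so instead we proceed as follows. For each~$u\in A$, we want the heaviest~$v\in A\setminus N[u]$, and this~$v$ lies among the~$\deg_A(u)+2$ heaviest vertices of~$A$. So, given the vertices of~$A$ in a weight-sorted order, we scan that prefix for each~$u$, using a marking of~$N[u]$, for total cost~$\sum_u \Oh(\deg(u)+1)=\Oh(n+m)$. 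To avoid a full sort, note that the prefix length needed is at most~$\Delta_A+2$. Alternatively, observe that the best pair always contains one of the~$s$ heaviest vertices for a suitable~$s$, namely those~$a_i$ adjacent to all heavier ones, and these can be found greedily with linear selection. If the sorting cost cannot be removed this way, I would accept an~$\Oh(n\log n+m)$ bound, but I expect the marking argument with linear-time selection of the~$\Oh(\deg)$ heaviest elements to give~$\Oh(n+m)$.

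For \GLVC with~$k+d\leq 4$, \Cref{lem:parity} lets us assume that~$k+d$ is even, so~$(k,d)\in\{(1,1),(2,2),(3,1)\}$. The first two are covered by the \GLWVC case with unit weights. For~$(3,1)$, apply \Cref{lem: consider only small subsets} with~$X=V$: a minimal \goodswap has~$|S_X\cup C_X|\leq 2$. Since~$d=1$ and~$k=3$, such a swap has at most one vertex outside~$S$. Hence it is either a single vertex~$v\in S$ with~$c(v)=0$, or it has the form~$\{u,v,w\}$ with~$u,v\in S$ non-adjacent and~$N(u)\cup N(v)\setminus S\subseteq\{w\}$. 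The latter requires~$c(u),c(v)\leq 1$, with at least one equal to~$1$ (otherwise the single-vertex swap already works), and any second vertex with~$c=1$ must share the same neighbor~$w$. In particular, by minimality, \Cref{lem: consider only small subsets} forces~$v\in N(w)$ and~$u\in N(w)$. So for each~$w\notin S$ we look for two non-adjacent vertices in~$N(w)\cap\{c=1\}$. Because~$c=1$, each such vertex belongs to exactly one such set, so these sets partition the candidates, and finding a non-adjacent pair within a set~$B$ costs~$\Oh(|B|+\sum_{b\in B}\deg(b))$: the set~$B$ fails only if it is a clique, which we detect by comparing degrees within~$B$. Summing over all~$w$ gives~$\Oh(n+m)$ in total.
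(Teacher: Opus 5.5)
Your enumeration of the possible minimal \goodswap{}s is correct. Your treatment of \GLVC with $(k,d)=(3,1)$ is essentially the paper's argument: group the vertices of $S$ that have exactly one neighbor outside~$S$ by that neighbor, then test each group for being a clique by counting degrees inside the group. The genuine gap is case~(ii) of \GLWVC with $k=2$. You never establish the claimed $\Oh(n+m)$ bound there, and you explicitly fall back to $\Oh(n\log n+m)$. None of your concrete suggestions closes the gap:
\begin{itemize}
\item Selecting and then sorting the $\Delta_A+2$ heaviest vertices of~$A$ costs $\Theta(\Delta_A\log\Delta_A)$. This is $\Theta(n\log n)$ when, for instance, $G[A]$ is a star, where $m=\Theta(n)$.
\item Selecting the $\deg(u)+2$ heaviest vertices separately for each~$u$ costs $\Theta(|A|)$ per vertex, so $\Theta(n^2)$ overall.
\item Finding the clique prefix $a_1,\dots,a_s$ ``greedily with linear selection'' is not analyzed. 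Repeating a linear-time selection for each prefix vertex costs $\Theta(s\cdot n)$.
\end{itemize}

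The idea you are missing is that this is a decision problem: you do not need the maximum weight of a non-adjacent pair in~$A$, only whether some pair reaches~$d$. The paper uses $A_{\ge}:=\{v\in A\mid \omega(v)\ge d/2\}$, since every qualifying pair contains a vertex of~$A_\ge$. If $A_\ge$ is not a clique, answer yes. This can be detected in $\Oh(n+m)$ time by checking whether some $v\in A_\ge$ has fewer than $|A_\ge|-1$ neighbors in~$A_\ge$. Otherwise $\binom{|A_\ge|}{2}\le m$, so sorting $A_\ge$ by weight costs $\Oh(|A_\ge|\log|A_\ge|)\subseteq\Oh(n+m)$. For each $u\in A\setminus A_\ge$, the heaviest non-neighbor of~$u$ in $A_\ge$ is among the first $|N(u)|+1$ entries of that order, which gives $\Oh(n+m)$ in total.

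Your own clique-prefix observation can also be completed, but only with an argument you did not give. Extract vertices from a max-heap while the extracted prefix remains a clique, checking this by marking at cost $\Oh(\deg)$ per extracted vertex. Because the prefix is a clique, $s(s-1)/2\le m$. Together with $s\log n\le s^2+\log^2 n$, the extraction cost $\Oh(n+(s+1)\log n)$ is then $\Oh(n+m)$.
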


To show~\Cref{lem: polytime param small}, we show the two statements separately.
First, we show the statement for the weighted problem.

\begin{lemma}\label{lem: polytime k small}
\GLWVC can be solved in~$\Oh(n + m)$ time if~$k\leq 2$.
Moreover, for~$k\leq 2$, a valid~$k$-swap of maximum improvement can be found in $\Oh(n+m)$~time.
\end{lemma}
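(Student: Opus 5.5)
We enumerate all valid swaps of size at most two and compute the improvement of each; the second claim then yields the first directly. Given a valid~$k$-swap~$W$ of maximum improvement, the instance is a yes-instance if and only if~$\imp(W)\geq d$. The empty swap is valid with improvement~$0$, so a maximum always exists. The only work is to organize the enumeration so that it takes~$\Oh(n+m)$ time rather than~$\Oh(n^2)$.

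We classify valid swaps~$W$ with~$|W|\leq 2$ by~$|W\cap S|$. A vertex~$v\in S$ can only leave the cover together with its neighbors in~$V\setminus S$, so~$N(v)\setminus S\subseteq W$. Moreover,~$W\cap S$ must be an independent set. Swaps containing no vertex of~$S$ have improvement at most~$0$, so the empty swap covers them. This leaves three types, each handled in linear time.

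\begin{itemize}
\item \emph{Single vertex.} $W=\{v\}$ with~$v\in S$ and~$N(v)\subseteq S$. We compute~$|N(v)\setminus S|$ for all~$v$ in~$\Oh(n+m)$ time and record the best~$\omega(v)$ among the valid candidates.
\item \emph{One cover and one non-cover vertex.} $W=\{v,u\}$ with~$v\in S$,~$u\notin S$, and~$N(v)\setminus S=\{u\}$. The set~$W$ is then forced by~$v$, and its improvement~$\omega(v)-\omega(u)$ is computed in constant time per~$v$.
\item \emph{Two cover vertices.} $W=\{v,w\}\subseteq S$ with~$N(v),N(w)\subseteq S$ and~$\{v,w\}\notin E$. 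The improvement is~$\omega(v)+\omega(w)$. Let~$A$ be the set of vertices~$v\in S$ with~$N(v)\subseteq S$. We need a maximum-weight non-adjacent pair in~$A$.
\end{itemize}

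The third type is the main obstacle, since naively there are quadratically many pairs. We sort~$A$ by weight in nonincreasing order using a bucket or radix sort, or we avoid sorting altogether as follows. For each~$v\in A$, its best partner is the heaviest vertex of~$A\setminus N[v]$. Scanning~$A$ in weight order, at most~$\deg(v)+1$ vertices must be skipped before reaching a vertex outside~$N[v]$. Hence only the top~$\deg(v)+2$ vertices of~$A$ need to be checked against a marked copy of~$N(v)$. This costs~$\Oh(\deg(v)+1)$ per~$v$ and~$\Oh(n+m)$ in total.

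To avoid the sorting step, we take a single pair of globally extreme vertices. Let~$a$ be a maximum-weight vertex of~$A$. In an optimal non-adjacent pair~$\{v,w\}$, either~$a$ is non-adjacent to one of~$v,w$, or~$a$ is adjacent to both. In the first case, suppose~$a$ is non-adjacent to~$w$ and~$a\neq w$. Then~$\{a,w\}$ is at least as good as~$\{v,w\}$, so it suffices to find the best partner of~$a$, which is a linear scan. In the second case, where~$v,w\in N(a)$, we need the best partners of the vertices in~$N(a)$. Bounding the total work for this case by~$\Oh(n+m)$ is the delicate part; the top-$(\deg(v)+2)$ scan described above handles it cleanly after a linear-time selection of the heaviest elements.

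Finally, we take the maximum improvement over all three types and the empty swap, and compare it with~$d$.
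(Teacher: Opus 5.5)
\paragraph{Gap: the maximum-weight non-adjacent pair.} Your reduction to three candidate types is fine. The valid swap $\{v,u\}$ with $u\notin S\cup N(v)$ is missing, but it is dominated by $\{v\}$, and types one and two are clearly linear. The gap is at the step you yourself flag as the crux: computing a maximum-weight non-adjacent pair in~$A$ in $\Oh(n+m)$ time.

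\begin{itemize}
\item Bucket or radix sort is not linear for unbounded natural-number weights. A comparison sort costs $\Theta(|A|\log|A|)$, which exceeds $\Oh(n+m)$ on sparse graphs.
\item Your sorting-free route ends with an assertion exactly where the work is needed. After the split at~$a$, case two is just the original problem restricted to $N(a)\cap A$, so nothing has been gained yet.
\item The top-$(\deg(v)+2)$ scan runs in $\Oh(\deg(v)+1)$ only if the vertices are available in weight order. A single linear-time selection of the $t$ heaviest vertices returns them unsorted. You must then either sort them, at cost $t\log t$, or let every candidate scan all $t$ of them.
\item The latter is quadratic in bad cases. Take $N(a)\cap A$ to consist of $p$ vertices of constant degree plus one vertex of degree~$p$. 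Then $t\approx p$, so the scans cost $\Theta(p^2)$ while $m=\Theta(p)$.
\end{itemize}

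\paragraph{How this can be repaired.} The paper avoids sorting all of~$A$ by using the threshold~$d$. Every pair of weight at least~$d$ contains a vertex of weight at least~$d/2$. If these heavy vertices contain a non-adjacent pair, you are done; this can be detected by counting each heavy vertex's neighbors inside the heavy set. Otherwise the heavy vertices form a clique, so the square of their number is $\Oh(m)$ and comparison-sorting them is affordable. Each light vertex~$w$ then only inspects the $|N(w)|+1$ heaviest heavy vertices.

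You aim directly at the maximum improvement, where no threshold is available. For that target, a cleaner repair is to iterate your own case split:
\begin{itemize}
\item Set $A_0:=A$ and let $a_i$ be a heaviest vertex of~$A_i$.
\item Record $a_i$ together with a heaviest vertex of $A_i\setminus N[a_i]$.
\item Continue with $A_{i+1}:=A_i\cap N(a_i)$.
\end{itemize}
Your exchange argument shows the following: if an optimal pair lies in~$A_i$, then either the pair recorded in round~$i$ is at least as heavy, or an optimal pair lies in~$A_{i+1}$. Round~$i$ costs $\Oh(|A_i|+\deg(a_i))$ using marks on $N(a_i)$ and on~$A_i$. Since $|A_{i+1}|\le\deg(a_i)$ and the $a_i$ are distinct, the total cost is $\Oh(n+m)$.
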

\begin{proof}
Let~$I=(G=(V,E),\omega, S, k,d)$ be an instance of~\GLWVC with~$k \leq 2$.
We first compute the set~$S^* :=   \{v\in S\mid N(v) \subseteq S\}$ of vertex cover vertices without independent set neighbors in $\Oh(n+m)$~time.
Moreover, if~$S^*\neq \emptyset$, then we also compute the vertex~$v^* \in S^*$ that maximizes~$\omega(v^*)$ in $\Oh(n)$~time.

If~$k=1$, then each \goodswap for~$I$ consists of a single vertex in~$S^*$.
Hence, $I$ is a no-instance of~\GLWVC if~$S^* = \emptyset$.
Otherwise, $\{v^*\}$ is a valid~$1$-swap of maximal improvement for~$I$.
Consequently, $I$ is a yes-instance of~\GLWVC if and only if~$\omega(v^*) \geq d$ and, thus, \GLWVC can be solved in $\Oh(n+m)$~time if~$k=1$.

If~$k=2$, then we additionally compute the set~$\mathcal{W}^* :=  \{\{v,w\} \mid v\in S, N(v) \setminus S = \{w\}, \omega(v) > \omega(w)\}$ in $\Oh(n+m)$~time.
Moreover, if~$\mathcal{W}^*\neq \emptyset$, then we also compute the swap~$W^* \in \mathcal{W}^*$ that maximizes~$\imp(W^*)$ in $\Oh(n)$~time.
Note that each minimal \goodswap either (i)~consists of a single vertex in~$S^*$, (ii)~consists of two non-adjacent vertices in~$S^*$, or (iii)~is a swap in~$\mathcal{W}^*$.
Hence, if~$\mathcal{W}^* \neq \emptyset$ and~$\imp(W^*) \geq d$, then~$I$ is a yes-instance of~\GLWVC.
Suppose that~$\mathcal{W}^* = \emptyset$ or~$\imp(W^*) < d$.
If~$S^* = \emptyset$, then~$I$ is a no-instance of~\GLWVC.
Thus, suppose that~$S^* \neq \emptyset$.
If~$\omega(v^*) \geq d$, then~$I$ is a yes-instance of~\GLWVC.
Otherwise, $I$ is a yes-instance of~\GLWVC if and only if there are two non-adjacent vertices of~$S^*$ of total weight at least~$d$.

In the following, we show that we can determine in $\Oh(n+m)$~time whether such a pair of vertices exists.
Moreover, if such a pair exists, then we can find one in the same running time.
Let~$S_{\geq} :=   \{v\in S^* \mid \omega(v) \geq d/2\}$ be the vertices of~$S^*$ of weight at least~$d/2$ and let~$S_{<} :=   \{v\in S^* \mid \omega(v) < d/2\}$ be the vertices of~$S^*$ of weight less than~$d/2$.
These sets can be computed in $\Oh(n)$~time.
Note that each \goodswap for~$I$ contains at least one vertex of~$S_{\geq}$.
If there are two non-adjacent vertices~$v$ and~$w$ in~$S_{\geq}$, then these vertices can be found in~$\Oh(n+m)$~time by checking for each vertex~$v$ of~$S_{\geq}$ if~$v$ has at most~$|S_{\geq}| - 2$ neighbors in~$S_{\geq}$.
In this case, since~$\omega(v) + \omega(w) \geq d$, $I$ is a yes-instance of~\GLWVC.
Otherwise, $S_{\geq}$ is a clique and~$\Oh(|S_{\geq}|^2)\subseteq \Oh(m)$.
Hence, we can sort the vertices of~$S_{\geq}$ according to their weight in $\Oh(m)$~time.
Since~$S_{\geq}$ is a clique, each \goodswap for~$I$ consists of one vertex~$v$ of~$S_{\geq}$ and one vertex~$w$ of~$S_{<}$ such that~$\{v,w\}\not\in E$.
To find such vertices, we check for each~$w\in S_{<}$, whether~$w$ is adjacent to each vertex of~$S_{\geq}$ and, if this is not the case, whether~$\omega(w) + \omega(v) \geq d$, where~$v$ is the vertex with the highest weight in~$S_{\geq} \setminus N(w)$.
If the latter is true for some~$v\in S_{<}$, $I$ is a yes-instance of~\GLWVC.
Otherwise, $I$ is a no-instance of~\GLWVC.
Since~$S_\geq$ is sorted, we only have to consider the first~$|N(w)| + 1$ vertices of highest weight in~$S_\geq$ to find the vertex~$v\in S_\geq \setminus N(w)$ of highest weight.
Hence, this last step can be performed in $\Oh(\sum_{w\in S_{<}} (|N(w)| + 1)) \subseteq \Oh(n+m)$~time and, thus, \GLWVC can be solved in $\Oh(n+m)$~time if~$k = 2$.
\end{proof}

Next, we show the statement for~\GLVC.

\begin{lemma}\label{lem: polytime k d small}
\GLVC can be solved in~$\Oh(n + m)$ time if~$k+d\leq 4$.
\end{lemma}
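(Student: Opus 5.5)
Since \GLVC requires $d\in[1,k]$ and $k+d\leq 4$, only a few parameter combinations remain: $(k,d)\in\{(1,1),(2,1),(2,2),(3,1)\}$. We will reduce every case either to \Cref{lem: polytime k small} or to one explicit linear-time check. For $k\leq 2$ we treat the instance as a \GLWVC-instance with unit weights and invoke \Cref{lem: polytime k small}, which runs in $\Oh(n+m)$ time. The only case left is $k=3$, $d=1$, where $k+d=4$ is even, so \Cref{lem:parity} does not shrink it further.

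For $k=3$ and $d=1$ we analyse the minimal \goodswap{}s. By \Cref{connected} (with unit weights) we may assume the swap $W$ is connected, and by minimality it is also improving. Let $s=|W\cap S|$ and $c=|W\setminus S|$. Improvement requires $s\geq c+1$, and $s+c\leq 3$, so $(s,c)\in\{(1,0),(2,1)\}$; the case $(2,0)$ is not connected unless it reduces to $(1,0)$, and $(3,0)$ reduces as well. In case $(1,0)$ there is a vertex $v\in S$ with $N(v)\subseteq S$, that is, $S^*\neq\emptyset$ as in the proof of \Cref{lem: polytime k small}. In case $(2,1)$ there are two non-adjacent vertices $u,v\in S$ and one vertex $w\notin S$ with $N(u)\setminus S\subseteq\{w\}$ and $N(v)\setminus S\subseteq\{w\}$, and by connectivity both $u$ and $v$ are adjacent to $w$. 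Consequently, if $S^*=\emptyset$ we need to decide whether some $w\in V\setminus S$ has two non-adjacent neighbors $u,v\in S$ whose only independent-set neighbor is $w$.

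The algorithm first computes $S^*$ in $\Oh(n+m)$ time and answers yes if it is nonempty. Otherwise it computes, for every $v\in S$, the number of neighbors of $v$ outside $S$. For each $v$ with exactly one such neighbor $w$, it adds $v$ to a list $L_w$. This takes $\Oh(n+m)$ time, and the lists are disjoint, so $\sum_w|L_w|\leq n$. For each $w$ we then have to decide whether $L_w$ contains two non-adjacent vertices. We do this with the same trick as in \Cref{lem: polytime k small}: for each $u\in L_w$ we count its neighbors inside $L_w$ using a marker array, which costs $\Oh(\deg(u))$ per vertex. If some $u$ has fewer than $|L_w|-1$ neighbors in $L_w$, a non-adjacent pair exists; otherwise $L_w$ is a clique. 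Summed over all $w$, the work is $\Oh(n+m)$, since each $u$ lies in only one list. We answer yes exactly when such a pair is found.

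I expect the main obstacle to be justifying the case analysis rigorously, namely that nothing other than $(1,0)$ and $(2,1)$ with a shared neighbor needs to be considered. Minimality already gives this: every connected improving swap with at most three vertices has one of these two forms, because any vertex of $S$ in $W$ whose neighbors outside $S$ are all absent forms an improving swap by itself. Apart from this argument, the remaining steps only need care in the bookkeeping that keeps the running time linear.
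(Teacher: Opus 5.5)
Your proof is correct and follows essentially the paper's route: you handle $k\le 2$ via \Cref{lem: polytime k small}, reduce the remaining case $k=3$, $d=1$ to either a vertex of $S$ without independent-set neighbours or two non-adjacent vertices of $S$ whose unique independent-set neighbour is the same vertex $w$, and then test in linear time whether each such group is a clique. The only difference is cosmetic: you detect a non-adjacent pair by counting, for each vertex of $L_w$, its neighbours inside $L_w$, whereas the paper deletes the edges between different groups and checks whether every connected component of the resulting graph is a clique.
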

\begin{proof}
Let~$I=(G=(V,E), S, k,d)$ be an instance of~\GLVC with~$k  +d\leq 4$.
If~$k \leq 2$, then $I$ can be solved in $\Oh(n+m)$~time due to~\Cref{lem: polytime k small}.
Moreover, if~$d=0$, $I$ is a trivial yes-instance of~\GLVC.
Hence, in the following, we can assume that~$k>2$ and~$d>0$.
Since~$k+d\leq 4$, this then implies that~$k=3$ and~$d=1$.
Hence, $I$ is a yes-instance of~\GLVC if and only if (i) there is a vertex~$v$ of~$S$ with~$N(v) \subseteq S$ or (ii) there are two non-adjacent vertices~$w_1$ and~$w_2$ of~$S$ that have the same neighborhood in~$V\setminus S$ and this neighborhood consists of a single vertex.
In $\Oh(n+m)$~time, we can check if there is some vertex~$v$ of~$S$ with~$N(v) \subseteq S$.
If this is the case, answer yes.
Otherwise, we search for the vertices~$w_1$ and~$w_2$.
To this end, we remove all vertices of~$S$ from~$G$ that have at least two neighbors in~$V\setminus S$ and store for all remaining vertices of~$S$ the corresponding unique neighbor in~$V\setminus S$.
Since the vertices~$w_1$ and~$w_2$ we are looking for, have the same neighbor in~$V\setminus S$, we can also remove all edges between vertices of~$S$ that are adjacent to different vertices of~$V\setminus S$.
Hence, each connected component~$C$ in the resulting graph~$G'$ contains exactly one vertex~$v_C$ of~$V\setminus S$.
More precisely, $v_C$ is adjacent to all other vertices of~$C$.
If at least one of the connected components~$C$ is not a clique, then there are two non-adjacent vertices~$w_1$ and~$w_2$ of~$C$ such that~$N_{G}(w_1) \setminus S = N_{G'}(w_1) \setminus S = \{v_C\} =  N_{G'}(w_2) \setminus S = N_{G}(w_2) \setminus S$.
Hence, $\{w_1,v_C,w_2\}$ is a \goodswap for~$I$ and, thus, $I$ is a yes-instance of~\GLVC.
Otherwise, if~$C$ is a clique in~$G'$ for each connected component in~$G'$, then~$I$ is a no-instance of~\GLVC.
Note that all described steps can be performed by iterating over the edges a constant number of times.
Hence, this algorithm runs in $\Oh(n+m)$~time.
\end{proof}

\subparagraph{Lower Bounds.}
Let~$\omega$ be the matrix multiplication constant, that is, the smallest number~$\omega$ such that the product of~$n\times n$ matrices can be computed in time~$\Oh(n^{\omega})$. It is known that $2\le \omega < 2.373$~\cite{AW21}. 
Using a reduction to matrix multiplication, one can solve the \textsc{Clique} problem, which asks whether an~$n$-vertex graph has a clique of size~$k$, in $\Oh(n^{\omega\cdot k /3})$~time~\cite{NP85}. It is a long-standing question whether this running time can be improved to~$\Oh(n^{(\omega/3 - \varepsilon)k})$~\cite{ABW18,Woe04}.
Assuming that this is not the case, we obtain the following lower bounds for our considered problem. 
\begin{theorem}\label{thm:ClConjUnweight}
For every~$\varepsilon > 0$, every~$d\in[1,k]$, and~$\ell = \max\{n-\frac{k-d}{2}, \Delta(G), \vc(G), |S|, \mw(G)\}$, \GLVC cannot be solved in $\Oh(\ell^{(\omega/3 - \varepsilon)\cdot \frac{k+d}{2}})$ time, unless \textsc{Clique} can be solved in $\Oh(n^{(\omega/3-\varepsilon)\cdot k})$ time.
This holds also for the permissive versions of the problem.
\end{theorem}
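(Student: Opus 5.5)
Our plan is a fine-grained reduction from \CL to \GLVC in which a $t$-clique in an $N$-vertex graph $H$ corresponds exactly to a \goodswap. The reduction is set up so that $\frac{k+d}{2}$ is essentially $t$ and $\ell$ is $\Oh(N)$. Fix $d\ge 1$. We use the classical construction of Fellows et al.~\cite{FFL+12}, which puts a vertex $x_v$ in the vertex cover for each $v\in V(H)$. In the complement setting it is more convenient to encode non-edges.

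\textbf{Construction.} Let $\bar H$ be the complement of $H$. For each $v\in V(H)$ we create a vertex cover vertex $s_v\in S$. We add an edge $s_us_v$ whenever $uv\notin E(H)$, so that swapping out pairwise non-adjacent $s$-vertices means choosing a clique of $H$. We then add a small set $C$ of $c:=\frac{k-d}{2}$ independent-set vertices, each adjacent to all $s_v$. We also make $C$ a module, so that the modular-width stays $\Oh(N)$. The quotient consists of the $N$ vertices $s_v$ plus one module, and the $C$-module itself has quotient an independent set. Set $k:=t+c$ with $t=c+d$, so that $k+d=2t$ and hence $\frac{k+d}{2}=t$, and $k-d=2c$. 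Leaving the parameter choices at this level keeps $k$ arbitrary and $d\in[1,k]$ with matching parity. If $k+d$ is odd, we invoke \Cref{lem:parity} to reduce to the even case.

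\textbf{Correctness.} If $K$ is a $t$-clique in $H$, then $W=\{s_v: v\in K\}\cup C$ is valid. Indeed, $\{s_v: v\in K\}$ is independent, and every independent-set neighbor of these vertices lies in $C$. Moreover $|W|=t+c=k$ and the improvement is $t-c=d$. Conversely, let $W$ be a \goodswap. If $W$ contains any $s_v$, then it must contain all of $C$. Since $d\ge1$ forces $W\cap S\neq\emptyset$, we get $|W\cap S|\ge d+c=t$. Also $|W\cap S|\le k-c=t$, so $W\cap S$ corresponds to exactly $t$ vertices. These vertices are pairwise non-adjacent in the new graph, that is, they form a clique in $H$. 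The permissive version is handled by the same argument: any vertex cover smaller by at least $d$ within the neighborhood yields a clique. To make a general better solution also yield a clique, we replace each $s_v$ and $C$ by suitable structure. The paper's statement covers permissive \LVC-type answers, so we would make $S\setminus\{s_v\}$ impossible to improve except via $C$. This is the delicate point, and we would argue that $S$ is a minimum vertex cover of $G-C$ plus forced structure.

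\textbf{Parameters.} We have $n=N+c$, so $n-\frac{k-d}{2}=N$, together with $\Delta\le N+c$, $\vc\le N$, $|S|=N$, and $\mw\le N+1$. Hence $\ell=\Oh(N)$ for fixed $k$. An $\Oh(\ell^{(\omega/3-\varepsilon)t})$ algorithm would then give an $\Oh(N^{(\omega/3-\varepsilon)t})$ algorithm for $t$-\CL, since the construction takes only $\Oh(N^2)$ time, which is dominated for $t\ge 3$. We expect the main obstacle to be the bookkeeping that makes every parameter term simultaneously $N+\Oh(1)$, notably $\Delta$ and $n-\frac{k-d}{2}$, together with the permissive version.
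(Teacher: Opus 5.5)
Your construction is the paper's: the vertices of the complement $\bar H$ form $S$, a set $C$ of $c=t-d$ independent-set vertices is joined to all of $S$, and $k=2t-d$, so that $\frac{k+d}{2}=t$. Your correctness argument for the non-permissive problem is right.

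The gap is the permissive version. The statement explicitly claims it, and you leave it open as ``the delicate point,'' proposing extra gadgets and an argument that $S$ is a minimum vertex cover of $G-C$. That claim is false, since $S\setminus\{s_v\}$ is already a vertex cover of $G-C$. More importantly, no modification is needed. In your converse, the budget $|W|\le k$ is used only to get $|W\cap S|\le t$, and you never need that bound. Drop it, and the argument applies to any vertex cover $S''$ with $|S''|\le |S|-d$, at any distance from $S$:
\begin{itemize}
\item some $s_v$ is missing from $S''$, since otherwise $|S''|\ge |S|$;
\item hence $C\subseteq S''$;
\item hence at least $c+d=t$ vertices $s_v$ lie outside $S''$;
\item these are pairwise non-adjacent, i.e.\ they form a clique of size at least $t$ in $H$.
\end{itemize}
So without a $t$-clique, no $d$-better vertex cover exists at all, and a permissive algorithm must report local optimality. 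Conversely, any solution it does output certifies a $t$-clique. This is why the unchanged construction covers the permissive versions.

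There are also some smaller issues. First, your bounds $\Delta\le N+c$ and $\mw\le N+1$ give only $\ell\le N+c$. That suffices for fixed $t$, but the paper gets $\ell\le n$ exactly. It exhaustively deletes from $\bar H$ all vertices of degree at least $N-t+1$, since they lie in no independent set of size $t$. Then every $s_v$ has degree at most $N-t+c=N-d$. It also takes the root quotient to be a single edge between the modules $S$ and $C$. Second, the $\Oh(N^2)$ construction time is dominated only when $(\omega/3-\varepsilon)\,t\ge 2$, not for every $t\ge 3$; consider $\omega=2$ and $t=3$. The paper assumes $t\ge \frac{2}{\omega/3-\varepsilon}$. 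Finally, invoking \Cref{lem:parity} is unnecessary, since your instances always have $k+d=2t$.
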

\begin{proof}
Let~$\varepsilon > 0$ be a constant. 
We assume in the following that~$\varepsilon < \omega/3$, since the statement follows directly for~$\varepsilon \geq \omega/3$.
Moreover, let~$\widehat{I}=(\widehat{G}=(V,\widehat{E}), k)$ be an instance of~\CL with~$k \geq \frac{2}{(\omega/3)-\varepsilon}$ and let~$n$ denote the size of~$V$, and let~$d$ be an arbitrary value between~$1$ and~$k$.
We show that we can compute in $\Oh(n^2)$~time an equivalent instance~$I' = (G'=(V', E'), S, k',d)$ of~\GLVC such that~$\ell :=   \max\{|V'|-\frac{k'-d}{2},\Delta(G'), \vc(G'), |S|, \mw(G')\}$ is at most~$n$.
First, let~$G$ be the complement graph of~$\widehat{G}$, that is, $G:=  (V,E)$ with~$E :=   \binom{V}{2} \setminus \widehat{E}$.
Note that a set~$X\subseteq V$ is a clique in~$\widehat{G}$ if and only if~$X$ is an independent set in~$G$ and that one can compute~$G$ in $\Oh(n^2)$~time.
We can assume that the maximum degree of~$G$ is at most~$|V| - k$, since vertices degree at least~$|V| - k + 1$ are contained in no independent set of size~$k$.
We obtain~$G'$ by adding a set~$V^*$ of~$k - d$ new vertices to~$G$ such that~$N_{G'}(v) = V$ for each vertex~$v$ of~$V^*$.
Finally, we set~$k' :=   2k - d$ and~$S :=   V$, which completes the construction of~$I'$.
Note that this takes~$\Oh(n^2)$~time, since~$k\leq n$.
Next, we show that~$\widehat{I}$ is a yes-instance of~\CL if and only if~$I'$ is a yes-instance of~\GLVC.

$(\Rightarrow)$
Let~$C \subseteq V$ be an independent set of size~$k$ in~$G$, then~$S' :=   (V \setminus C) \cup V^*$ is a vertex cover of~$G'$ such that~$|S \oplus S'| = k'$ and~$|S'| \leq  |S| + |V^*| - |C| = |S| - d$.
Consequently, $I'$ is a yes-instance of~\GLVC.

$(\Leftarrow)$
Let~$W$ be a \goodswap for~$I'$ and let~$S':=  S\oplus W$.
Since~$W$ is a \goodswap for~$I$, $W$ has size at most~$k'=2k-d$ and~$|S'| < |S| -d$.
Consequently, $C :=   S \setminus S' = W\cap S$ is non-empty.
We show that~$C$ is an independent set of size~$k$ in~$G$. 
Since~$S'$ is a vertex cover of~$G'$ and every vertex of~$V^*$ is adjacent to every vertex of~$V$, it follows that~$W$ contains all vertices of~$V^*$.
By the fact that (i)~$W$ has size at most~$2k-d$, (ii)~$W$ contains all vertices of~$V^*$, and (iii)~$V^*$ has size~$k-d$, $W\cap S$ has size at most~$k$.
Moreover, since~$|S'| \leq |S| - d$, $C$ has size at least~$k' - |V^*| = k$.
As a consequence, $C$ has size exactly~$k$. 
Moreover, since~$S'$ is a vertex cover of~$G'$ and~$S'$ contains no vertex of~$C$, $C$ is an independent set in~$G$.
Consequently, $C$ is an independent set of size~$k$ in~$G$ and, thus, $\widehat{I}$ is a yes-instance of~\CL.

Next, we show that~$\ell :=   \max\{|V'| - \frac{k'-d}{2},\Delta(G'), \vc(G'), |S|, \mw(G')\}$ is at most~$n$.
By construction, $|V'|= n + k - d = n + \frac{k'-d}{2}$.
Since the maximum degree of~$G$ is at most~$n-k$, the maximum degree of~$G'$ is at most~$n$.
Moreover, since~$S$ is a vertex cover of size~$n$ of~$G'$, $\vc(G') \leq n$.
Next, we show that the modular-width of~$G'$ is at most~$n$.
Let~$(\mathcal{T}_1, \beta_1)$ be a modular decomposition of~$G$ and let~$(\mathcal{T}_2, \beta_2)$ be a modular decomposition of~$G'[V'\setminus V]$. 
Since each vertex of~$V$ is adjacent to each vertex of~$V'\setminus V = V^*$, a modular decomposition~$(\mathcal{T}, \beta)$ of~$G'$ can be obtained by combining~$(\mathcal{T}_1, \beta_1)$ and~$(\mathcal{T}_2, \beta_2)$ in the following way: 
We add a new root~$x^*$ where~$\beta(x^*)$ is a graph consisting of a single edge and the vertices of~$\beta(x^*)$ are the roots of the two modular decompositions~$(\mathcal{T}_1, \beta_1)$ and~$(\mathcal{T}_2, \beta_2)$.
Note that~$\mw(G) \leq n$. 
Moreover, since~$V'\setminus V$ is an independent set, we have~$\mw(G'[V'\setminus V]) = 2$.
Hence, the modular-width of~$G'$ is at most~$n$.

Now, if we have an algorithm~$A$ solving~\GLVC in~$\Oh(\ell^{(\omega/3 - \varepsilon)\cdot \frac{k+d}{2}})$ time, then \CL can be solved in $\Oh(n^{(\omega/3 - \varepsilon)k})$~time as well: 
Since~$k \geq \frac{2}{(\omega/3)-\varepsilon}$, the running time~$\Oh(n^{(\omega/3 - \varepsilon) \cdot k})$ dominates the time used to construct the instance~$I'$ of~\GLVC. 
Now the running time bound for solving \CL using~$A$ follows directly from~$\ell \leq n$ and~$\frac{k'+d}{2} = k$.
\end{proof}

For the extreme cases of \GLVC where~$d=1$ (this is precisely~\LVC) and~$d=k$, we may observe the following concrete running time bounds.
  \begin{corollary}\label{clique conjecture weighted}
    
    For every~$\varepsilon > 0$ and~$\ell:=  \max\{n-k+1, \Delta(G), \vc(G), |S|, \mw(G)\}$, \LVC cannot be solved in $\Oh(\ell^{(\omega/3 - \varepsilon)\cdot\lceil\frac{k}{2}\rceil})$ time  and \GLWVC cannot be solved in $\Oh(n^{(\omega/3 - \varepsilon)\cdot k})$~time, unless \textsc{Clique} can be solved in $\Oh(n^{(\omega/3-\varepsilon)\cdot k})$ time. This holds also for the permissive versions of the problem.

\end{corollary}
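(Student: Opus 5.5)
This follows from \Cref{thm:ClConjUnweight} by choosing $d$ appropriately; the weighted claim additionally needs \GLVC to be a special case of \GLWVC. For \LVC, i.e.\ $d=1$, we cannot apply the theorem verbatim, because $\frac{k+1}{2}$ need not be an integer. Instead we reuse the reduction from its proof. Start from a \CL instance $(\widehat G,k_0)$ and build the instance $I'=(G',S,k'=2k_0-1,d=1)$. The reduction gives $\frac{k'+d}{2}=k_0$, and $k'$ is odd, so $\lceil\frac{k'}{2}\rceil=k_0$.

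By \Cref{lem:parity}, $I'$ is equivalent to $(G',S,k'-1,1)$, since $k'+1$ is even. It is more convenient, though, to stay with $k'$ odd: then $n-\frac{k'-1}{2}=n-k'+1+\frac{k'-1}{2}$. We must check the quantity in the corollary, namely $\max\{|V'|-k'+1,\Delta,\vc,|S|,\mw\}$. Here $|V'|=n+k_0-1$, so $|V'|-k'+1=n-k_0+1\le n$. The other bounds by $n$ are exactly as in the theorem's proof. Hence an $\Oh(\ell^{(\omega/3-\varepsilon)\lceil k'/2\rceil})$ algorithm for \LVC yields an $\Oh(n^{(\omega/3-\varepsilon)k_0})$ algorithm for \CL. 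The $\Oh(n^2)$ construction time is dominated once $k_0\ge 2/(\omega/3-\varepsilon)$, as in the theorem. Permissiveness carries over as before: any better solution found must be within the $k'$-swap neighborhood, so it certifies a clique, and the $(\Leftarrow)$ direction only used that $W$ is good.

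For \GLWVC, take $d=k$ in \Cref{thm:ClConjUnweight}, i.e.\ unit weights. The constructed instance then has $k'=2k_0-k_0=k_0$, $\frac{k'+d}{2}=k_0=k'$, and $|V'|=n$ because $V^*=\emptyset$. Unit-weight \GLVC is a special case of \GLWVC with the same $n$. So an $\Oh(n^{(\omega/3-\varepsilon)k})$ algorithm for \GLWVC gives a \CL algorithm in time $\Oh(n^{(\omega/3-\varepsilon)k_0})$. The permissive version is handled identically.

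The main obstacle is only bookkeeping: checking that the modified size term ($n-k+1$ instead of $n-\frac{k-d}{2}$) is still at most $n$, and handling the parity/ceiling. The explicit choice $k'=2k_0-1$ handles both.
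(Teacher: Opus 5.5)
Your main line is correct and is how the paper gets the corollary. You specialize the reduction behind \Cref{thm:ClConjUnweight} to $d=1$, so that $k'=2k_0-1$, $\lceil k'/2\rceil=k_0$ and $|V'|-k'+1=n_0-k_0+1\le n_0$. You specialize it to $d=k$ with unit weights, $V^*=\emptyset$, and \GLVC as a special case of \GLWVC. You could even apply the theorem verbatim. Since $\lceil k/2\rceil\le\frac{k+1}{2}$ and $n-k+1\le n-\frac{k-1}{2}$, an algorithm meeting the corollary's bound also meets the theorem's bound for $d=1$. Non-integrality of $\frac{k+1}{2}$ plays no role.

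Two of your side arguments are wrong, though.

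First, \Cref{lem:parity} needs $k+d$ odd, but here $k'+1=2k_0$ is even. In fact $(G',S,k'-1,1)$ is always a no-instance, since every improving swap contains all $k_0-1$ vertices of $V^*$ plus at least $k_0$ vertices of $V$. You never use this, so just drop it.

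Second, your justification of the permissive versions fails. A permissive algorithm may return \emph{any} better vertex cover, not one in the $k'$-swap neighborhood. For example, $(V\setminus C)\cup V^*$ for an independent set $C$ of $G$ with $|C|>k_0$ lies at distance $|C|+k_0-1>k'$. Moreover, the paper's $(\Leftarrow)$ argument uses $|W|\le k'$.

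The correct reason is structural:
\begin{itemize}
\item A vertex cover $S'$ of $G'$ with $|S'|\le |S|-d$ cannot contain all of $V$, so it contains $V^*$.
\item Hence $C:=V\setminus S'$ is independent in $G$, and $|V^*|+|V|-|C|=|S'|\le|V|-d$ gives $|C|\ge |V^*|+d=k_0$.
\end{itemize}
So any returned better solution, at whatever distance, yields a clique of size at least $k_0$ in $\widehat G$. The answer ``locally optimal'' certifies that no such clique exists. This covers both $d=1$ and $d=k$.
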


Based on the same reduction, we also derive the following by the fact that~\CL cannot be solved in $f(k) \cdot n^{o(k)}$~time for any computable function~$f$, unless the ETH fails~\cite{C+15}.

\begin{corollary}
\LVC cannot be solved in $f(k) \cdot n^{o(k)}$~time for any computable function~$f$, unless the ETH fails.
This holds even for the permissive version of~\LVC.
\end{corollary}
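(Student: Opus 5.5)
The plan is to reuse the reduction from the proof of \Cref{thm:ClConjUnweight} unchanged and only check which parameter value is produced. Take a \CL instance $(\widehat{G},k)$ with $k$ large enough that the $\Oh(n^2)$ construction time is dominated. As before, form the complement $G$, add the set $V^*$ of $k-d$ universal vertices adjacent to all of $V$, and set $S:=V$ and $k':=2k-d$. The proof already shows that $\widehat{G}$ has a $k$-clique if and only if the resulting \GLVC instance is a yes-instance. It also shows that the parameter $\ell$ is at most $n$, and that this equivalence transfers to the permissive version, because any smaller vertex cover within the neighborhood yields an independent set of size $k$.

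For the statement on \LVC, fix $d=1$. The reduction produces an instance with budget $k'=2k-1$, so $\frac{k'+d}{2}=k=\lceil k'/2\rceil$ since $k'$ is odd. The parameter in the corollary is $n'-k'+1$ with $n'=n+k-1$, which equals $n-k+1\le n$. The remaining entries $\Delta$, $\vc$, $|S|$ and $\mw$ were already bounded by $n$ in the theorem's proof. An $\Oh(\ell^{(\omega/3-\varepsilon)\lceil k'/2\rceil})$ algorithm would therefore decide \CL in $\Oh(n^{(\omega/3-\varepsilon)k})$ time.

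For \GLWVC, fix $d=k$ with unit weights, which is a special case of \GLWVC. Then $V^*=\emptyset$, the instance has $n$ vertices, $k'=k$, and $\frac{k'+d}{2}=k$. An $\Oh(n^{(\omega/3-\varepsilon)k'})$ algorithm for \GLWVC thus gives the same bound for \CL.

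The only delicate point is the parameter bookkeeping: matching the exponent $\frac{k'+d}{2}$ with $\lceil k'/2\rceil$ for \LVC, and checking that $n'-k'+1$ is still at most $n$. Both follow from the identities $n'=n+k-d$ and $k'=2k-d$. The ETH corollary is obtained the same way with $d=1$. Since $k'=2k-1=\Theta(k)$ and $n'\le 2n$, an $f(k')\cdot n'^{o(k')}$ algorithm would solve \CL in $f'(k)\cdot n^{o(k)}$ time, which contradicts the ETH.
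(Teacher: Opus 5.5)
Your final paragraph is essentially the paper's argument. You reuse the reduction from \Cref{thm:ClConjUnweight} with $d=1$, so that $k'=2k-1$ and $n'=n+k-1\le 2n$, and transfer the $f(k)\cdot n^{o(k)}$ ETH lower bound for \CL; the earlier paragraphs on the $\omega$-based bounds concern the preceding corollary and are not needed here.

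One small correction on the permissive version: the fact you need is that \emph{every} vertex cover of $G'$ smaller than $S=V$ must contain $V^*$ and therefore miss at least $k$ vertices of $V$, not only those within distance $k'$. The construction guarantees this.
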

This running time lower bound of course also holds for the more general problems~\GLVC, \LWVC, and~\GLWVC.

\section{Parameterization by Treewidth}
In this section, we present FPT-algorithms for~$k$ and the treewidth of~$G$. 
As can be expected, the algorithms makes use of tree decompositions, for which we recall the definition in the following.

\subsection{Tree Decompositions}
A \emph{tree decomposition} of a graph~$G=(V,E)$ is a pair~$(\mT,\beta)$ consisting of a rooted tree~$\mT=(\mv, \mathcal{A}, x^*)$ with root~$x^*\in \mv$ and a function~$\beta\colon  \mv \to 2^V$ such that
\begin{enumerate}
\item for each vertex~$v$ of~$V$, there is at least one node~$x\in \mv$ with~$v\in \beta(x)$,
\item for each edge~$\{u,v\}$ of~$E$, there is at least one node~$x\in \mv$ such that~$\beta(x)$ contains~$u$ and~$v$, and
\item for each vertex~$v\in V$, the subgraph~$\mT[\mv_v]$ is connected, where~$\mv_v :=   \{x\in \mv\mid v\in \beta(x)\}$.
\end{enumerate}
We call~$\beta(x)$ the~\emph{bag} of~$x$.
The~\emph{width of a tree decomposition} is the size of the largest bag minus one and the \emph{treewidth} of a graph~$G$, denoted by~$\tw(G)$, is the minimal width of any tree decomposition of~$G$.

We consider tree decompositions with specific properties.
A node~$x\in \mv$ is called
\begin{enumerate}
\item a \emph{leaf node} if~$x$ has no child nodes in~$\mT$,
\item a \emph{forget node} if~$x$ has exactly one child node~$y$ in~$\mT$ and~$\beta(y) = \beta(x) \cup \{v\}$ for some~$v\in V\setminus \beta(x)$,
\item an \emph{introduce node} if~$x$ has exactly one child node~$y$ in~$\mT$ and~$\beta(y) = \beta(x) \setminus \{v\}$ for some~$v\in V\setminus \beta(y)$, or
\item a \emph{join node} if~$x$ has exactly two child nodes~$y$ and~$z$ in~$\mT$ and~$\beta(x) = \beta(y) = \beta(z)$.
\end{enumerate} 
A tree decomposition~$(\mT=(\mv, \mathcal{A}, x^*),\beta)$ is called~\emph{nice} if the bag of the root and the bags of all leaf nodes are empty sets and if every node~$x\in \mv$ is either a leaf node, a forget node, an introduce node, or a join node.
For a node~$x\in \mv$, we denote with~$V_x$ the union of all bags~$\beta(y)$, where~$y$ is contained in the subtree of~$\mT$ rooted at~$x$.
Moreover, we denote~$G_x :=   G[V_x]$ and~$E_x :=   E_G(V_x)$.

To obtain small polynomial factors in the running time, we first analyze the number of subsets of size at most~$k$ of any set~$X$ of size~$x$.
We denote by~$\leqBin{x}{k}$ the number of different subsets of~$X$ that have size at most~$k$, that is, $\leqBin{x}{k} :=   \sum_{r=0}^{\min(k,x)} \binom{x}{r}$.
\begin{lemma}\label{subsets of size k} 
Let~$k\geq 1$ be an integer and let~$X$ be an arbitrary set of size~$x\geq 3$.
Then, $\leqBin{x}{k} \leq 256 \cdot (x-1)^k/k^2$.
\end{lemma}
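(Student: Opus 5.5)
The plan is to split into the cases $k \geq x$ and $k < x$. In the first case the sum is simply $2^x$. In the second I bound each binomial coefficient by a power of $(x-1)$ divided by a factorial, and then sum a series whose terms increase in $r$. The constant $256$ is generous, so crude estimates should suffice. The only delicate point is the first case, where the truncation at $\min(k,x)$ matters.

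\textbf{Case $k \geq x$.} Here $\leqBin{x}{k} = 2^x$. Consider $g(k) := (x-1)^k/k^2$. For $k \geq x \geq 3$ we have $g(k+1)/g(k) = (x-1)\bigl(\tfrac{k}{k+1}\bigr)^2 \geq 2 \cdot \tfrac{9}{16} > 1$, so $g$ is increasing on $k \geq x$. Hence it suffices to check $k = x$, that is, $2^x x^2 \leq 256\,(x-1)^x$.
\begin{itemize}
\item For $x = 3$ this reads $72 \leq 2048$.
\item For $x \geq 3$ the ratio $(x-1)^x/(2^x x^2)$ is nondecreasing: passing from $x$ to $x+1$ multiplies it by $\tfrac{x}{2}\bigl(\tfrac{x}{x-1}\bigr)^x\bigl(\tfrac{x}{x+1}\bigr)^2$. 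Using $\bigl(\tfrac{x}{x-1}\bigr)^x \geq 1$ and $\bigl(\tfrac{x}{x+1}\bigr)^2 \geq \tfrac{9}{16}$, this factor is at least $\tfrac{27}{32}$ at $x=3$. The crude bound therefore fails there, so the step $x=3 \to 4$ is checked directly: $\tfrac{8}{72} \leq \tfrac{81}{256}$. For $x \geq 4$ the factor is at least $2 \cdot \tfrac{16}{25} > 1$.
\end{itemize}
This settles the case.

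\textbf{Case $k < x$, so $k \leq x-1$.} First I bound the binomial coefficients. For $r \leq k \leq x-1$,
\[
\binom{x}{r} \leq \frac{x^r}{r!} = \frac{(x-1)^r}{r!}\Bigl(1+\frac{1}{x-1}\Bigr)^r \leq \frac{(x-1)^r}{r!}\Bigl(1+\frac{1}{x-1}\Bigr)^{x-1} \leq e\cdot\frac{(x-1)^r}{r!}.
\]
Next I sum the resulting series. Set $a_r := (x-1)^r/r!$. Then $a_r/a_{r-1} = (x-1)/r \geq 1$ for $r \leq k \leq x-1$, so the $a_r$ are nondecreasing up to $r = k$. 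This gives
\[
\sum_{r=0}^{k} a_r \leq (k+1)\,a_k .
\]
Combining the two bounds,
\[
\leqBin{x}{k} \leq e\,(k+1)\frac{(x-1)^k}{k!}.
\]
It remains to show that $e\,(k+1)\,k^2/k! \leq 256$ for all $k \geq 1$. The quantity $(k+1)k^2/k!$ equals $2, 6, 6, \tfrac{10}{3}, \dots$ for $k = 1, 2, 3, 4, \dots$. It is decreasing from $k=3$ on, because the successive ratio is $\tfrac{(k+2)(k+1)^2}{(k+1)(k+1)k^2}$, which is less than $1$ for $k \geq 2$. So it is at most $6$, and $6e < 17 \leq 256$.

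The main obstacle is the case $k \geq x$, particularly the monotonicity in $x$ at its smallest values, where the factor $256$ is genuinely needed. The rest is routine.
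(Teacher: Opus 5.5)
Your proof is correct, but it is organized differently from the paper's.

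\textbf{The paper's route.} The paper splits into three regimes:
\begin{enumerate}
\item $k\le 4$, where it uses $\leqBin{x}{k}\le x^k\le 16(x-1)^k$ and $k^2\le 16$;
\item $k\ge\max\{4,x/2\}$, where it uses $\leqBin{x}{k}\le 2^x\le 4^k$ with subcases on $x$;
\item $4<k<x/2$, where it claims $\leqBin{x}{k}\le 2\binom{x}{k}$ and then bounds $\binom{x}{k}\le 2(x-1)^k/k^2$ via $x\le 2(x-1)$ and $k!\ge k^2$.
\end{enumerate}

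\textbf{Your route.} You split at $k=x$ instead. For $k\ge x$ you use the exact value $2^x$. Monotonicity, first in $k$ and then in $x$, reduces this case to direct checks at $x\in\{3,4\}$. For $k<x$ you combine the termwise bound $\binom{x}{r}\le e\,(x-1)^r/r!$ with the monotonicity of $(x-1)^r/r!$ for $r\le x-1$.

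\textbf{What each approach buys.}
\begin{itemize}
\item Your route is cleaner and gives a far better constant, namely $\max\{9,6e\}<17$. So, contrary to your closing remark, $256$ is not genuinely needed anywhere; at $x=k=3$ the required constant is only $9$.
\item It also avoids a weak step in the paper's third case. The inequality $\sum_{r\le k}\binom{x}{r}\le 2\binom{x}{k}$ follows from a geometric-series argument only when $k\le (x+1)/3$. It is false for $k$ close to $x/2$: for $x=11$ and $k=5$ the left side is $1024$ and the right side is $924$.
\item Your estimate $\sum_{r\le k}a_r\le (k+1)a_k$ is exactly the repair that step needs. The lemma itself still holds in that regime, as your argument shows.
\end{itemize}

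\textbf{A minor slip.} The successive ratio of $(k+1)k^2/k!$ simplifies to $(k+2)/k^2$. This equals $1$ at $k=2$ and is below $1$ only for $k\ge 3$. It is harmless, since you only use the maximum value $6$, which is attained at $k=2$ and $k=3$.
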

\begin{proof}
Note that~$\leqBin{x}{k} = \sum_{r = 0}^k \binom{x}{r} \leq 2^x$ and~$\leqBin{x}{k} \leq x^k$.
First, if~$k \leq 4$, then~$x^k \leq (2\cdot (x-1))^k \leq 16 \cdot (x-1)^k$ and since~$k^2 \leq 16$, $\leqBin{x}{k} \leq 256 \cdot (x-1)^k/k^2$. 
Second, if~$k \geq \max\{4, x/2\}$, then~$4^k \geq 2^x$ and~$2^k \geq k^2$.
Hence, if~$x \geq 9$, then by the fact that~$k\geq x/2$ and~$x\geq 3$, we get $$\leqBin{x}{k} \leq 2^x \leq 4^k \leq \frac{8^k}{2^k} \leq \frac{8^k}{k^2} \leq \frac{(x-1)^k}{k^2}.$$
If~$4 \leq x \leq 8$, then~$\leqBin{x}{k} \leq 2^x \leq 256 \cdot (x-1)^k/k^2$ since~$(x-1)^k > k^2$ for all~$k \geq 2$, and for~$x=3$, $\leqBin{x}{k} \leq 2^x = 8 \leq 256 \cdot 2^k/k^2$ for all~$k\geq 2$.
Finally, if~$4 < k < x/2$, then~$2 \cdot \binom{x}{k} \geq \sum_{r = 0}^k \binom{x}{r} = \leqBin{x}{k}$. 
Hence,
\begin{align*}
\leqBin{x}{k} &\leq 2 \cdot \binom{x}{k} \leq 2 \cdot \frac{x!}{(x-k)! \cdot  k!} \leq 2 \cdot x \cdot\frac{(x-1)!}{(x-k)! \cdot  k^2} \\&\leq 2 \cdot 2(x-1) \cdot   \frac{(x-1)^{k-1}}{k^2}  = 4\frac{(x-1)^{k}}{k^2} < 256 \cdot \frac{(x-1)^k}{k^2}.
\end{align*}
This completes the proof.
\end{proof}

\subsection{A Dynamic Programming Algorithm}
The algorithms are obtained by dynamic programming on a given tree decomposition of width~$r$, where each entry of the dynamic programming table considers the intersection of the current bag of size at most~$r+1$ with an improving swap~$W$ of size at most~$k$.
We deviate from this simple idea in one detail:
the algorithm  considers only (i)~the intersection~$W^S_x$ of~$W\cap S$ with the vertices of the current bag and (ii)~the intersection of~$W$ with those vertices of~$V\setminus S$ in the current bag that are not contained in~$N(W^S_x)$.
This is more technical but will give a running time improvement for the unweighted case.
\begin{theorem}\label{thm width and k fpt}
Let~$G=(V,E)$ be an undirected graph, let~$\omega\colon  V \to \mathbb{N}$ be a weight function, let~$S\subseteq V$ be a vertex cover of~$G$, let~$k$ be a natural number, and let~$(\mT=(\mv, \mathcal{A}, x^*), \beta)$ be a nice tree decomposition of width~$r$ for~$G$ with~$\Oh(r\cdot n)$ nodes.
One can compute in $\Oh((r^{k+1} + k^2)\cdot n)$~time a valid~$k$-swap~$W$ for~$S$ in~$G$ such that~$\imp(W)$ is maximal under all valid~$k$-swaps for~$S$ in~$G$. 
\end{theorem}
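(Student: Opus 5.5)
We would prove \Cref{thm width and k fpt} by bottom-up dynamic programming over the nice tree decomposition. Each table entry records how a partial swap interacts with the current bag, indexed by a pair of small subsets of the bag. For a node~$x$ and disjoint sets~$A\subseteq \beta(x)\cap S$ and~$B\subseteq \beta(x)\setminus (S\cup N(A))$ with~$|A|+|B|\le k$, the entry~$T_x[A,B]$ is a list indexed by~$s\in[0,k]$. Entry~$T_x[A,B][s]$ stores the maximum of~$\imp(W)$ over all \emph{partial swaps}~$W\subseteq V_x$ with the following properties:
\begin{itemize}
\item $|W|=s$;
\item $W\cap S$ is independent;
\item every neighbour in~$V_x\setminus S$ of a vertex of~$W\cap S$ lies in~$W$;
\item $W\cap\beta(x)\cap S=A$;
\item $W\cap\beta(x)\setminus (S\cup N(A))=B$.
\end{itemize}
The part of~$W$ in~$\beta(x)\cap N(A)\setminus S$ is forced to be exactly~$N(A)\cap\beta(x)\setminus S$, so it need not be indexed. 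Since~$S$ is a vertex cover, every edge has an endpoint in~$S$. Validity of~$W$ is therefore local: an edge~$\{u,v\}$ with~$u\in W\cap S$ requires~$v\in W$ if~$v\notin S$, and~$v\notin W$ if~$v\in S$. Every edge lies in some bag, so these checks can be performed when the later of its endpoints is introduced. Validity holds for all edges of~$G_x$ because the third property above refers only to neighbours in~$V_x$, and neighbours outside~$V_x$ are introduced later.

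The transitions are the usual ones, and we would handle them in this order.
\begin{itemize}
\item \emph{Leaf node:} only the entry~$T[\emptyset,\emptyset][0]=0$ is defined.
\item \emph{Introduce node} (introducing~$v$): if~$v\in S$, then either~$v\notin W$, in which case the entry is copied; or~$v\in A$, which requires that no neighbour of~$v$ in the bag lies in~$A$. Moreover, all neighbours of~$v$ in~$\beta(x)\setminus S$ must then be in~$W$. Such a neighbour previously belonged to~$B$, or was forced by~$A$, so it moves out of~$B$ into the forced part. We must also ensure that no such neighbour was previously outside~$W$. If~$v\notin S$, then~$v$ is either forced (because it has a neighbour in~$A$), unused, or placed in~$B$. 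In every case the size~$s$ and the improvement are updated by~$\pm\omega(v)$.
\item \emph{Forget node} (forgetting~$v$): take the maximum over the entries of the child in which~$v$ is in or out of~$W$.
\item \emph{Join node:} combine~$T_y[A,B][s_1]$ and~$T_z[A,B][s_2]$ into~$s=s_1+s_2-|A|-|B|-|N(A)\cap\beta(x)\setminus S|$, subtracting the double-counted improvement of the bag part. This is a max-plus convolution over~$s$.
\end{itemize}
The answer is~$\max_s T_{x^*}[\emptyset,\emptyset][s]$. Storing back-pointers lets us reconstruct a swap~$W$ attaining this maximum.

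Correctness follows by a standard induction over the decomposition. The key point is that every valid~$k$-swap restricted to~$V_x$ is a partial swap in the above sense, and conversely the table combines only compatible partial swaps. For the running time, the number of pairs~$(A,B)$ per node is at most~$\leqBin{r+1}{k}$, since~$A\cup B$ is a subset of a bag of size at most~$r+1$. By \Cref{subsets of size k}, $\leqBin{r+1}{k}\le 256\, r^k/k^2$. Each table row has~$k+1$ entries, and a join costs~$\Oh(k^2)$ per pair because of the convolution. This gives~$\Oh(r^k)$ per node, up to the cost of computing the neighbourhood sets. With~$\Oh(r\cdot n)$ nodes this yields~$\Oh(r^{k+1}n)$ overall.

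The main obstacle is matching the stated bound exactly. Two sources of overhead must be absorbed. The first is the per-pair work in introduce and join nodes, namely checking adjacency inside the bag and computing~$N(A)\cap\beta(x)$. We would precompute adjacency within each bag and enumerate the subsets incrementally, so that each pair costs amortised~$\Oh(1)$ beyond the~$k^2$ factor. The second is small~$r$ (at most~$2$), where \Cref{subsets of size k} does not apply; there we would treat the~$\Oh(k^2)$ term separately, which accounts for the additive~$k^2 n$. We would first try to carry out the join convolution only over reachable sizes~$s\le k-|A|-|B|$, letting the~$1/k^2$ factor from \Cref{subsets of size k} absorb the quadratic convolution cost.
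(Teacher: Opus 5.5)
Your proposal is correct and is essentially the paper's proof. Your entries $T_x[A,B][s]$ are the paper's $D_x[S_x,C_x,k']$ with the forced set $N_x(S_x)\setminus S$ left implicit, your introduce, forget and join transitions match the paper's (the $C^*$-type update when a vertex of $S$ is introduced, and the join subtracting $|W_x|$ and $\imp(W_x)$ once), and the running time is bounded the same way, with \Cref{subsets of size k} absorbing the $k^2$ factor and tiny bags giving the additive $k^2 n$ term. The only difference is cosmetic: you index by exact swap size $s$ rather than an upper budget $k'$, so the join becomes an explicit max-plus convolution with the same $\Oh(k^2)$ cost per pair $(A,B)$.
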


\begin{proof}
Due to~\Cref{lem: polytime k small}, the statement holds for~$k\leq 1$.
In the following, we show the running time by describing a dynamic programming algorithm for~$k \geq 2$.

Let~$N_x(U) :=   N(U) \cap \beta(x)$ denote the neighbors of~$U$ in the bag of~$x\in \mv$.
Recall that for a node~$x\in \mv$, the vertex set~$V_x$ is the union of all bags~$\beta(y)$, where~$y$ is a node of the subtree of~$\mT$ rooted at~$x$.
Moreover, recall that for a node~$x\in \mv$, $G_x :=   G[V_x]$ and~$E_x :=   E_G(V_x)$, where~$V_x$ is the union of all bags~$\beta(y)$, where~$y$ is contained in the subtree of~$\mT$ rooted at~$x$.

For each node~$x\in \mv$ in the tree decomposition, the dynamic programming table~$D_x$ has entries of type~$D_x[S_x, C_x,k']$ with $S_x \subseteq S \cap \beta(x)$, $C_x \subseteq \beta(x) \setminus (N(S_x) \cup S)$ and~$k' \in [0, k]$ such that~$|W_x| \leq k'$ where~$W_x :=   S_x \cup C_x \cup (N_x(S_x) \setminus S)$. 
 
Each entry stores the maximal improvement~$\imp_S(W)$ of a valid~$k'$-swap~$W \subseteq V_x$ for~$S\cap V_x$ in~$G_x$ such that~$W\cap S \cap \beta(x) = S_x$ and~$W \cap \beta(x) \setminus (N(S_x) \cup S) = C_x$.
 In other words, $W$ intersects with the vertices of~$S$ of the current bag exactly in~$S_x$ and~$W$ intersects with the vertices of~$V\setminus S$ of the current bag (minus the vertices that are contained in the extension of~$S_x$) exactly in~$C_x$.
Since we restrict~$W$ to be a~$k$-swap, the size of~$S_x\cup C_x$ is upper-bounded by~$k$ for each reasonable choice of~$S_x$ and~$C_x$.
Moreover, since both these subsets are disjoint, this implies that there are at most~$|\beta(x)|^k$ choices for~$S_x\cup C_x$ one has to consider in the dynamic programming table.

Since for $|S_x\cup C_x| > k$, $D_x[S_x,C_x,k']$ is not an entry of the dynamic programming table, we define a function~$f_x$ to prevent the evaluation of non-existing table entries.
For each nodes~$x\in \mv$, each subset~$S_x \subseteq \beta(x) \cap S$, each subset~$C_x \subseteq \beta(x) \setminus S$, and each~$k'\in[0, k]$,
we set~$f_x(S_x,C_x, k') :=   D_x[S_x, C_x,k']$ if~$|S_x \cup C_x| \leq k$ and~$f_x(S_x,C_x, k') :=  -\infty$, otherwise.

Next, we describe how to compute the entries of the dynamic programming tables.
For each leaf node~$\ell$ of~$\mT$, we fill the table~$D_\ell$ by setting~$D_\ell[\emptyset, \emptyset, k']:=   0$ for each~$k'\in[0,k]$.
This is correct, since~$G_\ell$ is the empty graph.

For all non-leaf nodes~$x$ of~$\mT$, we set~$D_x[S_x,C_x, k'] :=   -\infty$ if
\begin{itemize}
\item $S_x$ is not an independent set in~$G$, 
\item $|S_x \cup C_x \cup (N_x(S_x) \setminus S)| > k'$, or
\item $N(S_x) \cap C_x \neq \emptyset$.
\end{itemize}
Note that this is correct since in all three cases, there is no swap fulfilling the constraints of the table definition.
 To compute the remaining entries~$D_x[S_x,C_x,k']$, we distinguish the three types of non-leaf nodes.
 For each type, we omit the formal proof and only give an informal argument for of the correctness.
 
 \subparagraph{Forget Nodes.} Let~$x$ be a forget node, let~$y$ be the unique child of~$x$ in~$\mT$, and let~$v$ be the unique vertex in~$\beta(y) \setminus \beta(x)$. 
 The entries for~$x$ can be computed as follows:
 $$D_x[S_x, C_x, k'] :=   \begin{cases}
\max(f_y(S_x, C_x, k'), f_y(S_x \cup \{v\}, C_x \setminus  N(v), k')) & v\in S \text{~and} \\
\max(f_y(S_x, C_x, k'),f_y(S_x, C_x \cup \{v\}, k')) & v\notin S. 
 \end{cases}
 $$
Informally,  we chose the larger improvement of the best swap containing~$v$ and the best swap not containing~$v$.
To consider the best swap containing~$v$, we add~$v$ to the corresponding set ($S_x$ or~$C_x$) for the entry of the table~$D_x$.
If~$v$ is a vertex of~$S$, we also have to remove the vertices of~$N(v) \setminus S$ from~$C_x$, since these vertices are implicitly stored in the corresponding entry of~$D_y$ and, by definition, $N(S_y) \cap C_y = \emptyset$.

 \subparagraph{Introduce Nodes.} Let~$x$ be an introduce node, let~$y$ be the unique child of~$x$ in~$\mT$, and let~$v$ be the unique vertex in~$\beta(x) \setminus \beta(y)$. 
 For~$W_x :=   S_x \cup C_x \cup (N_x(S_x)\setminus S)$ and~$C^* :=   (N_x(v) \setminus S) \setminus N(S_x \setminus \{v\})$, the entries for~$x$ can be computed as follows:
 $$D_x[S_x, C_x, k'] :=   \begin{cases}
  f_y(S_x \setminus \{v\}, C_x \cup C^*, k' - 1) + \omega(v) & v\in W_x \cap S, \\ 
f_y(S_x, C_x \setminus \{v\}, k' -1) -\omega(v)  & v\in W_x\setminus S, \\ 
f_y(S_x, C_x, k')  & \text{otherwise,}  \end{cases}.
 $$

Informally, if~$v$ is a vertex of~$W_x$, we have to consider the entry of~$D_y$ where~$v$ is removed from the corresponding set ($S_x$ or~$C_x$) and adding the improvement we obtain from having~$v$ in the considered swap (increasing by~$\omega(v)$ if~$v\in S$ and decreasing by~$\omega(v)$ if~$v\notin S$).
If~$v$ is a vertex of~$S$, the vertices of~$C^*$ are not stored implicitly in~$D_y$, so we have to consider the entry of~$D_y$ where we also explicitly swap~$C^*$.
Otherwise, if~$v$ is not a vertex of~$W_x$, we consider the entry of~$D_y$ with the same subsets~$S_x$ and~$C_x$ and the same budget~$k'$.

 \subparagraph{Join Nodes.} Let~$x$ be a join node, let~$y$ and~$z$ be the unique children of~$x$ in~$\mT$.
 Recall that~$\beta(x) = \beta(y) = \beta(z)$. 
 For~$W_x :=   S_x \cup C_x \cup (N_x(S_x)\setminus S)$, the entries for~$x$ can be computed as follows:
 $$D_x[S_x, C_x, k'] :=  
\max_{0 \leq k'' \leq k' - |W_x|} D_y[S_x, C_x, k'' + |W_x|] + D_z[S_x, C_x , k'- k''] - \imp(W_x).
 $$
 
Informally, we consider all posibilities to distribute the budget~$k'$ among the two children:  
One part is assigned to the subset of vertices of~$W$ contained in the subtree rooted at~$y$ and one part is assigned to the subset of vertices of~$W$ contained in the subtree rooted at~$z$.
Note that~$W_x$ is contained in both of these vertex sets. This has two consequences: the sum of the budget for the two children is~$k'+|W_x|$ instead of~$k'$ and we have to remove~$\imp(W_x)$ from the obtained sum as it is added in the entries of both children.
 
The maximal improvement of any valid~$k$-swap for~$S$ in~$G$ can then be found in~$D_{x^*}[\emptyset, \emptyset, k]$.
Moreover, a corresponding swap~$W^*$ can be found via traceback:
$W^* \cap S$ consists of those vertices that are added to the set~$S_x$ in introduce nodes~$x$, and~$W^*\setminus S$ is exactly~$N(W^*\setminus S) \setminus S$.

It remains to show the running time.
Recall that~$(\mT=(\mv, \mathcal{A}, x^*), \beta)$ is a nice tree decomposition of width~$r$ for~$G$ with~$\Oh(r\cdot n)$ nodes. 
The number of entries of the table~$D_x$ is upper bounded by~$k+1$ times the number of subsets of~$\beta(x)$ of size at most~$k$.
Since for each node~$x\in \mv$, the bag $\beta(x)$ has size at most~$r+1$, all dynamic programming tables together contain~$\Oh(\leqBin{r+1}{k} \cdot k \cdot r \cdot n)$ entries.
Recall that~$\leqBin{r+1}{k}$ denotes the number of different subsets of size at most~$k$ of a set of size~$r+1$.
To complete the proof, it is sufficient to show that we can compute each of them in~$\Oh(k)$ time.
To this end, we perform the following preprocessing in which we assume that all considered subsets of~$V$ are stored as sorted lists.

First, we compute a degeneracy ordering~$\sigma$ of~$G$ in~$\Oh(n+m)\subseteq \Oh(n\cdot r)$~time.
Note that the degeneracy of~$G$ is never larger than the treewidth of~$G$.
Hence, with the help of~$\sigma$, we can check in~$\Oh(r)$ time whether two given vertices are adjacent.
Next, we compute the adjacency matrix~$A(x)$ of~$G[\beta(x)]$ for each node~$x \in \mv$.
Since~$\beta({x^*}) = \emptyset$ for the root vertex~$x^*$, we start with an empty adjacency matrix.
We now show that for each node~$x\in \mv$ where~$A(x)$ is already computed, we can compute~$A(y)$ for each child node~$y$ of~$x$ in~$\Oh(r^2)$ time.
Let~$x$ be a forget node with the unique child~$y$ and let~$v$ be the unique vertex of~$\beta(y) \setminus \beta(x)$.
Then, we can copy~$A(x)$ in~$\Oh(r^2)$ time and add a new row and a new column for the adjacency of~$v$.
To fill the new column and row, we only have to evaluate for each vertex~$w\in \beta(x)$, if~$v$ and~$w$ are adjacent.
These are at most~$r$ evaluations running in~$\Oh(r)$ time each.
Let~$x$ be an introduce node with the unique child~$y$ and let~$v$ be the unique vertex in~$\beta(x) \setminus \beta(y)$.
Then, we obtain~$A(y)$ by copying~$A(x)$ and removing the row and the column of~$v$ in~$\Oh(r^2)$ time. 
Let~$x$ be a join node with the unique children~$y$ and~$z$, then~$A(x) = A(y) = A(z)$  and we can obtain these copies of~$A(x)$ in $\Oh(r^2)$~time. 
Recall that~$\beta(x^*) = \emptyset$ for the root~$x^*$.
Hence, $A(x^*)$ is the empty matrix which can be computed in~$\Oh(1)$ time.
Since~$\mT$ contains~$\Oh(r\cdot n)$ nodes, we can compute~$A(x)$ for all nodes~$x \in \mv$ in~$\Oh(r^3\cdot n)$ time.
Since~$k> 2$, this can be upper-bounded by~$\Oh(r^k\cdot n)$~time.
Hence, in the following, we can check for each node~$x\in\mv$ in $\Oh(1)$~time whether two given vertices of~$\beta(x)$ are adjacent.
Further, for each node~$x\in \mv$, we perform the following steps:
\begin{itemize}
\item We store all independent sets~$S_x \subseteq \beta(x) \cap S$ of size at most~$k$.
 Using the adjacency matrix for~$\beta(x)$, we can check in~$\Oh(k^2)$ time whether a given set~$S_x$ of size at most~$k$ is an independent set. Hence, this part of the preprocessing runs in~$\Oh(\leqBin{r+1}{k} \cdot k^2)$ time.
\item For each independent set~$S_x \subseteq \beta(x) \cap S$ of size at most~$k$, we store the neighborhood~$N_x(S_x)\setminus S$ if it has size at most~$k$.
Otherwise, we store~$\bot$.
This preprocessing runs in $\Oh(\leqBin{r+1}{k} \cdot k^2)$~time:
First, consider all subsets~$S_x$ of size at most~$k-1$, this can be done in~$\Oh(r \cdot k)$ time since~$|\beta(x)| \leq r+1$ and~$|S_x| \leq k-1$.
Second, consider all subsets~$S_x$ of size exactly~$k$.
Choose an arbitrary vertex~$u\in S_x$.
Note that~$N_x(S_x) \setminus S = (N_x(S_x \setminus \{u\}) \setminus S) \cup (N_x(u) \setminus S)$.
Since these two subsets are already stored and have size at most~$k$ (or are set to~$\bot$) and the union of two sorted lists can be computed in linear time, for each such subset~$S_x$, this can be done in~$\Oh(k)$ time.
Hence, the total running time for this part of the preprocessing is $\Oh(\leqBin{r+1}{k}  \cdot k)$~time.

\item For each independent set~$S_x \subseteq \beta(x) \cap S$ of size at most~$k$ and each~$C_x \subseteq \beta(x) \setminus S$ with~$|S_x \cup C_x| \leq k$, we store the information whether~$N(S_x) \cap C_x = \emptyset$. 
Afterwards, we store the set~$W_x :=   S_x \cup C_x \cup (N_x(S_x) \setminus S)$ if it has size at most~$k$.
Otherwise, we store~$\bot$.
For each combination of sets~$S_x$ and~$C_x$, this preprocessing can be done in~$\Oh(k)$ time: 
Since~$|S_x \cup C_x| \leq k$, the set~$N_x(S_x) \setminus S$ is already stored. 
If~$|N_x(S_x) \setminus S| >k$, we immediately set~$W_x$ to~$\bot$.
Otherwise, $W_x$ is the union of three sets of size~$\Oh(k)$ which are given as sorted lists.
Since there are at most~$\leqBin{r+1}{k}$ combinations of sets~$S_x$ and~$C_x$, this part of the prepocessing runs in~$\Oh(\leqBin{r+1}{k}  \cdot k)$ time.

\item Let~$x$ be an introduce node with the unique child node~$y$ and the unique vertex~$v\in \beta(x) \setminus \beta(y)$.
Then, for each independent set~$S_x \subseteq \beta(x) \cap S$ of size at most~$k$ with~$|N_x(S_x)\setminus S| \leq k$, we store the set~$C^* :=   (N_x(v)\setminus S) \setminus N(S_x \setminus \{v\})$.
Now~$C^* = (N_x(v) \setminus S) \setminus N(S_x \setminus \{v\}) = (N_x(v) \setminus S) \setminus (N_x(S_x) \setminus S)$ and the two subsets of the latter expression are already stored and have size at most~$k$. 
Thus, $C^*$ can be computed in~$\Oh(k)$ time for each independent set~$S_x$ of size at most~$k$, since the difference of two sorted lists can be computed in linear time and~$|C^*| \leq k$.
Hence, this part of the preprocessing runs in~$\Oh(\leqBin{r+1}{k} \cdot k)$ time. 
\end{itemize}

Since there are~$\Oh(r\cdot n)$ nodes, the whole preprocessing (including constructing the degeneracy ordering and building all the individual adjacency matrices) runs in $\Oh((\leqBin{r+1}{k}\cdot k^2 + \leqBin{r+1}{k-1} \cdot r^2 \cdot k) \cdot n) = \Oh((\leqBin{r+1}{k}\cdot r\cdot  k^2 ) \cdot n)$~time.

Note that with this preprocessing, each entry of the tables~$D_x$ can be computed in~$\Oh(k)$ time:
For each forget or introduce node~$x$, one considers $O(1)$~cases, where for each case, one has to (i)~compute set operations for $\Oh(1)$~sets of size at most~$k$ each, as well as (ii)~evaluating one function call the function~$f_x(S_x,C_x,k')$.
Here, the latter can be done in $\Oh(k)$~time, since~$f_x(S_x,C_x,k')$ only checks whether~$|S_x\cup C_x| \leq k$.
For each join node~$x$, the corresponding set~$W_x$ can be computed in $\Oh(k)$~time and one computes the maximum of $\Oh(k)$~cases in which we combine two other table entries in $\Oh(1)$~time.
Moreover, note that all dynamic programming tables have $\Oh((\leqBin{r+1}{k}\cdot r \cdot k \cdot n)$~entries in total.
Since~$k > 2$ and due to~\Cref{subsets of size k}, we thus obtain that the whole algorithm runs in $\Oh(r^{k+1} \cdot n)$~time if~$r \geq 2$, and in $\Oh(2^r \cdot k^2\cdot n) = \Oh(k^2\cdot n)$~time, otherwise.
\end{proof}

Note that the running time of this algorithm can also be bounded by~$\Oh(2^r \cdot k^2 \cdot n)$ which implies that \GLWVC can be solved in polynomial time on graphs with a constant treewidth.
This will be in most applications irrelevant since one can find an optimal weighted vertex cover in the same running time~\cite{C+15,N06}; An exception would be the scenario where we explicitly want to find the best solution that is close to a given vertex cover. 

For the unweighted case, we may now exploit \Cref{lem: consider only small subsets} which tells us that it is sufficient to consider sets $S_x$ and~$C_x$ in the dynamic programming table that have total size at most~$\frac{k+d}{2}$.
This decreases the running time factor~$r^{k+1}$ to~$r^{\frac{k+d}{2}+1}$ for~\GLVC.
In particular, for the case of~$d=1$, that is, for~\LVC, this gives a substantial improvement of the exponential part of the running time from~$r^{k+1}$ to~$r^{\frac{k+1}{2}+1}$.

\begin{theorem}\label{thm: tw k d}
Let~$I=(G=(V,E),S,k,d)$ be an instance of~\GLVC.
When given a nice tree decomposition of width~$r$ for~$G$ with~$\Oh(r\cdot n)$ nodes, one can solve~$I$ in $\Oh((r^{\frac{k+d}{2} + 1} + k^2)  \cdot n)$~time. 
\end{theorem}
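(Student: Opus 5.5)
We reuse the dynamic program from the proof of \Cref{thm width and k fpt} with unit weights, and restrict the table entries to pairs with $|S_x \cup C_x| \le \frac{k+d}{2}$. Before anything else we dispose of two easy cases. If $k+d\le 4$, we use \Cref{lem: polytime k d small}. If $k+d$ is odd, \Cref{lem:parity} lets us replace $k$ by $k-1$, so we may assume that $k+d$ is even and that $\frac{k+d}{2}$ is an integer.

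\emph{Step 1: which swaps the restricted table still represents.} We redefine the function $f_x$ so that it returns $-\infty$ whenever $|S_x\cup C_x|>\frac{k+d}{2}$, and we only store entries satisfying this bound. The recurrences for forget, introduce and join nodes stay exactly as before. This restricted DP no longer computes a maximum-improvement swap. What we claim is weaker: the root value $D_{x^*}[\emptyset,\emptyset,k]$ is at least $d$ if and only if $I$ is a yes-instance.

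The soundness direction is immediate. Every finite table value is still witnessed by a valid $k'$-swap in $G_x$ with the stated improvement, because the restriction only removes entries and never creates new ones.

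For completeness, suppose $I$ is a yes-instance and fix a minimal \goodswap $W$. By \Cref{lem: consider only small subsets}, applied with $X=\beta(x)$ for every node $x$, the pair $S_x=W\cap\beta(x)\cap S$ and $C_x=W\cap\beta(x)\setminus N[S_x]$ has total size at most $\frac{k+d}{2}$. Note that the $C_x$ used in the table agrees with the $C_X$ of that lemma, since $C_x$ excludes $S$ and $N(S_x)$. We then induct along the tree, exactly as in the correctness of the original DP, to show that each entry $D_x[S_x,C_x,|W\cap V_x|]$ is at least $\imp(W\cap V_x)$. The point to check is that every table lookup in the recurrences for the projections of $W$ is itself a projection of $W$ onto a child bag. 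Because $W$ is globally minimal, each such lookup satisfies the size bound, and the lookup chain never passes through a $-\infty$ entry that the original DP would have accepted.

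\emph{Step 2: running time.} The number of relevant pairs $(S_x,C_x)$ per bag is at most $\leqBin{r+1}{(k+d)/2}$. Each bag has $k+1$ budget values, and there are $\Oh(r\cdot n)$ nodes. The preprocessing from the previous proof also adapts: we store only independent sets of size at most $\frac{k+d}{2}$ and their neighbourhoods, each capped at size $k$. Every entry is then computed in $\Oh(k)$ time, and \Cref{subsets of size k} applied with exponent $\frac{k+d}{2}$ gives $\Oh(r^{\frac{k+d}{2}+1}\cdot n)$. The $k^2 n$ term covers small $r$, as in \Cref{thm width and k fpt}. The $\Oh(r^3 n)$ cost of building the adjacency matrices is absorbed because $\frac{k+d}{2}\ge 3$ once $k+d\ge 5$, which holds after the parity reduction.

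The main obstacle is Step 1, completeness. \Cref{lem: consider only small subsets} is stated for a single set $X$, but a minimal swap $W$ must satisfy the bound at all bags simultaneously, and it is minimality of the whole $W$ that guarantees this. We also have to verify that the join recurrence only combines child entries which are projections of $W$, and not of some other swap, so that the bound carries over to them. The bound does carry over, because a join node's children have the same bag, and in forget and introduce nodes the child projection differs from the parent's by at most the one vertex $v$, which the lemma again controls.
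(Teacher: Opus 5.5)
Your proposal is correct and follows the paper's proof. You restrict the dynamic program of \Cref{thm width and k fpt} to pairs with $|S_x\cup C_x|\le\frac{k+d}{2}$ and test whether the root entry is at least~$d$. Completeness comes from \Cref{lem: consider only small subsets} applied to a minimal \goodswap at every bag, and the running time from \Cref{subsets of size k}. Your parity reduction via \Cref{lem:parity} is a harmless extra that makes $\frac{k+d}{2}$ integral, but apply it before the $k+d\le 4$ check, since $k+d=5$ drops to~$4$.

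One small correction: your final claim that a child's projection differs from the parent's by at most the vertex~$v$ is false at introduce nodes with $v\in S_x$, where the child's set gains all of~$C^*$. You do not need it, though, because minimality of~$W$ already gives the size bound for the projection onto every bag simultaneously.
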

\begin{proof}
Due to~\Cref{lem: polytime k d small}, one can solve~$I$ in $\Oh(n+m)\subseteq \Oh(n\cdot r)$~time if~$k+d \leq 4$.
In the following, we may thus assume~$\frac{k+d}{2} > 2$.
To obtain the stated running time, we modify the dynamic program described in the proof of~\Cref{thm width and k fpt}.
We limit the entries of the dynamic programming table such that~$|S_x \cup C_x| \leq \frac{k+d}{2}$.
Hence, we also update~$f_x(S_x,C_x,k')$ to~$-\infty$ if~$|S_x \cup C_x| > \frac{k+d}{2}$.
To determine if there is a valid~$d$-improving~$k$-swap for~$S$ in~$G$ it is thus sufficient to check whether~$D_{x^*}[\emptyset, \emptyset, k]$ is at least~$d$, where~$x^*$ denotes the root of~$\mT$.
Moreover, the corresponding swap can be found via traceback.

The correctness of this modified dynamic program relies on~\Cref{lem: consider only small subsets} which, intuitively speaking, states that, if there is a \goodswap for~$I$, then such a swap can be found by only considering entries of the dynamic programming table where~$|S_x \cup C_x| \leq \frac{k+d}{2}$.

Since~$\frac{k+d}{2} \geq 2$, the adjacency matrix of~$G[\beta(x)]$ can be computed in~$\Oh(r^{\frac{k+d}{2}+1}\cdot n)$ time for all~$x \in \mv$.
Moreover, since we now only check for subsets of size at most~$\frac{k+d}{2}$, the preprocessing, filling all entries of the table, and checking~$D_{x^*}[\emptyset, \emptyset, k] \geq d$ can be done in~$\Oh(\leqBin{r+1}{\frac{k+d}{2}}\cdot r \cdot k^2 \cdot n)$ time which, due to~\Cref{subsets of size k}, is $\Oh(r^{\frac{k+d}{2}+1} \cdot n)$~time for~$r \geq 2$ and $\Oh(k^2 \cdot n + m ) \subseteq \Oh(k^2 \cdot n \cdot r)$~time, otherwise.
\end{proof}

Since computing a tree decomposition of minimal width is~\NP-hard, we cannot directly obtain a running time of~$\Oh((\tw(G)^{k+1} + k^2)\cdot n)$ and $\Oh((\tw(G)^{\frac{k+d}{2}+1} + k^2)\cdot n)$, respectively.
We can, however, compute a nice tree decomposition of width~$\tw(G)$ in~$\Oh(n + m)$ time if~$\tw(G) \leq 1800$~\cite{B96}.
Moreover, for each~$r\geq 0$, one can compute a nice tree decomposition of~$G$ of width at most~$1800\cdot r^2$ or correctly output that~$\tw(G) > r$ in~$\Oh(r^{7} \cdot n \cdot \log (n))$ time~\cite{FLSPW18}.
Hence, we can compute in $\Oh(\tw(G)^{8} \cdot n \cdot \log (n))$~time~\cite{FLSPW18} a nice tree decomposition of~$G$ of width at most~$1800\cdot \tw(G)^2$ by applying  the approximation algorithm for each~$r\in [1,\tw(G)]$.
If~$\tw(G) \geq 1800$, then the width of the latter tree decomposition is smaller than~$\tw(G)^3$.
Altogether, we obtain the following by~\Cref{lem: polytime k d small} and~\Cref{lem: polytime k small}.
\begin{corollary}
\GLVC can be solved in $\Oh((\tw(G)^{\frac{3\cdot(k+d)}{2}+1} + k^2) \cdot n\cdot \log(n))$~time and~\GLWVC can be solved in $\Oh((\tw(G)^{3k+1} + k^2)\cdot  n \cdot \log(n))$~time.
\end{corollary}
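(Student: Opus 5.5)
The plan is to compute a good tree decomposition first and then run the dynamic programs of \Cref{thm width and k fpt} and \Cref{thm: tw k d}.

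\textbf{Small parameter values.} If $k\le 2$ for \GLWVC, or $k+d\le 4$ for \GLVC, we answer directly in $\Oh(n+m)$ time by \Cref{lem: polytime k small} and \Cref{lem: polytime k d small}. Since a graph of treewidth $t$ has at most $t\cdot n$ edges, $\Oh(n+m)$ lies within the claimed bounds.

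\textbf{Computing a decomposition.} Otherwise we first obtain a nice tree decomposition of small width, distinguishing two cases.
\begin{itemize}
\item If $\tw(G)\le 1800$, we use the linear-time algorithm~\cite{B96}. It yields a decomposition of optimal width $\tw(G)$ in $\Oh(n+m)$ time, and it can be made nice with $\Oh(\tw(G)\cdot n)$ nodes.
\item If $\tw(G)>1800$, we run the approximation algorithm~\cite{FLSPW18} for $r=1,2,\dots$ until it first returns a decomposition rather than reporting $\tw(G)>r$. This happens at the latest when $r=\tw(G)$, so the returned width is at most $1800\cdot\tw(G)^2<\tw(G)^3$.
\end{itemize}
The total cost of the second case is $\sum_{r\le \tw(G)}\Oh(r^7 n\log n)=\Oh(\tw(G)^8 n\log n)$. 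We convert the result into a nice decomposition with $\Oh(\text{width}\cdot n)$ nodes using the standard transformation, which takes time linear in the number of nodes times the width.

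\textbf{Running the dynamic programs.} Plugging width $r<\tw(G)^3$ into \Cref{thm width and k fpt} gives time $\Oh((\tw(G)^{3(k+1)}+k^2)\cdot n)$ for \GLWVC. This is slightly larger than the claimed exponent $3k+1$. The discrepancy needs to be absorbed, and this is the main obstacle.

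My first attempt is a finer bound. The proof of \Cref{thm width and k fpt} actually runs in $\Oh(\leqBin{r+1}{k}\cdot r\cdot k^2\cdot n)$ time. By \Cref{subsets of size k}, this is at most $\Oh(r^{k+1}\cdot n)$ with $r\le 1800\tw(G)^2$. That gives $\Oh(1800^{k+1}\tw(G)^{2k+2}\,n)$, and since $\tw(G)\ge 1800$ this is bounded by $\tw(G)^{3k+3}$, still one too many in the exponent.

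To remove the extra factor, I would note that the $r$ factor in the table count comes only from the number of nodes, $\Oh(r\cdot n)$. That factor is $\Oh(\tw(G)^2)$, not $\tw(G)^3$. So the product is $(1800\tw(G)^2)^{k}\cdot \tw(G)^2\cdot n$. For $\tw(G)\ge 1800$ we have $1800^k\tw(G)^{2k}\le \tw(G)^{3k}$, so the total is $\Oh(\tw(G)^{3k+2}n)$. If the bound must be exactly $3k+1$, I would bound $\leqBin{r+1}{k}$ by $r^k/k^2$ times constants and use $k\ge 3$ to absorb one further factor of $\tw(G)$. I expect the exponent bookkeeping to be slightly off here and would accept a constant adjustment in the exponent.

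The unweighted case \GLVC is identical, using \Cref{thm: tw k d} with $\frac{k+d}{2}$ in place of $k$. This gives $\tw(G)^{\frac{3(k+d)}{2}+\Oh(1)}$. Adding the $\Oh(\tw(G)^8 n\log n)$ decomposition time, which is dominated since $3k\ge 9>8$ (respectively $\frac{3(k+d)}{2}>6$ with the extra terms), multiplies everything by at most a $\log n$ factor. This yields the stated bounds.
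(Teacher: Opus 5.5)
Your route is the paper's: Bodlaender's algorithm if $\tw(G)\le 1800$, otherwise the $1800r^2$-approximation of~\cite{FLSPW18} for $r=1,\dots,\tw(G)$ in $\Oh(\tw(G)^8\cdot n\log n)$ total time, and then \Cref{thm width and k fpt} and \Cref{thm: tw k d}. The paper only remarks that the width is then below $\tw(G)^3$. As you noticed, that alone yields exponent $3k+3$, so the stated exponents rest entirely on finer bookkeeping.

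That bookkeeping is where your proposal has a gap. You end at $\Oh(\tw(G)^{3k+2}\cdot n)$ and $\tw(G)^{\frac{3(k+d)}{2}+\Oh(1)}$, and your suggested repair cannot work. The factor $1/k^2$ from \Cref{subsets of size k} is already used to cancel the $k^2$ factor of the dynamic program. Moreover, no assumption on $k$, such as $k\ge 3$, can absorb a factor of the unbounded quantity $\tw(G)$. So, as written, the exponents $3k+1$ and $\frac{3(k+d)}{2}+1$ are not established.

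The missing observation concerns how you used $\tw(G)\ge 1800$. You spent a full $\tw(G)^k$ to absorb $1800^k$. It suffices to absorb $1800^{k-1}$ and keep $1800^2$ as a constant. With $r\le 1800\,\tw(G)^2$ and dynamic-programming time $\Oh(r^{k+1}\cdot n)$ you get
\[
r^{k+1}\le 1800^{k+1}\,\tw(G)^{2k+2}=1800^{2}\cdot 1800^{k-1}\,\tw(G)^{2k+2}\le 1800^{2}\cdot \tw(G)^{3k+1}.
\]
Likewise, set $p:=\frac{k+d}{2}\ge 1$. Then $r^{p+1}\le 1800^{2}\cdot 1800^{p-1}\,\tw(G)^{2p+2}\le 1800^2\cdot\tw(G)^{3p+1}$, which is exactly $\tw(G)^{\frac{3(k+d)}{2}+1}$.

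With this one-line estimate, the rest of your argument goes through. This includes the small-parameter cases and the case $\tw(G)\le 1800$. It also includes the domination of the $\Oh(\tw(G)^8\cdot n\log n)$ decomposition time, since $3k+1\ge 10$ for $k\ge 3$ and $\frac{3(k+d)}{2}+1\ge 8.5$ for $k+d\ge 5$.
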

Note that even if a tree decomposition of width~$\tw(G)$ is given, one cannot improve much on the running time due to the lower bound of~\Cref{thm:ClConjUnweight}, since the treewidth of a graph is never larger than its vertex cover number. 
Due to this relation, we further obtain the following by initially computing a 2-approximation of a minimum vertex cover.

\begin{corollary}
For a given vertex cover~$S^*$ of~$G$, \GLVC can be solved in $\Oh(|S^*|^{\frac{(k+d)}{2}+1} \cdot n + m)$~time and~\GLWVC can be solved in $\Oh(|S^*|^{k+1}  \cdot  n + m)$~time.
In general, \GLVC can be solved in $\Oh((2\cdot\vc(G))^{\frac{(k+d)}{2}} \cdot n + m)$~time and~\GLWVC can be solved in $\Oh((2\cdot\vc(G))^k \cdot  n + m)$~time. 
\end{corollary}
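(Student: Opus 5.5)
The plan is to reduce both claims to the tree-decomposition algorithms of \Cref{thm width and k fpt} and \Cref{thm: tw k d}, using the fact that a vertex cover yields a tree decomposition of small width. Let $S^*$ be a given vertex cover of $G$ and put $I^* := V\setminus S^*$, an independent set. Fix an arbitrary ordering $u_1,\dots,u_p$ of $I^*$. The path decomposition with bags $\beta_i := S^*\cup\{u_i\}$ for $i\in[1,p]$ is a valid tree decomposition of width $|S^*|$:
\begin{itemize}
\item every vertex lies in some bag;
\item every edge has an endpoint in $S^*$, and its other endpoint is either in $S^*$ or equals some $u_i$, so the edge lies in the bag $\beta_i$;
\item each vertex of $S^*$ lies in all bags, and each $u_i$ lies in exactly one bag, so the occurrence sets are connected.
\end{itemize}

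I will make this decomposition nice directly, taking care to keep the node count at $\Oh(|S^*|\cdot n)$. Starting from an empty leaf, introduce the vertices of $S^*$ one by one. Then, for each $i$, introduce $u_i$ and afterwards forget it. At the top, forget the vertices of $S^*$ one by one up to an empty root. This path has $\Oh(|S^*|+n)\subseteq\Oh(|S^*|\cdot n)$ nodes. Writing down all bags naively would cost $\Oh(|S^*|\cdot n)$ time, which is within the budget. Applying \Cref{thm width and k fpt} and \Cref{thm: tw k d} with $r=|S^*|$ then gives the running times $\Oh((|S^*|^{k+1}+k^2)n)$ and $\Oh((|S^*|^{\frac{k+d}{2}+1}+k^2)n)$. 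The additive $+m$ term accounts for reading the graph and for the $\Oh(n+m)$ preprocessing inside those theorems, such as the degeneracy ordering. The $k^2$ term is absorbed once $|S^*|\ge 2$ and $k\ge 2$. When $k\le 2$ (weighted case) or $k+d\le 4$ (unweighted case), I fall back on \Cref{lem: polytime param small}, which runs in $\Oh(n+m)$ time. When $|S^*|\le 1$, the graph is a star plus isolated vertices and can be handled trivially.

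For the general statement, compute a 2-approximate vertex cover $S^*$ in $\Oh(n+m)$ time, for instance as the endpoints of a maximal matching, so that $|S^*|\le 2\vc(G)$. Substituting this into the first part gives $(2\vc(G))^{k+1}$ and $(2\vc(G))^{\frac{k+d}{2}+1}$.

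The main obstacle is that the stated general bounds carry exponent $k$ and $\frac{k+d}{2}$ rather than $k+1$ and $\frac{k+d}{2}+1$. The extra factor $r$ in \Cref{thm width and k fpt} comes from the $\Oh(r\cdot n)$ node count and the adjacency-matrix bookkeeping. Here the path decomposition has only $\Oh(|S^*|+n)$ nodes, and most bags differ from $S^*$ in a single vertex. My first attempt would therefore recount the table sizes: $\Oh(\leqBin{r+1}{k}\cdot k)$ entries per node over $\Oh(n)$ nodes. The adjacency information within each bag is then available in $\Oh(1)$ time per query, because the matrix of $G[S^*]$ is fixed and, for $u_i$, a marker array over $N(u_i)$ can be set and reset in $\Oh(\deg(u_i))$ time, which sums to $\Oh(m)$ over all $i$. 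Combined with \Cref{subsets of size k}, which gives $\leqBin{x}{k}\le 256(x-1)^k/k^2$, this should cancel the polynomial factors in $k$. The result is $\Oh(|S^*|^{k}\cdot n+m)$ in the weighted case and, correspondingly, $\Oh(|S^*|^{\frac{k+d}{2}}\cdot n+m)$ in the unweighted case, which is what the general bounds need. If this tighter recount fails, I would still obtain the first stated bounds, with exponent one higher, directly from the theorems.
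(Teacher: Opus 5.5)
Your proposal is correct and follows the paper's route. The paper observes that a vertex cover $S^*$ yields a tree decomposition of width $|S^*|$ (so $\tw(G)\le\vc(G)$), applies \Cref{thm width and k fpt} and \Cref{thm: tw k d}, and uses a 2-approximate vertex cover for the general bounds; it does not explain why the exponent drops by one there, and your recount (only $\Oh(n)$ nodes instead of $\Oh(r\cdot n)$, hence $\Oh(k^2\cdot\leqBin{r+1}{k})=\Oh(r^k)$ work per node by \Cref{subsets of size k}, plus $\Oh(n+m)$ adjacency bookkeeping) is a valid way to supply that improvement.
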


Since~$S$ is a vertex cover of~$G$, this also implies an algorithm for~\GLVC with running time~$\Oh(|S|^{\frac{(k+d)}{2}+1} \cdot  n + m)$ and an algorithm for~\GLWVC with running time~$\Oh(|S|^{k+1} \cdot  n + m)$.

\section{Degree-Related Parameterizations}\label{sec:degree}
In this section, we present FPT-algorithms with running times that grow strongly with respect to~$k$ and only mildly with respect to~$\Delta(G)$ or the~$h$-index of~$G$. 
In contrast to previous work, these FPT-algorithms solve the more general problems with weights and gap-improvements; the algorithms for~$\Delta(G)$ will be used as subroutines in the algorithm for the~$h$-index of~$G$.
We start by presenting an algorithm for instances with an~$h$-index of at most~$1$ which will be used to handle border cases for both parameterizations.
\begin{lemma}\label{lem: polytime h small}
\GLWVC can be solved in~$\Oh(k\cdot \log(k) + n)$ time if the~$h$-index of~$G$ is at most~$1$.
\end{lemma}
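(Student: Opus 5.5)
We use the structure forced by a small $h$-index. If $h(G)\le 1$, then $G$ has at most one vertex of degree at least~$2$. So $G$ decomposes into at most one star with center~$u$ (all leaves of the star have degree~$1$), a matching of isolated edges, and isolated vertices. We find these pieces in $\Oh(n)$ time; note $m\le n$. A swap is valid if and only if its restriction to every connected component is valid, and improvement and size are additive over components. So the task is a knapsack problem over components with budget~$k$, where we maximize the improvement and compare it with~$d$.

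For each component we list its valid local options, each with a size and an improvement; we keep only options of positive improvement.
\begin{itemize}
\item An isolated vertex $v\in S$ gives an item of size~$1$ and value $\omega(v)$.
\item An edge $\{a,b\}\subseteq S$ gives a size-$1$ item of value $\max(\omega(a),\omega(b))$. Both endpoints cannot be swapped, since $W\cap S$ must be independent.
\item An edge with $a\in S$, $b\notin S$ gives a size-$2$ item of value $\omega(a)-\omega(b)$.
\end{itemize}
Each component contributes at most one item. So we obtain a list $L_1$ of size-$1$ values and a list $L_2$ of size-$2$ values.

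For the star with center~$u$ there are three modes, which we handle separately.
\begin{enumerate}
\item[(a)] $u\in S$ and $u$ is not swapped. Every leaf in~$S$ may be swapped out on its own. These leaves become additional size-$1$ items in~$L_1$.
\item[(b)] $u\in S$ and $u$ is swapped. Then all leaves outside~$S$ must be swapped as well, and no leaf in~$S$ may be swapped. This is one forced block of fixed size and value.
\item[(c)] $u\notin S$. Swapping any leaf forces $u$ into the cover. So the option is: pay size~$1$ and value $-\omega(u)$ once, then add the leaves as size-$1$ items.
\end{enumerate}
If $u\notin S$, we also compute the case where the star is untouched.

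For a fixed pair of lists and a budget~$k'$, the best knapsack value is computed as follows. Using linear-time selection, we extract the $k'$ largest entries of~$L_1$ and the $\lfloor k'/2\rfloor$ largest entries of~$L_2$ in $\Oh(n)$ time. We sort only these entries in $\Oh(k\log k)$ time and form prefix sums. The optimum is then
\[
\max_j \bigl(P_2(j)+P_1(k'-2j)\bigr),
\]
where $P_i(t)$ is the sum of the top $t$ values of~$L_i$. This holds by an exchange argument: within one item size, taking the largest values is optimal, and since all values are positive, using the full budget never hurts. Evaluating this expression takes $\Oh(k)$ time.

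We run this evaluation a constant number of times: once for each star mode, with budget $k$, $k-|{\rm block}|$, or $k-1$, and the mode's fixed value added. Including the leaves in mode~(c) is harmless even when no leaf is chosen: the result is then just a valid but worse swap, namely swapping $u$ alone, which is a swap of negative improvement and is dominated by the untouched case. The answer is yes if and only if the maximum over all modes is at least~$d$.

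The main point to check carefully is that this case split of the star is exhaustive and that every option it produces is valid: $W\cap S$ must be independent, and every neighbor of $W\cap S$ outside~$S$ must lie in~$W$. The running time is $\Oh(n+k\log k)$, because selection keeps the sorting cost independent of~$n$.
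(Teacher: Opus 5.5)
Your proposal is correct and follows essentially the paper's route: you single out the unique vertex of degree at least~$2$ and split on whether it is swapped (the paper phrases your modes as a swap-instance and an exclusion instance, each of maximum degree at most~$1$), and you then solve a knapsack with items of size~$1$ and~$2$ after linear-time selection of the $k$ best candidates, sorting only those. Your final step, maximizing $P_2(j)+P_1(k'-2j)$ over~$j$ with prefix sums, is actually more careful than the paper's greedy pairing, which spends the leftover unit of an odd budget last and can therefore miss the optimum (for budget~$3$, size-$1$ values $10,9,1$ and one size-$2$ value $18$, it returns $20$ instead of $28$).
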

\begin{proof}
Let~$I=(G=(V,E),\omega, S, k,d)$ be an instance of~\GLWVC with~$h(G) \leq 1$.

First, we present an algorithm with the stated running time for the case~$\Delta(G) \leq 1$. In that case,~$G$ consists of isolated vertices and isolated edges.
The first step is to apply the following initial reduction rule.
For each edge~$e:=  \{u,v\}\in E$ where~$S$ contains both~$u$ and~$v$, remove~$v$ from~$G$ if~$\omega(v) \leq \omega(u)$ and remove~$u$ from~$G$, otherwise.
Since each vertex cover of~$G$ contains at least one endpoint of~$e$, it is never optimal to swap the endpoint of~$\{u,v\}$ of smaller weight out of~$S$ and thus, the reduction rule produces an equivalent instance.
Moreover, this reduction rule can be applied exhaustively in $\Oh(n)$~time. 

Hence, in the following, we can assume that~$S$ contains exactly one endpoint of each edge of~$E$.
Since~$\Delta(G) \leq 1$, each valid connected swap consists either of (i)~a degree-0 vertex in~$S$ or (ii)~both endpoints of an edge.
 
We now compute~$\mathcal{W}_1 :=   \{\{v\}\mid v\in S, N(v) = \emptyset\}$ the set of valid~$1$-swaps and~$\mathcal{W}_2 :=   \{\{v,w\} \mid \{v,w\}\in E, \{v,w\}\not\subseteq S, \imp(\{v,w\}) > 0\}$ the set of valid connected improving~$2$-swaps for~$S$ in~$G$ in $\Oh(n)$~time.
Note that each vertex is contained in at most one set in~$\mathcal{W}_1 \cup \mathcal{W}_2$.
Since~$\Delta(G) \leq 1$, the union~$W$ of each subset of swaps of~$\mathcal{W}_1 \cup \mathcal{W}_2$ is a valid swap for~$S$ in~$G$.
Hence, it remains to find some set~$W$ of size at most~$k$ with the maximal improvement.
To this end, we compute the set~$\mathcal{W}_i^k$ of the~$k$ elements of~$\mathcal{W}_i$ of maximal improvement and sort them by weight for each~$i\in \{1,2\}$. 
This can be done by finding the swap~$W_i^k$ in~$\mathcal{W}_i$ with the~$k$th largest improvement using the median of the medians in $\Oh(n)$~time and afterwards filtering all swaps of improvement at least~$\imp(W_i^k)$ and sorting them in $\Oh(k\cdot \log(k) + n)$~time.
We now compute~$W$ greedily as follows. Start with an empty set~$W$. While~$k \geq 2$, let~$W_1$ and~$W_1'$ be the two swaps in~$\mathcal{W}_1^k$ with the maximal improvement and let~$W_2$ be the swap in~$\mathcal{W}_2^k$ with the maximal improvement.
If~$\imp(W_1 \cup W_1') \geq \imp(W_2)$, then update~$W$ to~$W \cup W_1\cup W_1'$ and remove~$W_1$ and~$W_1'$ from~$\mathcal{W}_1^k$ .
Otherwise, update to~$W \cup W_2$ and remove~$W_2$ from~$\mathcal{W}_2^k$.
In both cases decrease~$k$ by 2.
If~$k = 1$, then update~$W$ to~$W \cup W_1$ and~$k$ to~$0$, where~$W_1$ is the swap in~$\mathcal{W}_1^k$ with the maximal improvement.
If~$k = 0$, then we may add no further vertices to~$W$ and~$I$ is a yes-instance of~\GLWVC if and only if~$\imp(W) \geq d$.
Altogether this algorithm runs in~$\Oh(k\cdot \log(k) + n)$~time.

Next, we present an algorithm for the case~$\Delta(G) > 1$.
Since~$h(G) \leq 1$, there is exactly one vertex~$v^*$ of degree at least 2 in~$G$. We branch into the two possible cases whether~$v^*$ is contained in the swap~$W$ or not.
For the case that $v^*$ is in~$W$, we compute the swap-instance~$I_1:=  \swap(I, \{v_x\})$ 
and check whether~$I_1$ is a yes-instance of~\GLWVC.
For the case that $v^*$ is not in~$W$, we check whether the instance~$I_2$ obtained by removing~$v_x$ and, if~$v_x\in V \setminus S$, all vertices in~$N(v_x) \cap S$ from~$G$, is a yes-instance.
The instances~$I_1$ and~$I_2$ have a maximum degree of at most one, so we can solve both of them in $\Oh(k\cdot \log(k) + n)$~time using the algorithm for the case~$\Delta(G)\le 1$.
\end{proof}

Hence, for the following algorithms, we assume that~$h(G)$ and~$\Delta(G)$ are at least~$2$.

\subsection{Parameterizing Unweighted Gap Local Search by Maximum Degree}

In this section, we present an algorithm for~\GLVC whose running time grows mildly with respect to~$\Delta(G)$ and strongly with respect to~$k+d$.
Recall that~\LVC (the special case of~\GLVC with~$d=1$) can be solved in $2^k \cdot (\Delta(G) -1)^{\frac{k+1}{2}}\cdot n^{\Oh(1)}$~time.
Our running time bound for the general algorithm almost meets this bound for~$d=1$, the only difference being that the leading factor of~$2^k$ is replaced by~$k!$. 
\begin{theorem}\label{glvc delta k d}
\GLVC can be solved in $\Oh(k!\cdot (\Delta-1)^{\frac{k+d}{2}}\cdot n)$~time.
\end{theorem}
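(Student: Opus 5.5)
We prove the statement with a branching algorithm on connected swaps, in the spirit of the algorithm of~\cite{KK17}. The budget we track is the potential $k+d$, which drops by $2$ every time a vertex of~$S$ is swapped out. By \Cref{lem: polytime k d small} we may assume $k+d\ge 5$, and by \Cref{lem:parity} that $k+d$ is even. Since the swap need not be connected in the gap setting, we handle components separately. Let $W$ be a minimal \goodswap. Each connected component $W_i$ of $G[W]$ is itself a valid swap, and by minimality each $W_i$ has positive improvement. We therefore first compute, for every size $s\le k$ and every target improvement $t\le d$, whether a \emph{connected} valid $s$-swap of improvement at least~$t$ exists. We then combine components by a knapsack-type argument over sizes. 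A disjointness issue arises here, because different components must be non-adjacent. This is the main obstacle, and we treat it below.

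For connected swaps we enumerate from every start vertex $v\in S$. We maintain a partial swap and apply it via the swap-instance $\swap(I,\cdot)$, which makes sure that every independent-set neighbor of a swapped-out vertex is swapped in. In each step we pick a swapped-in vertex $u\in W\setminus S$ whose $S$-neighbors are not yet all decided. Connectivity together with minimality implies that some not-yet-swapped $S$-neighbor of the current swap, other than the parent, must be added. So we branch on which of at most $\Delta-1$ neighbors of $u$ to take next, or we stop. Each branching step swaps one vertex out of~$S$, so $k'+d'$ drops by~$2$, and by the remark after the swap-instance lemma the depth is at most $\frac{k+d}{2}$. This would give $(\Delta-1)^{\frac{k+d}{2}}$ leaves per start vertex. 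However, the vertex on which to continue the expansion is not fixed; we must choose it from the frontier. We account for this by an ordering argument: the order in which the at most $\frac{k+d}{2}$ vertices of $W\cap S$ are discovered is one of at most $k!$ possibilities, and this yields the factor $k!$. Summed over start vertices, and with $\Oh(1)$ amortized work per node via the swap-instance bookkeeping, this gives $\Oh(k!\cdot(\Delta-1)^{\frac{k+d}{2}}\cdot n)$.

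For the combination step we do not search for disjoint connected swaps independently. Instead we fold the components into the same branching: after finishing one component, we start the next one from an arbitrary remaining vertex of the swap-instance. Because the swap-instance deletes $N(W\cap S)\cup W'$, disjointness and non-adjacency are enforced automatically. The start vertex of a new component would contribute a factor~$n$, which is too large. To avoid it, we use the following trick, which I expect to be the crux. We order the components by start vertex, and only the first component's start is guessed among $n$ vertices. For later components we argue that a best connected swap of each size can be precomputed, as in the $h$-index outline: if none of the optimal connected swaps of a given size is used, then one of its neighbors is swapped. This lets us branch on that neighborhood at cost $\Delta-1$ per removed $S$-vertex, in place of a factor~$n$. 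Verifying that this exchange argument keeps the potential decreasing by~$2$ per branch, and that the multiplicities stay within $k!$, is the step I would check most carefully.
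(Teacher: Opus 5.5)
There is a genuine gap: the step you call the crux is the entire substance of the proof, and you leave it unverified. The two combination strategies you do spell out fail, as you note yourself: the knapsack over sizes ignores non-adjacency, and restarting each component from an arbitrary vertex costs a factor $n$ per component. Your third strategy, precomputing representative connected swaps and branching on \emph{use one} versus \emph{swap out a neighbour}, is the paper's idea. To make it work you must match a concrete family to a concrete exchange argument.

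The paper takes, for each $j\in[1,d]$, a minimum-size valid connected $j$-improving $(k-d+j)$-swap $W_j$. A minimum \goodswap $W$ has $\imp(W)=d$. If $W$ is disconnected, let $C$ be a component of least improvement $\ell\le d/2$. Since $W\setminus C$ must be $(d-\ell)$-improving, $|C|\le k-d+\ell$, so $W_\ell$ exists and $|W_\ell|\le|C|$. Suppose $W_\ell\not\subseteq W$ and $W\cap N(W_\ell)=\emptyset$. Then any component of $G[W]$ meeting $W_\ell$ is a proper subset of $W_\ell$. By minimality of $W_\ell$ it has improvement below $\ell$, contradicting the choice of $C$. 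Hence $(W\setminus C)\cup W_\ell$ is a \goodswap containing $W_\ell$. (A size-indexed family, as you suggest, can also be made to work if paired with a smallest component, as in the paper's weighted algorithm.) Finally, $N(W_j)\subseteq S$ for every valid $W_j$. So each branch swaps a vertex of $S$ out and lowers $k+d$ by at least $2$.

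Your running-time accounting is also off. The branching set has size $\sum_{j\le d/2}(|W_j|(\Delta-1)+1)\le\frac{k^2}{2}(\Delta-1)$, not $\Delta-1$. The factor $k!$ comes from multiplying these $k^2$ factors over the levels of the search tree. It does not come from an ordering of discovered $S$-vertices, which bounds no search tree. Your connected-swap enumeration also ignores the \emph{stop} branches, which are what cost the $2^k$ in the Katzmann--Komusiewicz algorithm.

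The product of the $k^2$ factors stays within $k!$ only if $k$ drops by $2$ in each such branch. This fails when the swapped-out neighbour has no neighbour outside $S$. The paper therefore first handles any $v\in S$ with $N(v)\subseteq S$ by a separate rule: swap $v$, or swap two non-adjacent neighbours of $v$, which is correct by a simple exchange argument. It then solves
\[
T(k,k+d)\le(\Delta-1)^2T(k-2,k+d-4)+T(k-1,k+d-2)+\Oh(n+m)
\]
and
\[
T(k,k+d)\le\tfrac{k^2}{2}(\Delta-1)\,T(k-2,k+d-2)+\Oh(2^k(\Delta-1)^{(k+d)/2}k^3n)
\]
by induction. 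The cases $\Delta\le 2$ and $k+d\le 4$ are handled separately, since the inductive inequalities use $\Delta\ge 3$ and $k\ge 3$.

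The additive term also needs an argument. The paper computes the family by enumerating connected $1$-improving $(k-d+1)$-swaps. It extends each by an independent set of size at most $j-1$ in an auxiliary graph of maximum degree $\Delta-1$, justified by the bipartite connectivity lemma.
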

The first idea for an algorithm is obviously to adapt the known~$\Oh(2^k \cdot (\Delta -1)^{\frac{k+1}{2}}\cdot k^2 \cdot n)$-time algorithm for~\LVC~\cite{KK17} to~\GLVC. 
This algorithm, however, relies on the fact that for~\LVC, it is sufficient to consider only connected swaps.
For~$d$-improving swaps this is not the case: for~$k = 2$, an improvement of at least 2 can only be achieved by swapping two vertices out of the current solution that are not adjacent. 
These vertices may have an arbitrarily large distance in the graph. 
This makes the gap version of the problem considerably more complicated.

To avoid considering all possible vertex sets of size at most~$k$, we present two branching rules.
The first rule applies if there is a vertex~$v$ in~$S$ where~$N(v) \subseteq S$ and branches in all possible ways to swap either~$v$ or two non-adjacent vertices of~$N(v)$.
If this rule cannot be applied, then each vertex in~$S$ has at least one neighbor in~$V\setminus S$ and, thus, there is no valid improving swap of size one.

\begin{proposition}\label{correctness rule isolated}
Let~$I=(G,S,k,d)$ be a yes-instance of~\GLVC and let~$v$ be a vertex of~$S$ with~$N(v) \subseteq S$.
There is a \goodswap~$W$ for~$I$, such that (i)~$v$ is contained in~$W$ or (ii)~$W$ contains at least two neighbors of~$v$. 
\end{proposition}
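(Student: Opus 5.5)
Take an arbitrary \goodswap~$W$ for~$I$. If~$W$ already satisfies (i) or (ii), we are done, so assume that~$v\notin W$ and that~$W$ contains at most one neighbor of~$v$. We will modify~$W$ into another \goodswap that contains~$v$.

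\emph{Case~$|W\cap N(v)| = 0$.} Set~$W^* := W\cup\{v\}$ if~$|W|<k$. The new vertex~$v$ lies in~$S$, and by assumption~$N(v)\subseteq S$. Hence swapping~$v$ out requires no independent-set vertex to enter. Since no neighbor of~$v$ leaves the vertex cover, $(S\oplus W)\setminus\{v\}$ is still a vertex cover. Thus~$W^*$ is valid and~$\imp(W^*)=\imp(W)+1\ge d$.

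If~$|W|=k$, adding~$v$ is not possible. In that case I instead exchange~$v$ for some vertex of~$W$:
\begin{itemize}
\item If~$W$ contains a vertex~$u\notin S$, then~$u$ is in the extension of some~$w\in W\cap S$. Removing~$u$ alone breaks validity, so I avoid that option.
\item Instead, take a minimal \goodswap~$W$ from the start. Then the improvement of~$W$ is exactly~$d$ and~$W$ is as small as possible.
\item If~$|W|=k$ and~$v\notin N[W]$, then~$W\cup\{v\}$ has size~$k+1$. Pick any~$w\in W\cap S$ and remove~$w$ together with the vertices of~$N(w)\setminus S$ that are not needed by other vertices of~$W\cap S$.
\end{itemize}
This is fiddly. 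A cleaner route is to use the earlier observation that~$W\cap S$ is independent and~$W\setminus S\subseteq N(W\cap S)$. Choose~$w\in W\cap S$ and let~$W' := W\setminus(\{w\}\cup P_w)$, where~$P_w$ is the set of vertices of~$W\setminus S$ whose only neighbor in~$W\cap S$ is~$w$. Then~$W'$ is valid and~$\imp(W')=\imp(W)-1+|P_w|$, so~$W'\cup\{v\}$ is valid and at least~$d$-improving. If some~$w$ has~$P_w\ne\emptyset$, this exchange even strictly shrinks the swap. In every case~$|W'\cup\{v\}|\le k$, so it is a \goodswap containing~$v$.

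\emph{Case~$|W\cap N(v)|=1$.} Let~$u$ be the unique neighbor of~$v$ in~$W$. Since~$u\in S$, the vertex~$v$ cannot join~$W$ while~$u$ is swapped out. So exchange~$u$ for~$v$: let~$W^* := (W\setminus(\{u\}\cup P_u))\cup\{v\}$, with~$P_u$ defined as above.
\begin{itemize}
\item \emph{Validity.} $v$ has no neighbor in~$V\setminus S$, and no other neighbor of~$v$ lies in~$W\cap S$, because~$u$ was the only one. Removing~$u$ together with its private independent-set neighbors~$P_u$ keeps the rest of the swap valid, by the characterization that a swap is valid iff~$W\cap S$ is independent and~$N(W\cap S)\setminus S\subseteq W$.
\item \emph{Size and improvement.} We have~$|W^*|\le|W|\le k$ and~$\imp(W^*)=\imp(W)-1+|P_u|+1\ge d$.
\end{itemize}
So~$W^*$ is a \goodswap containing~$v$.

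The main obstacle is the bookkeeping in the exchange step. We must check that removing~$w$ (or~$u$) together with exactly the independent-set vertices that depend only on it preserves validity and does not decrease the improvement below~$d$. The unit-weight setting makes every exchange of one~$S$-vertex for~$v$ improvement-neutral or better.
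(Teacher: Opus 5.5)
Your argument is correct and is essentially the paper's proof: exchange one vertex $w\in W\cap S$ for $v$, where $w$ is the unique neighbor of $v$ in $W$ if there is one and an arbitrary vertex of $W\cap S$ otherwise. The paper removes $w$ alone, since $W\setminus\{w\}$ is already valid ($w$ simply stays in the cover), so removing $P_w$, splitting off the $|W|<k$ subcase, and the abandoned bullet-point detours are all unnecessary.
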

\begin{proof}
Let~$W$ be a \goodswap for~$I$.
Suppose that~$v$ is not contained in~$W$ and that~$W$ contains at most one neighbor of~$v$, as otherwise the statement already holds.
If~$W$ contains no neighbor of~$v$, let~$w$ be an arbitrary vertex of~$S\cap W$.
Otherwise, let~$w$ be the unique vertex in~$W\cap N(v)$.
Note that~$w\in S \cap W$ by the fact that~$N(v) \subseteq S$.
Hence, $W' :=   W \setminus \{w\}$ is a valid~$(d-1)$-improving~$(k-1)$-swap for~$I$. 
Observe that~$W'$ contains neither~$v$ nor a neighbor of~$v$.
Consequently, $W^* :=   W' \cup \{v\}$ is a valid~$d$-improving~$k$-swap for~$S$ in~$G$.
\end{proof}

From~\Cref{correctness rule isolated}, we derive the following branching rule.
Here, we swap~$v$ or two independent neighbors of~$v$ in each case.
This gives a better branching vector than swapping only a single vertex of~$N[v]$ in each case.

\begin{brrule}\label{rule:deg one}
Let~$I=(G=(V,E),S,k,d)$ be an instance of~\GLVC and let~$v$ be a vertex of~$S$ with~$N(v) \subseteq S$.
For each swap~$W \in (\binom{N(v)}{2} \setminus E) \cup \{\{v\}\}$, branch into the case of swapping~$W$.
\end{brrule}

As mentioned above, if the branching rule cannot be applied anymore, then each valid improving swap contains at least two vertices.
Before applying the second branching rule, we perform the following preprocessing. 
First, we compute for each~$j\in[1,d]$ some minimum valid connected~$j$-improving~$(k-d+j)$-swap~$W_j$ for~$S$ in~$G$ if there is any. 
We call a collection~$\Wsf:=\{W_j\}_1^d$ of such swaps a~\emph{swap family}.
The idea of the branching rule is the following: for each~$j \in [1,d]$, either the~$d$-improving swap contains~$W_j$, some neighbor of~$W_j$, or no connected component that is exactly~$j$-improving.
Here, it suffices to consider only~$(k-d+j)$-swaps and not general~$k$-swaps, as a~$d$-improving~$k$-swap containing a connected component~$W_j'$ that is exactly~$j$-improving needs at least~$d-j$ vertices outside of~$W_j'$ to be~$d$-improving. That is, $W_j'$ can have size at most~$k-d+j$.
The correctness of the branching rule comes from the following observation:
Consider some valid minimum \goodswap~$W$ for~$I$ and let~$C$ be a connected component in~$G[W]$ with the minimal improvement.
Let~$\ell := \imp(C)$ and let~$W'= (W \setminus C) \cup W_\ell$. 
Since~$W_\ell$ contains at most~$|C|$ vertices, we have~$|W'| \leq k$. 
The resulting swap~$W'$ is good if~$W \cap N[W_\ell] = \emptyset$.

We now formally prove that this intuition is correct.

\begin{proposition}\label{correctness rule small swaps}
Let~$I=(G=(V,E),S,k,d)$ be a yes-instance of~\GLVC and, let~$\Wsf$ be a swap family.
There is a \goodswap~$W$ for~$I$ such that \textnormal{(i)}~$W$ is connected or \textnormal{(ii)} there is some~$W_j\in \Wsf$ such that~$j\le d/2$ and~$W_j \subseteq W$ or~$W \cap N(W_j) \neq \emptyset$. 
\end{proposition}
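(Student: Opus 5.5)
We argue by an exchange argument on a carefully chosen \goodswap. Among all \goodswap{}s for~$I$, pick~$W$ of minimum size, breaking ties by taking as few connected components in~$G[W]$ as possible. If~$W$ is connected, (i) holds, so assume~$G[W]$ has at least two components. Because~$W$ is minimum, every component~$C$ of~$G[W]$ is improving: a component with~$\imp(C)\le 0$ could be removed, and the rest stays valid (components are valid swaps) and still~$d$-improving, contradicting minimality. So every component has improvement at least~$1$. Let~$C$ be a component of minimum improvement~$j:=\imp(C)$. Since there are at least two components, each of improvement at least~$j$, and their improvements sum to at most~$\imp(W)$, we get~$j\le \imp(W)/2$. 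Here I would also use minimality to bound~$\imp(W)$: if~$\imp(W)\ge d+1$ we could drop a vertex of~$W\cap S$ (the neighbours of that vertex outside~$S$ stay covered by the remaining~$S$-vertices of~$W$, or are removed too) and still have a \goodswap. Hence~$\imp(W)=d$ in the relevant case, and so~$j\le d/2$.

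Next, $W_j$ exists and is small. The component~$C$ is a valid connected~$j$-improving swap, and~$W\setminus C$ has improvement exactly~$d-j$. In the unweighted setting this forces~$|W\setminus C|\ge d-j$, so~$|C|\le k-d+j$. Thus~$C$ witnesses that a minimum valid connected~$j$-improving~$(k-d+j)$-swap exists, the family~$\Wsf$ contains some~$W_j$, and~$|W_j|\le |C|$.

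Now suppose~$W_j\not\subseteq W$ and~$W\cap N(W_j)=\emptyset$; otherwise (ii) holds. Then~$W$ is disjoint from~$W_j$ (a vertex of~$W_j\cap W$ either lies in~$W_j$ itself, and we can argue connectivity, or we reduce to the neighbour condition). Set~$W':=(W\setminus C)\cup W_j$. Then
\begin{itemize}
\item $|W'|\le |W|-|C|+|W_j|\le k$;
\item $\imp(W')=d-j+j=d$;
\item $W'$ is valid, since~$W\setminus C$ and~$W_j$ are each valid swaps and no edge joins them (by~$W\cap N[W_j]=\emptyset$), so no two~$S$-vertices of~$W'$ are adjacent and every non-$S$ neighbour of a swapped~$S$-vertex lies in its own part.
\end{itemize}
So~$W'$ is a \goodswap of size at most~$|W|$. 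It still fails (ii) for~$W_j$ only if it does not contain~$W_j$, but~$W_j\subseteq W'$ by construction, so~$W'$ satisfies (ii). This suggests a cleaner overall structure: take any minimum \goodswap~$W$; if it is disconnected, pick~$C$ and~$j$ as above; then either~$W$ already meets~$W_j$ or its neighbourhood, or the exchanged swap~$W'$ contains~$W_j$. In every case (ii) holds with~$j\le d/2$.

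The main obstacle is the case~$W\cap W_j\neq\emptyset$ with~$W_j\not\subseteq W$ and~$W\cap N(W_j)=\emptyset$. Since~$W_j$ is connected, a vertex of~$W_j$ in~$W$ adjacent to a vertex of~$W_j$ outside~$W$ would give~$W\cap N(W_j)\neq\emptyset$. Here~$N(W_j)$ excludes~$W_j$ itself, so I would handle this by connectivity: if~$W_j$ meets~$W$ but is not contained in it, some edge of~$G[W_j]$ crosses between~$W_j\cap W$ and~$W_j\setminus W$. Its endpoint in~$W$ is then adjacent to~$W_j\setminus W$, which still does not directly give~$W\cap N(W_j)\ne\emptyset$. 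In that case I would instead perform the exchange with~$W'':=(W\setminus C)\cup W_j$ after verifying that the overlap only reduces the size, which also gives~$W_j\subseteq W''$. A second delicate point is ensuring~$j\le d/2$, which relies on the reduction to~$\imp(W)=d$ via minimality.
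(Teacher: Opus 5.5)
Your outline follows the paper's proof: a minimum \goodswap $W$ with $\imp(W)=d$, a component $C$ of minimum improvement $j\le d/2$, the bound $|C|\le k-d+j$ giving $|W_j|\le |C|$, and the exchange $(W\setminus C)\cup W_j$. However, the step you flag as the main obstacle is left open, and the workaround you propose does not work.

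\textbf{The overlap case.} Suppose $W\cap W_j\neq\emptyset$, $W_j\not\subseteq W$ and $W\cap N(W_j)=\emptyset$. For $W''=(W\setminus C)\cup W_j$ you only check size. Validity does hold, but $d$-improvement is unverified, because shared vertices count once: $\imp(W'')=\imp(W\setminus C)+\imp(W_j)-\imp(W\cap W_j)$. Since $W\cap N(W_j)=\emptyset$, every component of $G[W]$ meeting $W_j$ lies entirely inside $W_j$. So $W\cap W_j$ is a union of whole components of $G[W]$, each with improvement at least $j$. This gives only $\imp(W'')\le d+\imp(W_j)-2j$, which is below $d$ unless $\imp(W_j)\ge 2j$, and nothing guarantees that. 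Your earlier claim that $W$ is disjoint from $W_j$ ``by connectivity'' is, as you note yourself, not an argument, since $N(W_j)$ excludes $W_j$.

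\textbf{The missing idea: this case cannot occur, by minimality of $W_j$.} Let $D$ be a component of $G[W]$ meeting $W_j$.
\begin{itemize}
\item As above, $D\subseteq W_j$, and $D\neq W_j$ because $W_j\not\subseteq W$.
\item $D$ is a valid connected swap, since components of valid swaps are valid, and $|D|<|W_j|\le k-d+j$.
\item If $\imp(D)\ge j$, then $D$ would be a smaller valid connected $j$-improving $(k-d+j)$-swap, contradicting the choice of $W_j$.
\item Hence $\imp(D)<j=\imp(C)$, contradicting that $C$ has minimum improvement among the components of $G[W]$.
\end{itemize}
Therefore $W\cap W_j=\emptyset$, and your first exchange $W'=(W\setminus C)\cup W_j$ goes through. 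Write $\imp(W')\ge d$ rather than $=d$, since a priori you only know $\imp(W_j)\ge j$. This is precisely where choosing $C$ by minimum improvement, rather than by size or component count, is essential. Your tie-break on the number of components plays no role.

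\textbf{Minor point.} Your justification of $\imp(W)=d$ is muddled. Dropping $v\in W\cap S$ from $W$ simply keeps $v$ in the cover. Then $S\oplus(W\setminus\{v\})=(S\oplus W)\cup\{v\}$ is trivially a vertex cover, with improvement $\imp(W)-1$ and one fewer vertex.
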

\begin{proof}
Let~$W$ be a minimum \goodswap for~$I$.
The improvement of~$W$ is exactly~$d$, as otherwise removing an arbitrary vertex of~$S$ from~$W$ yields a \goodswap as well.
Suppose that~$W$ is not connected and that $W_j \not\subseteq W$ and~$W \cap N(W_j) = \emptyset$ for every~$W_j\in \Wsf$, as otherwise the statement already holds.
Let~$C$ be a connected component in~$G[W]$ that minimizes~$\imp(C)$ among all connected components of~$G[W]$, that is, $C$~is the connected swap of~$W$ with the smallest improvement.
Let~$\ell :=   \imp(C)$.
Since~$W$ is not connected and has improvement exactly~$d$, the improvement~$\ell$ of~$C$ is at most~$\lfloor \frac{d}{2}\rfloor$.
Note that~$\ell \geq 1$, as otherwise $W$ is not minimum.

Note that~$W' :=   W \setminus C$ is a valid~$(d-\ell)$-improving~$(k-|C|)$-swap for~$I$ and~$|C| < k$.
Since~$W$ has size at most~$k$ and is~$d$-improving, $W\setminus S$ is~$(d-\ell)$-improving.
This implies that~$|C| \leq k-d+\ell$.
Recall that~$W_\ell$ is some minimum valid connected~$\ell$-improving~$(k-d+\ell)$-swap for~$S$ in~$G$.
Hence, $|C| \geq |W_\ell|$. 
In the following, we show that~$W' \cup W_\ell = (W\setminus C) \cup W_\ell$ is a \goodswap for~$I$. 
To this end, we first show that~$W$ contains no vertex of~$W_\ell$.

Assume towards a contradiction that~$W\cap W_\ell$ is nonempty and let~$C_W$ be some connected component of~$G[W]$ that contains at least one vertex of~$W_\ell$.
Recall that by assumption~$W_\ell \not\subseteq W$ and~$N(W_\ell) \cap W = \emptyset$.
Hence, $C_W$ is a proper subset of~$W_\ell$.
Since~$W_\ell$ is a smallest valid~$\ell$-improving~$k$-swap for~$S$ in~$G$, $\imp(C_W) < \imp(W_\ell) = \ell = \imp(C)$.
This contradicts the fact that~$C$ is a connected component of~$G[W]$ of minimal improvement.    
Consequently, $W$~contains no vertex of~$W_\ell$.

We set~$W^* :=   W' \cup W_\ell$.
Note that~$W^*$ is a~$d$-improving~$k$-swap for~$S$ in~$G$.
It remains to show that~$W^*$ is valid.
Since~$W'$ and~$W_\ell$ are both valid, it follows that~$W^*$ is valid if~$W' \cap N(W_\ell \cap S) \cap S = \emptyset$.
By assumption, this is the case.
\end{proof}

\Cref{correctness rule isolated} now directly implies the correctness the following branching rule.

\begin{brrule}\label{rule:swaps no weight}
  Let~$I=(G=(V,E),S,k,d)$ be an instance of~\GLVC such that there is no connected \goodswap for~$I$ and let~$\Wsf=\{W_j\}_1^d$ be a given swap family.
For each~$W_j\in \Wsf$, $j\le d/2$, and each swap~$W \in \{\{w\} \mid w \in N(W_j)\} \cup \{W_j\}$, branch into the case of swapping~$W$.
\end{brrule}

Since this branching relies on knowing a swap family, we next present an algorithm to efficiently find such a swap family using the algorithm of Katzmann~and~Komusiewicz~\cite{KK17} as a subroutine.

\begin{proposition}\label{running time rule small swaps}
Let~$I=(G=(V,E),S,k,d)$ be an instance of~\GLVC such that for each vertex~$v\in S$, $N(v) \setminus S \neq \emptyset$.
One can compute a swap family~$\Wsf :=\{W_j\}_1^d$ in time~$\Oh(2^k\cdot(\Delta-1)^{(k+d)/2}\cdot k^3 \cdot n)$. 
\end{proposition}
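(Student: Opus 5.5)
To compute the swap family, I would treat each $j\in[1,d]$ separately and reduce the task to the connected-swap enumeration underlying the algorithm of Katzmann and Komusiewicz~\cite{KK17}. That algorithm searches, from every start vertex $v\in S$, the valid connected swaps containing $v$. It does so by a branching tree: it repeatedly picks an independent-set vertex already forced into the swap (because a swapped-out $S$-vertex is adjacent to it), and branches on which of its at most $\Delta-1$ further $S$-neighbours is swapped out next.

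\textbf{Step 1 (the enumeration and its cost).} Fix a start vertex $v\in S$ and a size bound $k_j:=k-d+j$. I enumerate all inclusion-minimal valid connected swaps $W\ni v$ with $|W|\le k_j$ that are candidates for being $j$-improving. For the branching-tree depth I use the observation after the swap-instance lemma: each branching step swaps one vertex out of $S$ and hence decreases the quantity ``budget plus required improvement'' by $2$. Starting from budget $k_j$ and required improvement $j$, the depth is at most $(k_j+j)/2 = (k-d+2j)/2 \le (k+d)/2$, since $j\le d$. The branching factor is $\Delta-1$ per step, except possibly at the root. The extra factor $2^k$ comes, as in~\cite{KK17}, from the choice of which forced independent-set vertex to expand next, or equivalently from enumerating subtree shapes. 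Each node costs $\Oh(k^2)$. This gives $\Oh(2^k(\Delta-1)^{(k+d)/2}k^2)$ per start vertex, hence a factor $n$ overall.

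\textbf{Step 2 (all $j$ at once).} Rather than running Step 1 separately for each $j$, which would cost an extra factor $d\le k$, I run it once with the largest bound (budget $k$, required improvement $d$) and read off, along the search, every connected swap encountered. Every minimum connected $j$-improving $(k-d+j)$-swap lies in this enumerated family, because its size and improvement stay within the larger budget. For each $j$ I keep the smallest swap found with improvement at least $j$ and size at most $k-d+j$. Alternatively, I run the search $d\le k$ times, which yields the extra factor $k$ and the stated $k^3$. I will take this simpler route.

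\textbf{Main obstacle.} The hard part is justifying that the~\cite{KK17} enumeration, with depth charged via ``budget plus improvement'' rather than via $k/2$, really captures every minimum connected $j$-improving swap and not only every improving one. Two points need checking. First, a minimum such swap is a connected set that can be built by the search order. Second, the hypothesis $N(v)\setminus S\neq\emptyset$ for all $v\in S$ guarantees that every vertex swapped out forces at least one independent-set neighbour into the swap, so each step consumes budget $2$ for net improvement $0$ until the final steps. This accounting gives the exponent $(k+d)/2$. I would first verify this accounting directly on the swap-instance parameters $k'+d'$.
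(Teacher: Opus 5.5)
Your proposal is correct, but it takes a different route from the paper.

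\textbf{The paper's route.} The paper never opens up the Katzmann--Komusiewicz search. It uses it only as a black box that lists the \emph{balanced} connected swaps $W'$ (those with $|W'\cap S|=|W'\setminus S|+1$) of size at most $k-d+1$, in $\Oh(2^k(\Delta-1)^{(k-d)/2}k^2n)$ time. The bipartite leaf-removal lemma then shows that a minimum connected $j$-improving $(k-d+j)$-swap is such a $W'$ plus an independent set of $j-1$ further $S$-vertices that can be added to $W'$ while keeping it valid and connected. For each $W'$, a largest such set (capped at $d-1$) is found by a search tree in an auxiliary graph of maximum degree $\Delta-1$, in $\Oh((\Delta-1)^d k)$ time. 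This serves all $j$ at once, and the exponents add up to $(k-d)/2+d=(k+d)/2$. The hypothesis $N(v)\setminus S\neq\emptyset$ is used precisely to bound the degree of that auxiliary graph.

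\textbf{Your route.} You modify the search itself, which is essentially the finish-or-extend enumeration the paper sketches for the weighted case. You run it once per $j$ with budget $k-d+j$ and stop at the first node whose improvement is at least $j$. This avoids both the bipartite lemma and the independent-set step, and it does not need the hypothesis at all. The price is that you rely on the internals of the enumeration rather than only on its output guarantee.

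\textbf{Three points to correct in your write-up.}
\begin{enumerate}
\item The depth bound does not come from ``budget $2$ for improvement $0$''. An $S$-vertex whose independent-set neighbours are already in the swap costs budget $1$ and gains improvement $1$. What you need is the exact invariant that $k'+d'$ drops by $2$ per swapped-out $S$-vertex, however many new neighbours are forced.
\item The stopping rule is what makes the depth bound true, and you should state why it loses nothing. Along the search path that builds a minimum swap $W^*$, every node is a valid connected subset of $W^*$. Hence the first node reaching improvement $j$ is itself a minimum connected $j$-improving swap.
\item The factor $2^k$ comes from the binary finish-or-extend decision made at each independent-set vertex. It does not come from choosing which vertex to expand next, which could cost a factor up to $k$ per step.
\end{enumerate}
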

The proof of~\Cref{running time rule small swaps} relies on the following lemma.

\begin{lemma}[\cite{KK17}]\label{lem:bipartite}
Let~$G=(B\cup W, E)$ be a bipartite and connected graph with partite sets~$B$ and~$W$ where~$|B|>|W|$. 
Then, there is a vertex set~$B' \subseteq  B$, such that~$|B'| =
|W| + 1$ and~$B'\cup W$ is connected.
\end{lemma}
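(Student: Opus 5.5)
The statement to prove is \Cref{lem:bipartite}: a connected bipartite graph $G=(B\cup W,E)$ with $|B|>|W|$ contains a set $B'\subseteq B$ with $|B'|=|W|+1$ such that $G[B'\cup W]$ is connected. I will prove it by induction on $|W|$, growing a connected set that contains all of $W$ and adding only a few vertices of $B$ at each step.

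\textbf{Base case.} If $W=\emptyset$, then connectivity of $G$ forces $|B|=1$, and $B'=B$ works.

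\textbf{Direct construction.} Rather than a formal induction, I will build the set greedily. Choose an ordering $w_1,\dots,w_t$ of $W$ and a set $X$ that grows step by step. Start with $X=\{b_0\}$ for some $b_0\in B$.

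At each step, suppose some vertex of $W$ is not yet in $X$. Since $G$ is connected and bipartite, take a shortest path from $X$ to $W\setminus X$. Because $X$ will always contain every $W$-vertex it reaches together with a $B$-neighbour, this path has the form $x - w$ with $x\in B\cap X$, or $x - b - w$ with $x\in W\cap X$ and $b\in B\setminus X$.

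In the first case I add $w$ to $X$ and then try to add a fresh $B$-neighbour of $w$. Simpler accounting avoids this: add $b$ (if needed) together with $w$. Each new $W$-vertex then costs at most one new $B$-vertex, and $X$ stays connected.

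\textbf{Counting.} After all of $W$ is absorbed, $X=W\cup B_X$ is connected and $|B_X|\le |W|+1$, since $b_0$ counts as the extra one.

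\textbf{Padding.} If $|B_X|<|W|+1$, I add further $B$-vertices while keeping connectivity. Any $b\in B\setminus B_X$ lies at even distance from $X$. Because all of $W$ is already in $X$, every $B$-vertex's neighbours lie in $W\subseteq X$. Moreover, a $B$-vertex with no neighbours would contradict connectivity, unless $|W|=0$. So every remaining $b\in B$ is adjacent to $X$ and can be added singly.

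Since $|B|\ge |W|+1$, I can add vertices one at a time until exactly $|W|+1$ are reached. This padding step is in fact the key point: once $W\subseteq X$, every $B$-vertex attaches directly.

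\textbf{Simplification.} This observation removes the need for the careful growth phase. Take $X_0=W$. Every $B$-vertex has a neighbour in $W$, so any $B'\subseteq B$ makes every $b\in B'$ adjacent to $W$. What remains is to choose $B'$ of size $|W|+1$ such that $G[B'\cup W]$ connects all of $W$.

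To do this, take a spanning tree $T$ of $G$ and contract it towards $W$. The auxiliary graph $H$ on vertex set $W$, with $w\sim w'$ whenever they share a $B$-neighbour, is connected because $G$ is. A spanning tree of $H$ has $|W|-1$ edges, each witnessed by one $B$-vertex, giving at most $|W|-1$ $B$-vertices.

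I then pad this set with arbitrary further $B$-vertices up to exactly $|W|+1$; this is possible since $|B|\ge |W|+1$, and each added vertex attaches to $W$. The main care needed is the degenerate case $|W|=0$, which the base case already handles.
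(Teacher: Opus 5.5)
Your argument is correct, and its final form (the \textbf{Simplification} paragraph) takes a different route from the paper. The paper argues by deletion. It roots a spanning tree of $G$ at a vertex of $W$ and compares the sizes of consecutive levels to show that $|B|>|W|$ forces some $v\in B$ to be a leaf. It then notes that $G[(B\setminus\{v\})\cup W]$ stays connected and repeats until exactly $|W|+1$ vertices of $B$ remain.

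You build $B'$ upward instead. Every vertex of $B$ has all its neighbours in $W$, and at least one once $W\neq\emptyset$. So $G[B'\cup W]$ is connected as soon as $W$ is connected through $B'$. A spanning tree of the common-neighbour graph $H$ on $W$ achieves this with a set $B_0$ of at most $|W|-1$ witnesses, and further vertices of $B$ can then be added freely.

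This avoids the leaf-counting and proves more. Some set of at most $|W|-1$ vertices of $B$ already connects $W$, and it can be padded to any size up to $|B|$. So $|B|>|W|$ is needed only to reach the particular size $|W|+1$ used in the application. The paper's version, in exchange, is a single spanning-tree argument with no auxiliary graph.

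Two steps in your argument should be stated explicitly. First, $H$ is connected because consecutive $W$-vertices on any path of $G$ share a $B$-neighbour. Second, a tree edge $ww'$ of $H$ with witness $b$ gives the path $w,b,w'$ inside $G[B_0\cup W]$.

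The write-up also needs trimming. The greedy growth phase is valid but superseded by the simplification. The sentence about contracting a spanning tree $T$ towards $W$ is never used. The claim that every $b\in B\setminus B_X$ lies at even distance from $X$ is false: once $W\subseteq X$, that distance is $1$. It is harmless only because the following sentence gives the correct reason.
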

Since there is no published proof for this lemma, we provide one for the sake of completeness.
\begin{proof}
We prove the statement by showing that there is a vertex~$v\in B$ such that~$G[(B\setminus \{v\})\cup W]$ is connected.
Recursively applying this argument then implies the desired statement.
Let~$T=(B\cup W, E')$ be an arbitrary spanning tree of~$G$, which is rooted at some vertex~$w\in W$.
We show that there is a vertex~$v\in B$ which is a leaf of~$T$, which then implies that~$T[(B\setminus \{v\})\cup W]$ is a spanning tree of~$G[(B\setminus \{v\})\cup W]$.
Let~$d$ be the depth of~$T$ and for each~$i\in [1,d]$, let~$L_i$ be the vertices of the~$i$th level of~$T$.
Since~$G$ is bipartite, $L_i\subseteq B$ if~$i$ is odd and~$L_i\subseteq W$ if~$i$ is even.
If~$d$ is odd, then each vertex of~$L_d\subseteq B$ is a leaf of~$T$.
Otherwise, assume~$d$ is even.
For each odd~$i\in[1,d]$, $L_i$ contains a leaf of~$T$ or~$|L_{i+1}| \geq |L_i|$.
Since~$$\sum_{\text{odd}~i\in[1,d]}|L_i|  = |B| > |W| > \sum_{\text{even}~i\in[1,d]}|L_i|,$$ there is at least one odd~$i\in[1,d]$ with~$|L_{i+1}| < |L_{i}|$, which implies that~$L_i\subseteq W$ contains a leaf of~$T$.
\end{proof}

We are now able to prove~\Cref{running time rule small swaps}.

\begin{proof}[Proof of~\Cref{running time rule small swaps}]
Due to~\Cref{lem:bipartite}, for each valid connected improving swap~$W$ for~$S$ in~$G$, there is an independent set~$J \subseteq W \cap S$ such that~$W' = W \setminus J$ is a valid connected swap for~$S$ in~$G$ with~$|W' \cap S| = |W' \setminus S| + 1$. 
Katzmann~and~Komusiewicz~\cite{KK17} presented an algorithm~$\mathcal{A}$ to enumerate all valid connected~$(k-d + 1)$-swaps~$W'$ for~$S$ in~$G$ with~$|W' \cap S| = |W' \setminus S| + 1$ in $\Oh(2^k \cdot (\Delta(G) - 1)^{(k-d)/2}\cdot k^2 \cdot n)$~time.
Our algorithm relies on the following idea: 
Let~$W_j$ be a minimum valid connected~$j$-improving~$(k-d+j)$-swap for~$S$ in~$G$ for some~$j \in [1,d]$.
Then, \Cref{lem:bipartite} implies that there is an independent set~$J_j\subseteq W_j\cap S$ of size~$j-1$, such that~$W_j\setminus J_j$ is a valid~$1$-improving~$(k-d+1)$-swap for~$S$ in~$G$.
Hence, to find the swap~$W_j$, the intuitive idea behind our algorithm is to enumerate all valid connected~$(k-d + 1)$-swaps~$W'$ for~$S$ in~$G$ with~$|W' \cap S| = |W' \setminus S| + 1$ by using algorithm~$\mathcal{A}$, and afterwards finding an independent set of size~$j-1$ in~$(V\cap S)\setminus W'$ that can extend~$W'$. 

We initialize~$W_j$ with~$\bot$ for each~$j\in [1,d]$. 
For each swap~$W'$ outputted by~$\mathcal{A}$, we first compute an auxiliary graph~$G'_{W'}$. Then, we find a largest independent set~$J$ in~$G'_{W'}$ subject to the constraint~$|J|\le d-1$.
Intuitively, the graph~$G'_{W'}$ is the subgraph of~$G$ induced by exactly those vertices of~$S$ that can be swapped together with~$W'$ to still obtain a valid connected swap.
That is, $G'_{W'} :=   G[\{v\in S\setminus W' \mid N(v) \subseteq S \oplus W', v\in N(W'\setminus S)\}]$.
Since 
\begin{itemize}
\item $W'$ contains at least one vertex of~$V\setminus S$ (as $W'$ is a valid swap and each vertex of~$S$ has at least one neighbor in~$V\setminus S$ by assumption), 
\item $G'_{W'}$ contains no vertex of~$W'$, and
\item each vertex of~$G'_{W'}$ is a neighbor of some vertex of~$W'\setminus S$ in~$G$,
\end{itemize}
the graph~$G'_{W'}$ has a maximum degree of at most~$\Delta - 1$.
This implies that we can find the set~$J$ in $\Oh((\Delta - 1)^{d-1} \cdot (\Delta - 1)\cdot k)$~time by a standard search-tree algorithm, since~$G'_{W'}$ contains at most~$(\Delta - 1)\cdot k$ vertices.
Note that for each subset~$J'$ of~$J$, $W' \cup J'$ is a valid connected~$(|J'| + 1)$-improving~$k$-swap for~$S$ in~$G$.
For each~$r\in [0,|J|]$, let~$J_r$ be an arbitrary subset of~$J$ of size~$r$ and update~$W_{r+1}$ to~$W' \cup J_r$ if~$|W' \cup J_r| < |W_{r+1}|$.

Next, we show that the algorithm is correct.
Let~$j\in [1,d]$ such that a minimum~$j$-improving~$(k-d+j)$-swap~$W^*_j$ for~$S$ in~$G$ exists.
We show that the set~$W_j$ computed by our algorithm has size~$|W^*_j|$.
Since~$W^*_j$ is a~$j$-improving~$(k-d+j)$-swap and~$W^*_j$ is minimum, there is an independent set~$J_j\subseteq W^*_j \cap S$ of size~$j-1$ such that~$W' :=   W^*_j \setminus J_j$ is connected due to~\Cref{lem:bipartite}.  
Hence, $W'$ is a valid connected~$(k-d+1)$-swap for~$S$ in~$G$.
As a consequence, the algorithm~$\mathcal{A}$ outputs~$W'$.
Moreover, since~$W^*_j$ is valid, for each vertex~$v\in J_j$, $N(v)\subseteq S \oplus W'$.
Hence, $J_j$ is an independent set of size~$j-1$ in~$G'_{W'}$.
Thus, when considering~$W'$, our algorithm finds some independent set~$J'$ of size at least~$j-1$ in~$G'_{W'}$ and either updates~$W_j$ to~$W'\cup J_j'$ for some subset~$J_j'\subseteq J'$ of size~$j-1$ or~$W_j$ already has size~$|W^*_j|$.
Hence, after our algorithm considered the swap~$W'$, $|W_j|$ is exactly~$|W'\cup J_j'| = |W^*_j|$ since~$W^*$ is minimum.
Moreover, since~$G'_{W'}$ contains only vertices that are adjacent to at least one vertex of~$W'\setminus S$, $W'\cup J_j'$ is connected.

Hence, for all~$j\in[1,d]$, if it exists, we can find some minimum valid connected~$j$-improving~$(k-d+j)$-swap~$W_j$ for~$I$ where~$|W_j \setminus S| \leq (k-d)/2$ in total time~$\Oh(2^k\cdot\Delta^{(k+d)/2}\cdot k^3\cdot n)$.
\end{proof}

With the above two branching rules and the algorithm to compute a swap family, we are now able to prove~\Cref{glvc delta k d}.
\begin{proof}[Proof of~\Cref{glvc delta k d}.]
Let~$I=(G=(V,E), S,k,d)$ be an instance of~\GLVC.
If~$k+d \leq 4$, we can solve~$I$ in $\Oh(n+m)$~time due to~\Cref{lem: polytime k d small}.
If~$\Delta(G) = 2$, then we can compute a nice tree decomposition of~$G$ of width at most~$2$ in $\Oh(n+m)$~time and afterwards solve~$I$ in $\Oh((2^{(k+d)/2} + k^2)\cdot n + m)$~time due to~\Cref{thm: tw k d}.
Since~$d \leq k$, this is $\Oh(k!\cdot n + m)$~time.
Hence, we can assume in the following, that~$k+d > 4$ and~$\Delta(G) \geq 3$.

First, check in $\Oh(n+m)$~time if there is a vertex~$v\in S$ with~$N(v) \subseteq S$.
If this is the case, apply \Cref{rule:deg one}.
Due to~\Cref{correctness rule isolated}, this is correct.
If there is no vertex~$v\in S$ with~$N(v) \subseteq S$, compute a swap family~$\Wsf:=\{W_j\}_1^d$.
Due to \Cref{running time rule small swaps}, this can be done in~$\Oh(2^k\cdot(\Delta-1)^{(k+d)/2}\cdot k^3\cdot n)$~time.
If~$W_d\in \Wsf$, answer yes.
Otherwise, apply~\Cref{rule:swaps no weight}.
Due to~\Cref{correctness rule small swaps}, this is correct.

It remains to show the total running time.
Let~$\mathcal{I}_1$ denote the set of swap-instances considered during one application of~\Cref{rule:deg one} and let~$\mathcal{I}_2$ denote the set of swap-instances considered during one application of~\Cref{rule:swaps no weight}. 
We first analyze~$\mathcal{I}_1$ and~$\mathcal{I}_2$.
Note that~$\mathcal{I}_1$ consists of at most~$\binom{\Delta}{2}$ instances where~$k' \leq k - 2$ and~$k'+d' \leq k + d - 4$ and exactly one instance where~$k' = k - 1$ and~$k'+d' = k + d - 2$.
Since~$\Delta \geq 3$, $\binom{\Delta}{2} \leq (\Delta - 1)^2$.

Hence, if~\Cref{rule:deg one} is applicable, the recurrence for the running time is
$$
T(k,k+d) \leq (\Delta - 1)^2\cdot T(k-2, k+d-4) + T(k-1, k+d-2) + \Oh(n+m).
$$

Under the assumption that~\Cref{rule:deg one} is not applicable, we bound the number of instances in~$\mathcal{I}_2$ and show that $k' \leq k - 2$ and~$k'+d' \leq k+d - 2$ for each instance~$(G',S',k',d')\in \mathcal{I}_2$.
Let~$W_\ell$ be the largest swap in~$\Wsf$.
Note that~$\ell \leq  \frac{d}{2} < d \leq k$.
Hence, each swap~$W_j\in \Wsf$ has size at most~$k-1$.
Further, since~$W_j$ is connected, each vertex in~$W_j$ has at least one neighbor in~$W_j$ and thus at most~$\Delta - 1$ neighbors outside of~$W_j$.
Altogether, 
$$|\mathcal{I}_2| \leq \frac{d}{2}\cdot ((k-1) \cdot (\Delta - 1) + 1) \leq \frac{k^2}{2}\cdot (\Delta-1),$$
since~$\Delta \geq 2$.
By the assumption that~\Cref{rule:deg one} is not applicable, there is no vertex~$v\in S$ with~$N(v) \subseteq S$.
This implies that~$k' \leq k - 2$.
Further, since each considered swap contains at least one vertex of~$S$, $k'+d' \leq k+d - 2$ for each instance of~$\mathcal{I}_2$.

Hence, if~\Cref{rule:deg one} is not applicable and~\Cref{rule:swaps no weight} is applicable, the recurrence for the running time is
$$
T(k,k+d) \leq \frac{k^2}{2}\cdot (\Delta - 1) \cdot T(k-2, k+d-2) + \Oh(2^{k}\cdot (\Delta-1)^{(k+d)/2}\cdot k^2 \cdot n).
$$

We now show by induction over~$k$ that there is a constant~$C$ such that for each~$d\in[1,k]$, $T(k, k+d)\leq C \cdot k! \cdot (\Delta-1)^{(k+d)/2}\cdot n$.

By the above, there is a constant~$c$ such that for each~$k\in \mathbb{N}$ and each~$d\in[1,k]$
\begin{enumerate}[label=\textbf{(\arabic*)}]
\item\label{enuma} \GLVC can be solved in $c\cdot (n+m)$~time if~$k < 3$, 
\item\label{enumb} $T(k,k+d) \leq (\Delta-1)^2\cdot T(k-2, k+d-4) + T(k-1, k+d-2) + c\cdot (n+m)$ if there is some vertex~$v\in S$ with~$N(v) \subseteq S$, and
\item\label{enumc} $T(k,k+d) \leq \frac{k^2}{2}\cdot (\Delta - 1) \cdot T(k-2, k+d-2) + c\cdot 2^{k}\cdot (\Delta-1)^{(k+d)/2}\cdot k^3 \cdot n$ if there is no vertex~$v\in S$ with~$N(v) \subseteq S$.
\end{enumerate}

We set~$C :=   128c$.
For the base case~$k < 3$ the statement now holds directly due to~\ref{enuma}.
As the inductive step, we show that the statement holds for~$k$ if it holds for all~$k'\in [1,k-1]$.

Suppose that there is some vertex~$v\in S$ with~$N(v) \subseteq S$. 
By the induction hypothesis and due to~\ref{enumb}, 
\begin{align*} 
T(k,k+d)& \leq&&(\Delta-1)^2\cdot T(k-2, k+d-4) + T(k-1, k+d-2) + c\cdot (n+m)\\ 
&\leq&&C \cdot (\Delta-1)^2 \cdot (k-2)! \cdot (\Delta-1)^{(k+d)/2-2}\cdot n&\\
&&&+~C \cdot (k-1)! \cdot (\Delta-1)^{(k+d)/2-1}\cdot n + c\cdot (n+m)&\\
&=&& C \cdot (k-2)! \cdot (\Delta-1)^{(k+d)/2-1} \cdot ((\Delta-1) + (k-1))\cdot n + c \cdot (n+m) &\\
&\overset{(*)}{\leq}&& C \cdot (k-1)! \cdot (\Delta-1)^{(k+d)/2}\cdot n + c\cdot (n+m)&\\
&\overset{(**)}{\leq}&& C \cdot k! \cdot (\Delta-1)^{(k+d)/2}\cdot n&
\end{align*} 
Inequality~$(*)$ holds, since~$(\Delta-1) + (k-1) \leq (\Delta-1) \cdot (k-1)$ for all~$k\geq 3$ and~$\Delta \geq 3$.
Inequality~$(**)$ holds, since~$2c \leq C$, $k\geq 3$, and~$m\leq (\Delta-1)\cdot n$.

Suppose that there is no vertex~$v\in S$ with~$N(v) \subseteq S$. 
By the induction hypothesis and due to~\ref{enumc}, 
\begin{align*} 
T(k,k+d)& \leq&& \frac{k^2}{2}\cdot (\Delta - 1) \cdot T(k-2, k+d-2) + c\cdot 2^{k}\cdot (\Delta-1)^{(k+d)/2}\cdot k^3 \cdot n\\
&\leq&&C \cdot \frac{k^2}{2}(\Delta-1) \cdot (k-2)! \cdot (\Delta-1)^{(k+d)/2-1}\cdot n\\
&&&+~c \cdot 2^{k} \cdot (\Delta-1)^{(k+d)/2}\cdot k^3 \cdot n\\
&\leq&& C \cdot \left(\frac{k^2}{2} \cdot (k-2)! + \frac{1}{128} \cdot 2^{k} \cdot k^3\right) \cdot (\Delta-1)^{(k+d)/2}\cdot n\\
&\overset{(*)}{\leq} &&C \cdot k! \cdot (\Delta-1)^{(k+d)/2}\cdot n
\end{align*} 
Inequality~$(*)$ holds, since~$C = 128c$ and $\frac{k^2}{2} \cdot (k-2)! + \frac{1}{128} \cdot 2^k \cdot k^3 \leq k!$ for all~$k\geq 3$.

Hence, the whole algorithm runs in $\Oh(k! \cdot (\Delta-1)^{(k+d)/2}\cdot n)$~time.
\end{proof}

Note that the above running time is (besides the change from the~$2^k$ factor to a~$k!$ factor in the running time) a direct generalization of the previous best algorithm for~\LVC which runs in~$\Oh(2^k \cdot (\Delta-1)^{k/2}\cdot k\cdot n)$~time~\cite{KK17} to~\GLVC.

\subsection{Parameterizing Weighted Gap Local Search by Maximum Degree}\label{sec:weighted-degree}

Next, we consider~\GLWVClong when parameterized by~$\Delta(G)$ plus~$k$.

\begin{proposition}\label{running time rule small swaps weighted}
Let~$I=(G=(V,E),\omega,S,k,d)$ be an instance of~\GLWVC.
One can enumerate all valid connected~$k$-swaps for~$S$ in~$G$ in $\Oh(2^k \cdot (\Delta-1)^{k}\cdot k^3 \cdot n)$~time. 
\end{proposition}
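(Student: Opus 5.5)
We reduce the statement to enumerating connected vertex sets of bounded size, which we do with the classical bound on connected subgraphs in bounded-degree graphs. A valid connected swap~$W$ is in particular a connected vertex set of size at most~$k$ in~$G$. It is a standard fact that, for a fixed vertex~$v$, the number of connected vertex sets of size at most~$k$ containing~$v$ as, say, its smallest vertex is bounded by roughly~$(e(\Delta-1))^{k}$. A sharper enumeration that fits the claimed bound is the ``extension'' search tree, as used by Katzmann and Komusiewicz~\cite{KK17}. In this tree we maintain a connected set~$W$ together with a set~$F$ of forbidden vertices, and we branch on a frontier vertex of~$W$. This lets us enumerate all connected vertex sets of size at most~$k$ containing a fixed start vertex~$v$ in time~$\Oh(2^k(\Delta-1)^{k}\cdot \mathrm{poly}(k))$. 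Every set is output exactly once, because we fix the ordering in which the frontier is processed and forbid already rejected vertices.

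Concretely, I would proceed as follows. First, iterate over all start vertices~$v\in V$; this gives the factor~$n$. Second, for each~$v$, run a recursive procedure~$\mathrm{Enum}(W,F)$ with~$W=\{v\}$ and~$F=\emptyset$. It picks the first vertex~$u\in N(W)\setminus (W\cup F)$ in a canonical order and branches into~$\mathrm{Enum}(W\cup\{u\},F)$ and~$\mathrm{Enum}(W,F\cup\{u\})$. When~$|W|=k$ or no candidate remains, it outputs~$W$, and it also outputs~$W$ at every node. The recursion is organised so that each connected set is produced at a unique node.

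To bound the tree size, charge each branch to the edges used. Every vertex added after~$v$ is reached via an edge from a vertex already in~$W$, and each vertex of~$W$ except~$v$ has at most~$\Delta-1$ fresh neighbours. Hence the sets~$W$ correspond to rooted subtrees whose vertices have at most~$\Delta-1$ children; the root, having~$\Delta$ choices, can be absorbed into the constant. The number of such subtrees with~$k$ vertices is bounded by~$\binom{(\Delta-1)k}{k-1}\le (e(\Delta-1))^{k}$. Alternatively, the binary branching has depth at most~$k+|F|$, and the exclusions per vertex are at most~$\Delta-1$, which gives~$2^k(\Delta-1)^k$ leaves after a standard argument.

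Third, at each output set, check validity in~$\Oh(k\Delta)$ time, i.e.\ polynomial in~$k$ after the degree factor is absorbed: test that~$W\cap S$ is independent and that~$N(W\cap S)\setminus S\subseteq W$. Note that the validity check does not require the swap family machinery, and that the weights play no role in the enumeration.

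The main obstacle is getting exactly the~$2^k(\Delta-1)^k$ base rather than, e.g., $4^k\Delta^k$ or~$(e\Delta)^k$. I would try first to bound the number of leaves as follows: each leaf is determined by a word over the actions ``include/exclude'', with at most~$k$ includes, and each include opens at most~$\Delta-1$ new candidates. The number of excludes is therefore at most~$(\Delta-1)k$, and the count is~$\sum_j\binom{(\Delta-1)k}{j}$ restricted to configurations, which is~$\le 2^k(\Delta-1)^k$ up to~$\mathrm{poly}(k)$. Combined with~$\Oh(k^2)$ work per node (set maintenance and the validity check), this yields~$\Oh(2^k(\Delta-1)^k k^3 n)$.
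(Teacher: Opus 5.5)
\textbf{There is a genuine gap: your enumeration cannot reach the claimed base.}

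The step that fails is your final count. You enumerate \emph{all} connected vertex sets of size at most $k$ and check validity only when a set is output. Your search tree therefore has at least as many nodes as there are connected sets, and that number is provably too large.

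Take a $\Delta$-regular graph of girth larger than $2k$. Its connected $k$-sets containing a fixed vertex correspond to root-containing subtrees of the $\Delta$-regular tree. So there are at least $\frac{1}{(\Delta-2)k+1}\binom{(\Delta-1)k}{k}$ of them, which grows like $\rho^k/\mathrm{poly}(k)$ with $\rho=(\Delta-1)^{\Delta-1}/(\Delta-2)^{\Delta-2}$. For $\Delta\ge 4$ we have $\rho>2(\Delta-1)$. For example, $\rho=27/4>6$ when $\Delta=4$, and $\rho/(\Delta-1)\to e$ as $\Delta$ grows. Hence the excess over $2^k(\Delta-1)^k k^3$ is exponential in $k$.

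Consequently, your claim that the include/exclude words number at most $2^k(\Delta-1)^k$ up to $\mathrm{poly}(k)$ is false, not merely unproven. The same goes for the ``standard argument'' for the binary tree. The honest bound for that tree is of order $\binom{\Delta k}{k}$, which is the roughly $e\Delta$ base you wanted to avoid. A smaller point: an $\Oh(k\Delta)$ validity check per output set costs an extra factor $\Delta$ that the bound does not absorb.

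\textbf{The missing idea is to enforce validity inside the search.} This makes valid connected swaps far rarer than connected sets, which is what the paper does, adapting Katzmann and Komusiewicz.

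Start only from black vertices $v\in S$. A connected swap with at least two vertices contains one, since $V\setminus S$ is independent; single white vertices are trivial to list. Whenever a black vertex enters $W$, all its white neighbours are \emph{forced} into $W$, so you never branch on white vertices. Instead, pick an unfinished white vertex $u\in W$ and branch in one of two ways:
\begin{enumerate}
\item \emph{Finish} $u$, that is, forbid all its neighbours outside $W$ for the rest of this branch.
\item Add one of the at most $\Delta-1$ neighbours of $u$ outside $W$. These are black, and $u$ already has a neighbour in $W$. Discard the branch if the new vertex is adjacent to a black vertex of $W$, and otherwise add its white neighbours.
\end{enumerate}
Every branching step either finishes a white vertex of $W$ or adds a new black vertex to $W$. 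So the depth is at most $|W|\le k$, and the branching factor is at most $1+(\Delta-1)=\Delta$. This gives at most $\Delta^k\le 2^k(\Delta-1)^k$ nodes per start vertex, which is where the stated bound comes from.
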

\begin{proof}[Proof sketch]
We adapt an algorithm of Katzmann~and~Komusiewicz~\cite{KK17}. 
Note that this algorithm does not consider the weights of vertices and only enumerates~$k$-swaps~$W$ with~$|W\cap S| = |W\setminus S| + 1$.
Unfortunately, in weighted graphs, a valid improving swap~$W$ does not have to fulfill~$|W\cap S| = |W\setminus S| + 1$.
To cope with this, we use the following algorithm.

The idea of our algorithm is to enumerate, for each vertex~$v\in S$, the valid connected $k$-swaps~$W$ containing~$v$. 
In the following, call the vertices in the vertex cover~$S$~\emph{black} and the vertices of the independent set~$V\setminus S$~\emph{white}. 
Recall that if a black vertex~$u$ is contained in a swap~$W$, then no black neighbors of~$u$ may be included in~$W$ and all white neighbors of~$u$ must be included in~$W$. 
In other words, white vertices are added automatically, when they are neighbors of the current swap. 
Now, to extend a current swap~$W$, we pick a white vertex~$u$ in~$W$ and choose 1) to exclude all neighbors of~$u$ outside of~$W$  to be swapped in all recursive calls or 2) to pick one of the at most~$\Delta-1$ neighbors of~$u$ outside of~$W$ to be added to~$W$. 
In Case~1), we ``finish'' a white vertex, in the at most~$\Delta-1$ subcases of Case~2), we add a black vertex. 
The total number of white and black vertices is at most~$k$, and thus we may abort the search at a search tree depth of~$k$. 
The total search tree size is thus~$\Oh(\Delta^k)=\Oh(2^k\cdot (\Delta-1)^k)$ for each of the~$\Oh(n)$ initial vertices~$v$. 
We omit the discussion of the further polynomial parts of the running time.        \end{proof}

Since any instance~$I$ of~\LWVC has a~\goodswap if and only if it has a \emph{connected}~\goodswap (see~\Cref{connected}), \Cref{running time rule small swaps weighted} implies that~\LWVC can be solved in the following running time.

\begin{corollary}
\LWVC can be solved in $\Oh(2^k\cdot(\Delta-1)^{k}\cdot k^3 \cdot n)$~time. 
\end{corollary}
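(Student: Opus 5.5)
The corollary follows by combining \Cref{connected} with the enumeration procedure of \Cref{running time rule small swaps weighted}. Given an instance $I=(G,\omega,S,k)$ of \LWVC, we first invoke \Cref{running time rule small swaps weighted} to enumerate all valid connected $k$-swaps for $S$ in $G$. For each enumerated swap $W$ we compute $\imp(W)=\omega(W\cap S)-\omega(W\setminus S)$. We answer yes if and only if some enumerated swap has positive improvement, and in that case we output this swap as a witness.

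\textbf{Correctness.} If some enumerated $W$ has $\imp(W)>0$, then $W$ is a valid improving $k$-swap, so $I$ is a yes-instance. Conversely, suppose $I$ is a yes-instance. By \Cref{connected} there is a connected \goodswap $W$. This is the key structural step: any improving swap splits into connected components, each of which is itself valid, and at least one of them is improving. Since $W$ is a valid connected $k$-swap, the enumeration outputs it, and the algorithm finds it.

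\textbf{Running time.} The only point needing care is that evaluating the improvement of each swap must not dominate the enumeration bound. Each enumerated swap has size at most $k$, so its improvement is computed in $\Oh(k)$ time. The number of enumerated swaps is bounded by the search-tree size, $\Oh(2^k(\Delta-1)^k n)$. Summing these evaluations gives $\Oh(2^k(\Delta-1)^k k n)$, which is absorbed into the enumeration cost of $\Oh(2^k(\Delta-1)^k k^3 n)$. Alternatively, the improvement can be maintained incrementally along the search tree, at $\Oh(1)$ extra cost per node. Either way the total running time is $\Oh(2^k\cdot(\Delta-1)^k\cdot k^3\cdot n)$.

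Small degenerate cases ($k\le 2$, or $\Delta\le 1$) can be handled separately via \Cref{lem: polytime k small} and \Cref{lem: polytime h small}. This keeps the bound intact when $\Delta-1\le 1$.
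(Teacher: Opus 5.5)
Your proposal is correct and matches the paper's argument. By Observation~\ref{connected} it suffices to enumerate all valid connected $k$-swaps via Proposition~\ref{running time rule small swaps weighted} and check each for positive improvement. Your remarks on the per-swap evaluation cost and on the degenerate cases are fine details that the paper leaves implicit.
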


To solve the gap version~\GLWVC we again encounter the problem that the sought \goodswap is not necessarily connected.
Hence, for~\GLWVC we show a related algorithm to the one we presented for~\GLVC using only one branching rule.
This rule is, more or less, an adaptation of~\Cref{rule:swaps no weight} to the weighted version.
Similar to the unweighted case, we want to compute a small collection~$\{W_j\}$ of representative improving swaps and branch to either take one of them or at least one neighbor of one of them.
As we aim for a running time depending on~$k$ and we are dealing with a weighted graph now, we cannot compute such a representative swap~$W_j$ for each~$j\in [1,d]$.
Instead, compared to the unweighted case, $W_j$ should now denote some valid connected improving~$j$-swap of largest improvement. 
That is, the subscript now indicates the size of the swap and not its improvement. 
As already mentioned, after finding such a swap for each~$j\in[1,k]$, we aim to branch into the cases of either swapping~$W_j$ or swapping some neighbor of~$W_j$.

Unfortunately, a result similar to~\Cref{correctness rule isolated} cannot be obtained, that is, in the weighted case, each~\goodswap might contain only one neighbor of~$W_1$, the best single-vertex swap as shown in \Cref{fig:w1}.
\begin{figure}[t]
  \centering
  \begin{tikzpicture}[yscale=.6, xscale = 1.5]
    \tikzstyle{knoten}=[circle,fill=white,draw=black,minimum size=7pt,inner sep=0pt]
    \tikzstyle{blocked}=[circle,fill=black,draw=black,minimum size=7pt,inner sep=0pt]

        \node[blocked] (u) [label=below:{$u$}, label=above:{$1$}] at (0,0) {};
    \node[blocked] (v) [label=below:{$v$}, label=above:{$3$}] at (1.5,0) {};
    \node[knoten] (w) [label=below:{$w$}, label=above:{$1$}] at (3,0) {};

        \draw[-, ultra thick] (u) to (v);
    \draw[-, ultra thick] (v) to (w);

        \node[left] at ($(u)!0.5!(v) + (-0.8,0)$) {$S$};

  \end{tikzpicture}

  \caption{The only~$2$-improving swap for~$S$ is~$\{v,w\}$. This swap avoids the only valid improving~$1$-swap~$\{u\}=W_1$ and contains only one neighbor of~$u$.}
\label{fig:w1}
\end{figure}
Hence, in the worst case, each of these branching cases decreases the parameter~$k$ only by one which would lead to a running time factor of~$(\Delta-1)^{2 k}$ instead of~$(\Delta-1)^{k}$.
Our goal is, thus, to decrease the number of cases in which the parameter is only decreased by one.
To this end, we analyze the swap~$W_1$ separately.
Let~$S_1 :=   \{v\in S \mid N(v)\subseteq S\}$ denote the set of vertices of improving~$1$-swaps for~$S$ in~$G$ and let~$v^*$ be the unique vertex of~$W_1$, that is, $v^*$ is a vertex of highest weight in~$S_1$. 
Consider a \goodswap~$W$ for~$I$.
Assume~$W$ contains some vertex~$w^*\neq v^*$ of~$S_1$, then either~$W\cap N[W_1] = W\cap N[v^*]$ is not empty or replacing~$w^*$ by $v^*$ also gives a \goodswap for~$I$. Consequently, after considering the cases where~$W$ contains some vertex of $N[W_1]$ we may assume that~$W$ contains no vertex of~$S_1$.
Hence, we can restrict our branching cases for~$j\geq 2$ to the ones in which we consider either swapping~$W_j$ or some neighbor~$w$ of~$W_j$ which is not contained in~$S_1$. This gives a swap of size at least 2 because~$w$ has a neighbor in~$V\setminus S$. 
Thus, we have~$|N[W_1]| \leq \Delta + 1$ cases where the parameter is decreased by only one and in all other cases, the parameter decrease is at least 2.  

To formulate the branching rule, for each~$i\in [1,k]$, let~$W_j$ denote some valid connected size-$j$ swap for~$I$ that has maximal improvement, if such a swap exists.
and call the collection~$\Wsf = \{W_j\}_1^k$ of these swaps a~\emph{weighted swap family}.
Recall that~$S_1 :=   \{v\in S \mid N(v)\subseteq S\}$ denotes the set of vertices of improving~$1$-swaps for~$S$ in~$G$. 

\begin{proposition}\label{correctness rule small swaps weighted}
Let~$I=(G=(V,E),\omega,S,k,d)$ be a yes-instance of~\GLWVC and let~$\{W_j\}_1^k$ be a weighted swap family.
There is a \goodswap~$W$ for~$I$ such that \textnormal{(i)}~$W$ is connected or \textnormal{(ii)}~$W \cap N[W_1] \neq \emptyset$ or \textnormal{(iii)} there is some~$W_j\in \Wsf$,~$2\le j\le {k}/{2},$ such that~$W_j \subseteq W$ or~$(W \cap N(W_j)) \setminus S_1 \neq \emptyset$.
\end{proposition}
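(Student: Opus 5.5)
We mirror the proof of \Cref{correctness rule small swaps}, with extra bookkeeping for $W_1$ and $S_1$. The plan is to start from a suitably extremal \goodswap, assume that (i), (ii) and (iii) all fail, and then build another \goodswap that does satisfy one of them.

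\textbf{Choice of the starting swap.} Among all \goodswap{}s for~$I$, take a minimal one~$W$, preferring one with the fewest connected components. Assume that $W$ is disconnected, that $W\cap N[W_1]=\emptyset$, and that for every $j\in[2,k/2]$ we have $W_j\not\subseteq W$ and $(W\cap N(W_j))\setminus S_1=\emptyset$.

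\textbf{Step 1: $W$ contains no vertex of $S_1$.} Suppose $w^*\in W\cap S_1$. Since $W\cap N[v^*]=\emptyset$, the set $(W\setminus\{w^*\})\cup\{v^*\}$ is still valid; this uses $N(v^*)\subseteq S$ and that no neighbor of $v^*$ lies in~$W$. Its improvement is at least that of~$W$, because $\omega(v^*)\ge\omega(w^*)$. This replacement puts a vertex of $N[W_1]$ into the swap, so (ii) holds for the new swap. Hence we may assume $W\cap S_1=\emptyset$. As a consequence, $W\cap N(W_j)\subseteq S_1$ for every $j\in[2,k/2]$ now forces $W\cap N(W_j)=\emptyset$.

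\textbf{Step 2: choose the component to replace.} Every component~$C$ of~$G[W]$ is improving, since otherwise deleting it contradicts minimality. Because $W$ has at least two components and total size at most~$k$, some component~$C$ has size $j:=|C|\le k/2$. Every vertex of $C\cap S$ has a white neighbor (as $C\cap S_1=\emptyset$ and $C$ is connected), so $C$ contains a white vertex and hence $j\ge 2$. Thus $W_j$ exists, and $\imp(W_j)\ge\imp(C)$ by maximality of $W_j$.

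\textbf{Step 3: $W\cap W_j=\emptyset$.} Suppose some component $C_W$ of $G[W]$ meets $W_j$. Since $W\cap N(W_j)=\emptyset$ and $W_j$ is connected, $C_W$ cannot leave $W_j$, so $C_W\subseteq W_j$; moreover $C_W\neq W_j$ because $W_j\not\subseteq W$. The difficulty is that, unlike in the unweighted case, a proper subset of $W_j$ having smaller improvement does not contradict anything directly. I would handle this differently: if $C_W\subsetneq W_j$ and $N(C_W)\cap W_j$ contains a vertex $u$, then $u\notin W$ while $u$ is a neighbor of $W$. Since $W$ is valid, a white neighbor of $W\cap S$ must lie in~$W$, so $u$ is black and is adjacent to a white vertex of $C_W$. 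Then $u\in W_j\cap N(W)$. The intended contradiction is that $N(W_j)\cap W=\emptyset$ should make $W\cup W_j$ a valid swap of size at most $k$ in place of $W$, with the component $C_W$ grown into $W_j$. This exchange argument, showing that we can always pass to a swap that either meets (iii) or avoids $W_j$, is the main obstacle. If it fails, the fallback is to choose $C$ as the smallest component and $W$ so as to maximize $\imp$.

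\textbf{Step 4: the exchange.} Once $W\cap W_j=\emptyset$ is established, set $W^*:=(W\setminus C)\cup W_j$. Then $|W^*|\le|W|\le k$ and $\imp(W^*)\ge\imp(W)\ge d$. The swap $W^*$ is valid because $W_j$ and $W\setminus C$ are valid and there is no black–black adjacency between them, since $W\cap N(W_j)=\emptyset$. Finally $W_j\subseteq W^*$ with $2\le j\le k/2$, so (iii) holds for $W^*$, which completes the proof.
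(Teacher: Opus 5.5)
Your Steps 1, 2 and 4 are sound and follow the paper's proof; Step 1 is the paper's case $\ell=1$, handled up front. Step 3, however, is left unproved. You correctly show that a component $C_W$ of $G[W]$ meeting $W_j$ satisfies $C_W\subsetneq W_j$, but you never reach a contradiction. The exchange you sketch does not supply one: $|W\cup W_j|$ need not be at most $k$, and even a valid larger swap would not contradict the minimality of $W$.

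The underlying problem is the choice in Step 2 of an arbitrary component of size at most $k/2$. For such a $C$, the hypotheses do not prevent a smaller component of $W$ from lying strictly inside $W_j$. In that case $(W\setminus C)\cup W_j$ loses $\imp(C_W)$ and can fail to be $d$-improving.

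The missing idea is to take $C$ to be a component of $G[W]$ of minimum cardinality, as the paper does. Such a $C$ automatically has $|C|\le k/2$, and your argument that $j=|C|\ge 2$ applies verbatim. Now $C_W\subsetneq W_j$ yields $|C_W|<|W_j|=j=|C|$, contradicting the choice of $C$. The contradiction is in terms of size, not improvement, so the weights cause no trouble. No further extremal choice of $W$ (maximum improvement, fewest components) is needed, and with this choice your Step 4 finishes the proof as written.
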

\begin{proof}
Since~$I$ is a yes-instance of~\GLWVC, there is a minimum \goodswap~$W$ for~$I$.
Suppose that~$W$ is not connected, $W \cap N[W_1] = \emptyset$, and $W_j \not\subseteq W$ and~$W \cap N(W_j)\setminus S_1 = \emptyset$ for each~$W_j\in \Wsf$, $2\le j\le k/2$, as otherwise the statement already holds.
Let~$C$ be the smallest connected component in~$G[W]$ and let~$\ell := |C|$.
Since~$W$ is not connected, $\ell \leq \lfloor \frac{k}{2}\rfloor$.
Note that~$W' :=   W \setminus C$ is a valid~$(d-\imp(C))$-improving~$(k-\ell)$-swap for~$S$ in~$G$.
Since~$C$ is a connected swap of size exactly~$\ell$, the swap~$W_\ell$ exists. 
Recall that~$W_\ell$ is some valid connected swap of size exactly~$\ell$ for~$I$ that maximizes~$\imp(W_\ell)$.
Hence, the improvement~$\imp(W_\ell)$ of~$W_\ell$ is at least the improvement~$\imp(C)$ of~$C$. 
In the following, we show that~$W^* :=   W' \cup W_\ell$ is a \goodswap for~$I$.

Recall that~$W_\ell \not\subseteq W$ and~$(N(W_\ell) \cap W) \setminus S_1 = \emptyset$.
Further, since each vertex of~$S_1\cap W$ is isolated in~$G[W]$, $W_\ell \cap W$ is a proper subset of~$W_\ell$ and~$W_\ell$ is a valid swap for~$S$ in~$G$.
Hence, $W_\ell \cap W = \emptyset$, as otherwise~$C$ is not the smallest connected component in~$G[W]$ by the fact that~$W \cap N(W_\ell)\setminus S_1 = \emptyset$.
Note that~$W^*$ is a~$d$-improving~$k$-swap for~$S$ in~$G$.
It remains to show that~$W^*$ is valid.
Since~$W'$ and~$W_\ell$ are both valid, it follows that~$W^*$ is valid if~$W' \cap N(W_\ell \cap S) = \emptyset$.
By assumption, this is the case for~$\ell = 1$.
Moreover by assumption, $W' \cap N(W_\ell \cap S) \setminus S_1 = \emptyset$ if~$\ell > 1$.
Note that since every valid connected swap containing some vertex of~$S_1$ has size~$1$, $W' \cap  S_1 \neq \emptyset$ would then contradict the assumption that~$C$ is the smallest connected swap in~$G[W]$ with~$|C| = \ell > 1$. 
\end{proof}

Hence, we derive the following branching rule.

\begin{brrule}\label{rule:swaps with weight}
Let~$I=(G=(V,E),\omega,S,k,d)$ be an instance of~\GLWVC such that there is no connected \goodswap for~$I$.
Moreover, let~$\Wsf:=\{W_j\}_1^k$ be a given weighted swap family.
If~$W_1$ exists, branch into the case of swapping~$W$ for each swap~$W\in \{\{w\}\mid w\in N(W_1)\}$.
Additionally, for each~$W_j\in \Wsf$, $2\le j\le k/2$, branch into the case of swapping~$W$ for each swap~$W\in \{W_j\} \cup \{\{w\} \mid w \in N(W_j) \setminus S_1\}$.
\end{brrule}

With this branching rule, we can now present the algorithm for~\GLWVC.

\begin{theorem}\label{glwvc delta k d}
\GLWVC can be solved in~$\Oh(k!\cdot (\Delta-1)^{k}\cdot n)$ time.
\end{theorem}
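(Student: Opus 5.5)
The algorithm mirrors the proof of \Cref{glvc delta k d}, with \Cref{rule:swaps with weight} as the single branching rule. Border cases are handled first. If $k\le 2$, we use \Cref{lem: polytime k small}. If $h(G)\le 1$ (in particular if $\Delta\le 1$), we use \Cref{lem: polytime h small}. If $\Delta(G)=2$, we compute a tree decomposition of width at most $2$ in linear time and invoke \Cref{thm width and k fpt}, which takes $\Oh((2^{k+1}+k^2)\cdot n)\subseteq\Oh(k!\cdot n)$ time. So we may assume $k\ge 3$ and $\Delta\ge 3$.

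Otherwise, we enumerate all valid connected $j$-swaps for every $j\le k$ using \Cref{running time rule small swaps weighted}, in $\Oh(2^k(\Delta-1)^k k^3 n)$ time. While enumerating, we record for each size $j$ a swap $W_j$ of maximal improvement, which yields a weighted swap family $\Wsf$. We also check whether some connected swap is $d$-improving; if so, we answer yes. If not, we apply \Cref{rule:swaps with weight}, which is correct by \Cref{correctness rule small swaps weighted}. Each branch swaps a set and recurses on the swap-instance, and the preceding swap-instance lemma guarantees equivalence.

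The recursion is measured in $k$ alone. The branches of the first kind swap a single neighbor $w\in N(W_1)$, which decreases $k$ by at least $1$; there are at most $\Delta$ of them. The branches of the second kind swap either $W_j$ with $j\ge 2$, or a vertex $w\in N(W_j)\setminus S_1$. If $w\in S$, then $w\notin S_1$ forces a white neighbor into the extension. If $w\notin S$, then $|\{w\}|=1$, and we must treat this subcase separately. A white vertex in a minimal swap must be accompanied by a black neighbor, so I would instead branch on $w$ together with each of its at most $\Delta-1$ other black neighbors. Alternatively, and more simply, one can note that the extension argument behaves symmetrically and give this subcase an extra factor of $\Delta-1$ with a decrease of $2$. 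Either way every such branch decreases $k$ by at least $2$. Their number is at most $\sum_{2\le j\le k/2}(1+j(\Delta-1))\le \frac{k^2}{2}(\Delta-1)$, using that each vertex of the connected swap $W_j$ has at most $\Delta-1$ neighbors outside it. This gives the recurrence
\[
T(k)\le \Delta\cdot T(k-1)+\tfrac{k^2}{2}(\Delta-1)\,T(k-2)+c\cdot 2^k(\Delta-1)^k k^3 n .
\]

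The main obstacle is solving this recurrence to $\Oh(k!(\Delta-1)^k n)$, in particular the $\Delta\cdot T(k-1)$ term, since $\Delta>\Delta-1$. The approach is induction on $k$ with the hypothesis $T(k)\le C\cdot k!(\Delta-1)^k n$, as in the proof of \Cref{glvc delta k d}. The right-hand side then becomes $C(\Delta-1)^k n\,(k-1)!\bigl(\frac{\Delta}{\Delta-1}+\frac{k^2}{2(k-1)(\Delta-1)}\bigr)$ plus the additive term. This is bounded by $C\,k!(\Delta-1)^k n$ because $\frac{\Delta}{\Delta-1}\le\frac32$ and $\frac{k^2}{2(k-1)(\Delta-1)}\le\frac{k^2}{4(k-1)}$ when $\Delta\ge 3$, and $\frac32+\frac{k^2}{4(k-1)}\le k-\varepsilon$ for $k\ge 3$. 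The slack $\varepsilon$ absorbs the additive term $2^k k^3/C$ for a large constant $C$, in the same way as the factor $128$ in the earlier proof.

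If the white-vertex subcase does not fit cleanly, the fallback is to multiply its count by $\Delta-1$ but charge it a decrease of $3$ in $k$: a white vertex forces a black neighbor, which in turn forces its own white neighbor because it lies outside $S_1$. The same induction then still closes.
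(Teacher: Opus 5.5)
Your overall route matches the paper's proof: compute the weighted swap family by enumerating connected swaps, answer yes if some $W_j$ is $d$-improving, otherwise branch, and solve the recurrence in $k$ by induction on $C\cdot k!\,(\Delta-1)^k n$. However, one branch is missing. Condition (ii) of the correctness proposition is $W\cap N[W_1]\neq\emptyset$, with the \emph{closed} neighborhood. Branching only on $w\in N(W_1)$ does not cover good swaps that contain $v^*$ itself: since $N(v^*)\subseteq S$ and $W\cap S$ is independent, such a swap contains no vertex of $N(v^*)$.

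For example, take $S=V$ on the disjoint union of a $K_4$ with weight-$1$ vertices and two isolated vertices of weight $10$, with $k=3$ and $d=20$. Every connected swap is a single vertex, $N(W_1)=\emptyset$, and no $j$ satisfies $2\le j\le k/2$. So your algorithm answers no, although the two isolated vertices form a good swap. (The branching rule as printed has the same slip; the paper's running-time analysis branches over all of $N[W_1]$.)

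Hence the first family has up to $\Delta+1$ members with $k'\le k-1$, and you must bound $\frac{\Delta+1}{\Delta-1}\le 2$ instead of $\frac{3}{2}$. Your induction still closes, but not uniformly with your crude count, because $2+\frac{k^2}{4(k-1)}>3$ at $k=3$. Treat $k=3$ separately, since the second family is empty there. For $k\ge 4$, the slack $k-2-\frac{k^2}{4(k-1)}\ge\frac{2}{3}$ grows with $k$ and absorbs the additive term for a large enough constant $C$. Alternatively, use the paper's finer count $\frac{k^2+2k}{8}(\Delta-1)+\frac{k}{2}-\Delta$.

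The white-vertex subcase you hedge about never arises, so you can drop the alternatives. Your fallback is not sound anyway: the white neighbor forced by the black neighbor of $w$ may be $w$ itself, which gives a decrease of $2$, not $3$. In fact every valid swap $W$ satisfies $N(W)\subseteq S$. Take a vertex $u\notin S\cup W$ adjacent to some $x\in W$. Since vertices outside $S$ are pairwise non-adjacent, $x\in S$, and then the edge $\{u,x\}$ is not covered by $S\oplus W$. Together with $N(v^*)\subseteq S$, every single vertex you branch on lies in $S$. Each $w\in N(W_j)\setminus S_1$ has a neighbor outside $S$, so the extension of $\{w\}$ has size at least $2$. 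This is exactly the paper's justification for $k'\le k-2$ in the second family.
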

\begin{proof}
Let~$I=(G=(V,E), \omega, S,k,d)$ be an instance of~\GLVC.
If~$k \leq 2$, then we can solve~$I$ in $\Oh(n+m)$~time due to~\Cref{lem: polytime k small}.
If~$\Delta(G) = 2$, then the treewidth of~$G$ is at most~$2$ and we can compute a nice tree decomposition of~$G$ of width at most~$2$ in $\Oh(n+m)$~time and afterwards solve~$I$ in $\Oh((2^{k}+ k^2)\cdot n + m) \subseteq \Oh(k!\cdot n + m)$~time due to~\Cref{thm width and k fpt}.
Hence, we can assume in the following, that~$k \geq 3$ and~$\Delta(G) \geq 3$.

The algorithm now works as follows.
First, compute a weighted swap family~$\Wsf:=\{W_j\}_1^k$.
Due to~\Cref{running time rule small swaps weighted}, this can be done in~$\Oh(2^k\cdot(\Delta-1)^{k}\cdot k^3\cdot n)$~time.
Now, if there is some~$W_j\in \Wsf$ such that~$\imp(W_k) \geq d$, then~$I$ is a yes-instance of~\GLWVC.
Otherwise, there is no connected \goodswap~$W$ for~$I$.
Compute the set~$S_1 :=   \{v\in S \mid N(v)\subseteq S\}$ of possible improving swaps of size~$1$ and apply~\Cref{rule:swaps with weight}.
Due to~\Cref{correctness rule small swaps weighted}, this is correct.

It remains to show the running time.
Consider an application of application of~\Cref{rule:swaps with weight} and let~$\mathcal{I}_1$ denote the set of constructed swap-instances where a vertex of~$N[W_1]$ is swapped and let~$\mathcal{I}_{>1}$ denote the remaining constructed swap-instances. 
That is, 
$$\mathcal{I}_1 :=   \begin{cases}
\{\swap(I,\{w\}) \mid w \in N[W_1]\}& W_1~\text{exists},\\
\emptyset& \text{otherwise,}
\end{cases}$$ and $$\mathcal{I}_{>1} :=   \bigcup_{\substack{W_j\in \Wsf\\j\in [2,  \lfloor \frac{k}{2}\rfloor]}}\{\swap(I,W_j)\} \cup \{\swap(I,\{w\}) \mid w \in N(W_j) \setminus S_1\}.$$

Note that~$\mathcal{I}_1$ has size at most~$\Delta + 1$ and~$k' \leq k-1$ for each instance~$I'\in \mathcal{I}_1$.
Next, we bound the size of~$\mathcal{I}_{>1}$.
Since each considered~$W_j$ has size~$j$ and each vertex~$v\in W_j$ has at most~$\Delta - 1$ neighbors outside of~$W_j$, we can upper-bound the size of~$|\mathcal{I}_{>1}|$ as 
$$|\mathcal{I}_{>1}| \leq \sum_{j=2}^{\lfloor\frac{k}{2}\rfloor} (j \cdot (\Delta - 1) + 1) \leq \frac{\frac{k}{2} \cdot (\frac{k}{2}+1)}{2}\cdot (\Delta - 1)  + \frac{k}{2} - \Delta = \frac{k^2+2k}{8} \cdot (\Delta - 1) + \frac{k}{2} - \Delta.$$
Note that for each instance of~$\mathcal{I}_{>1}$, we have~$k' \leq k - 2$ because $|W_j| \geq 2$ and~$N(w)\setminus S \neq \emptyset$ for each vertex~$w\in N(W_j) \setminus S_1$.

Hence, the recurrence for the running time is
$$
T(k) \leq \left(\frac{k^2+2k}{8} \cdot (\Delta - 1) + \frac{k}{2} - \Delta\right) \cdot T(k-2) + (\Delta + 1)\cdot T(k-1) + \Oh(2^{k}\cdot (\Delta-1)^{k}\cdot k^2 \cdot n).
$$

We show by induction over~$k$ that there is a constant~$C$ such that~$T(k)\leq C \cdot k! \cdot (\Delta-1)^{k}\cdot n$.

First, observe that there is a constant~$c$ such that
\begin{enumerate}[label=\textbf{(\arabic*)}]
\item\label{enumaweight} \GLWVC can be solved in $c\cdot (n+m)$~time if~$k < 3$, 
\item\label{enumbweight} $T(k) \leq (\frac{k^2+2k}{8} \cdot (\Delta - 1) + \frac{k}{2} - \Delta) \cdot T(k-2) + (\Delta + 1)\cdot T(k-1)  + c\cdot 2^{k}\cdot (\Delta-1)^{k}\cdot k^3 \cdot n$.
\end{enumerate}

We set~$C :=   700 c$.
For the base case of~$k < 3$, the statement now holds directly due to~\ref{enumaweight}.
As the inductive step, we show that the statement holds for~$k$ if it holds for~$k-j$ for each~$j\in [1,k-1]$.

By the induction hypothesis and due to~\ref{enumbweight}, 
\begin{align*} 
T(k)& \leq&& \left(\frac{k^2+2k}{8} \cdot (\Delta - 1) + \frac{k}{2} - \Delta\right) \cdot T(k-2) \\
&&&+ (\Delta + 1)\cdot T(k-1)  + c\cdot 2^{k}\cdot (\Delta-1)^{k}\cdot k^3 \cdot n\\
&\leq&& C \cdot\left(\frac{k^2+2k}{8} \cdot (\Delta - 1) + \frac{k}{2} - \Delta\right) \cdot (k-2)! \cdot (\Delta-1)^{k-2}\cdot n\\
&&&+ C \cdot (\Delta + 1)\cdot (k-1)! \cdot (\Delta-1)^{k-1}\cdot n  +  c\cdot 2^{k}\cdot (\Delta-1)^{k}\cdot k^3 \cdot n \\
&=&& C \cdot (k-2)! \cdot (\Delta-1)^{k-2} \cdot \left(\frac{k^2+2k}{8} \cdot (\Delta - 1) + \frac{k}{2} - \Delta + (k-1)\cdot (\Delta^2 - 1)\right) \cdot n\\
&&&+ c\cdot 2^{k}\cdot (\Delta-1)^{k}\cdot k^3 \cdot n \\
&\overset{(*)}{\leq}&& C \cdot (k-2)! \cdot (\Delta-1)^{k} \cdot (k\cdot (k-1) - 1) \cdot n + c\cdot 2^{k}\cdot (\Delta-1)^{k}\cdot k^3 \cdot n \\
&\leq&& C \cdot k! \cdot (\Delta-1)^{k} \cdot n - C \cdot (k-2)! \cdot (\Delta-1)^{k}  \cdot n + c\cdot 2^{k}\cdot (\Delta-1)^{k}\cdot k^3 \cdot n \\
&\overset{(**)}{\leq}&& C \cdot k! \cdot (\Delta-1)^{k}\cdot n
\end{align*} 
Inequality~$(*)$ holds, since~$\frac{k^2+2k}{8} \cdot (\Delta - 1) + \frac{k}{2} - \Delta + (k-1)\cdot (\Delta^2 - 1) \leq (\Delta-1)^2\cdot (k \cdot (k-1) - 1)$ for all~$k \geq 3$ and~$\Delta \geq 3$.
Inequality~$(**)$ holds, since~$C = 700c$ and~$(k-2)! \geq \frac{1}{700} \cdot 2^k \cdot k^3$.

Hence, the whole algorithm runs in $\Oh(k! \cdot (\Delta-1)^{k}\cdot n)$~time.
\end{proof}

\subsection{Parameterizing Weighted Gap Local Search by $h$-Index}
Finally, we show that we can replace~$\Delta(G)$ in the above running time by the~$h$-index of~$G$. 
Recall that the~\emph{$h$-index}~$h(G)$ of a graph~$G$ is the largest integer such that~$G$ contains at least~$h(G)$ vertices of degree at least~$h(G)$. 
The idea behind this algorithm is to branch on all possibilities on how a hypothetical improving swap may intersect the set of high-degree vertices.
For each of these potential intersections~$X$, we compute the corresponding swap-instance and solve it with the help of~\Cref{glwvc delta k d} after removing the remaining high-degree vertices.
Intuitively, we want to avoid all possible valid swaps that contain any high-degree vertex outside of~$X$.
For each such  vertex~$v$, we have to consider two cases:
If~$v$ is contained in~$S$, we can simply remove~$v$.
Otherwise, if~$v\notin S$, we additionally have to remove each neighbor of~$v$ because each valid swap~$W$ that swaps any neighbor of~$v$ out of the vertex cover~$S$ has to also swap~$v$ into the vertex cover.

Based on this observation, we define for a given vertex set~$V' \subseteq V$ the~\emph{exclusion instance~$I'$ for~$V'$ and~$I$} as the instance of~\GLWVC, where all vertices of~$V'\cup N(V'\setminus S)$ are removed from~$I$. 
Formally, let~$I = (G,\omega,S,k,d)$ be an instance of~\GLWVC, then the exclusion instance~$I'$ of~\GLWVC for~$V'$ and~$I$ is defined as~$I' :=  (G',\omega,S', k,d)$, where~$G' :=   G - (V' \cup N(V'\setminus S))$ and~$S':=   S \cap V(G')$.
By the above argumentation, we derive the following property for exclusion instances.
\begin{lemma}\label{lem: remove subset}
Let~$I=(G=(V,E),\omega, S,k,d)$ be an instance of~\GLWVC and let~$V'\subseteq V$ be a set of vertices.
There is a \goodswap~$W$ for~$I$ that avoids~$V'$ if and only if the exclusion instance of~$V'$ and~$I$ is a yes-instance of~\GLWVC.
\end{lemma}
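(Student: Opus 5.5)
The plan is to prove both directions directly from the definition of the exclusion instance, using the basic facts about valid swaps from the preliminaries. The key fact is this: for every valid swap $W$ for $S$ in $G$, the set $N(W)\setminus S = N(W\cap S)\setminus S$ is contained in $W$. We say $W$ \emph{avoids} $V'$ if $W\cap V'=\emptyset$. Let $D := V'\cup N(V'\setminus S)$ be the removed set, so $G' = G - D$ and $S' = S\cap V(G')$.

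For the forward direction, let $W$ be a \goodswap for $I$ with $W\cap V'=\emptyset$. First I show that $W\subseteq V(G')$, i.e.\ that $W$ contains no vertex of $N(V'\setminus S)$. Suppose $u\in W$ has a neighbor $v\in V'\setminus S$. Since $v\notin S$ and $S$ is a vertex cover, $u\in S$. Validity of $W$ then forces $v\in N(W\cap S)\setminus S\subseteq W$, which contradicts $W\cap V'=\emptyset$. Hence $W\subseteq V(G')$. Next, $S'\oplus W = (S\oplus W)\cap V(G')$, and a vertex cover of $G$ restricted to an induced subgraph is a vertex cover of that subgraph. So $W$ is valid for $S'$ in $G'$. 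The size of $W$ and its improvement $\omega(W\cap S)-\omega(W\setminus S)$ are unchanged, because $W\cap S'=W\cap S$. Therefore $W$ is a \goodswap for $I'$.

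For the backward direction, let $W$ be a \goodswap for $I'$. Clearly $W$ avoids $V'$, and its size and improvement are the same in $I$. It remains to show that $S\oplus W$ is a vertex cover of $G$. Take an edge $\{a,b\}$ of $G$.
\begin{itemize}
\item If both endpoints lie in $V(G')$, the edge is covered because $S'\oplus W$ covers $G'$.
\item If neither endpoint lies in $W$, the edge is covered by $S$, which is unchanged on it.
\item Otherwise, some endpoint, say $a$, lies in $W\subseteq V(G')$ and $b\in D$.
\end{itemize}
In the last case, if $a\notin S$, then $a$ enters the cover, so the edge is covered. If $a\in S$, then $a$ leaves the cover, and I must show $b\in S\setminus W$. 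Since $b\in D$, either $b\in V'$ or $b\in N(V'\setminus S)$. In the first subcase, if $b\notin S$, then $a\in N(V'\setminus S)$ would have been removed, which contradicts $a\in V(G')$. So $b\in S$, and $b\notin W$ because $W\subseteq V(G')$. In the second subcase, $b$ has a neighbor in $V'\setminus S$, so $b\in S$ because $S$ is a vertex cover. Again $b\notin W$. In both subcases $b$ stays in the cover.

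The main obstacle is this boundary-edge case analysis in the backward direction: one must make sure that removing $N(V'\setminus S)$ never hides an independent-set neighbor of a vertex swapped out. The argument shows that every removed vertex adjacent to a surviving $S$-vertex is itself in $S$, so it stays in the cover. Everything else is bookkeeping on the sizes and improvements.
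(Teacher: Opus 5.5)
Your proof is correct and follows the same approach as the paper. In the forward direction, validity prevents $W$ from touching $N(V'\setminus S)$; in the backward direction, you check that $S\oplus W$ covers $G$. Your edge-by-edge case analysis spells out what the paper compresses into one observation: it suffices that no vertex of $V'\setminus S$ is adjacent to $W$, which holds by construction of $G'$.
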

\begin{proof}
Let~$I'=(G',S',k,d)$ be the exclusion instance of~$V'$ and~$I$.
Let~$W$ be a \goodswap for~$I$ that avoids~$V'$.
Since~$W$ is valid, $W$ contains no vertex of~$N(V'\setminus S)$.
Hence, $W$ is a \goodswap for~$I'$.

Let~$W$ be a \goodswap for~$I'$.
By definition of the exclusion instance~$I'$, $W$ is also a~$d$-improving~$k$-swap for~$S$ in~$G$ which avoids~$V'$.
It remains to show that~$S\oplus W$ is a vertex cover of~$G$.
This is the case if no vertex of~$V'\setminus S$ is adjacent to some vertex in~$W$.
By construction of~$I'$, no vertex in~$V(G')$ has a neighbor in~$V'\setminus S$.
Hence~$W$ is a \goodswap for~$I$ that avoids~$V'$.
\end{proof}

Based on the concept of exclusion instances and due to~\Cref{glwvc delta k d}, we can now present the algorithm for~\GLWVC parameterized by~$h(G)$ plus~$k$.

\begin{theorem}\label{thm: h index algo}
\GLWVC can be solved in $\Oh(k!\cdot (h-1)^{k}\cdot n)$~time.
\end{theorem}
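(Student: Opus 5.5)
Let $h := h(G)$ and let $H$ be the set of vertices of degree at least $h$ in~$G$. By the definition of the $h$-index, $G$ has fewer than $h+1$ vertices of degree at least $h+1$, so $|H'|\le h$ for the set $H'$ of vertices of degree at least~$h+1$. Every vertex outside $H'$ has degree at most~$h$. The plan is as follows.

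First, enumerate all candidate intersections $X\subseteq H'$ of a hypothetical \goodswap with~$H'$, restricting to $|X|\le k$ and to $X\cap S$ being an independent set. For each such $X$, build $I_X := \swap(I,X)$ and then the exclusion instance $I'_X$ of $H'\cap V(G')$ for $I_X$, where $G'$ is the graph of~$I_X$. Answer yes if and only if some $I'_X$ is a yes-instance.

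Correctness follows by combining the swap-instance lemma with \Cref{lem: remove subset}. If $W$ is a \goodswap with $W\cap H'=X$, then $W\setminus W'$ is a \goodswap for $\swap(I,X)$ that avoids the remaining vertices of~$H'$, where $W'$ is the extension of~$X$. One subtlety needs care: the extension $W'$ may contain vertices of $H'\setminus X$, namely white neighbors of~$X\cap S$. Such $X$ can be discarded, or the enumeration can be restricted to sets $X$ that are closed under this extension within~$H'$, since $W\cap H'$ is always closed in this way.

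Every $I'_X$ has maximum degree at most~$h$, since all vertices of~$H'$ have been removed. We solve it with \Cref{glwvc delta k d}, which takes $\Oh(k'!\,(h-1)^{k'} n)$ time with $k' = k-|W'|\le k-|X|$.

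The main obstacle is the running time. The enumeration yields up to $\sum_{i}\binom{h}{i}$ branches, and branch $i$ costs $(k-i)!\,(h-1)^{k-i}$. The product $\binom{h}{i}(k-i)!(h-1)^{k-i}$ is at most about $\frac{h^i}{i!}(k-i)!(h-1)^{k-i}$, which can be bounded by $k!\,(h-1)^k$ up to constants provided $h^i\le c\,(h-1)^i i!\binom{k}{i}$ and the sum over~$i$ converges. A mismatch between $h$ and $h-1$ remains, and it costs a factor of $(h/(h-1))^i\le 2^i$. I would try to absorb it by using $\Delta\le h$ and paying $(\Delta-1)^{k-i}\le (h-1)^{k-i}$, and by bounding $\binom{h}{i}\le (h-1)^i\cdot\frac{h}{(h-1)\,i!}\cdot$const via \Cref{subsets of size k}-type estimates, giving $\sum_i \frac{(k-i)!\,C}{i!}(h-1)^k\le C' k!(h-1)^k$.

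The case $h\le 1$ is handled by \Cref{lem: polytime h small}. Small $h$, such as $h=2$, may need a separate treewidth-style argument via \Cref{thm width and k fpt}, or simply absorbing constants. Computing $H'$ and building each instance takes linear time per branch, which is consistent with the claimed factor of~$n$.
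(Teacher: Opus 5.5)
Your plan matches the paper's proof. You enumerate $X=W\cap H'$ over the at most $h$ vertices of degree at least $h+1$, discard any $X$ whose extension contains further vertices of $H'$, and pass to $\swap(I,X)$ and then to the exclusion instance of the remaining high-degree vertices. Finally you call \Cref{glwvc delta k d} on a graph of maximum degree at most $h$.

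Your counting differs only cosmetically and needs no hedging. For $i\ge 1$ we have $\binom{h}{i}=\frac{h}{i!}\prod_{j=1}^{i-1}(h-j)\le\frac{h}{h-1}\cdot\frac{(h-1)^i}{i!}$. So the total is $\Oh((h-1)^k n\cdot\sum_i (k-i)!/i!)$, which is $\Oh(k!\,(h-1)^k n)$ because $\sum_i 1/i!\le e$. The paper instead uses \Cref{subsets of size k}, i.e.\ a bound $256(h-1)^j/j^2$, together with $\sum_j 1/j^2<\infty$; either works. Your separate concern about $h=2$ is also unnecessary: then $(h-1)^k=1$, and \Cref{glwvc delta k d} already covers $\Delta\le 2$.

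The one step that fails as written is ``building each instance takes linear time per branch.'' The edge count $m$ can be $\Theta(hn)$, for instance when the high-degree vertices are adjacent to everything. Then the roughly $h^k/k!$ branches with $|X|=k$ cost about $h^{k+1}n/k!$ in total. This exceeds $k!\,(h-1)^k n$ once $h\gg (k!)^2$.

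The paper avoids this by noting that for $|X|=k$ the residual budget is $k'\le 0$, so the branch is trivial. Concretely, precompute an adjacency matrix of $G[H']$ and, for each vertex, its number of independent-set neighbours. Then in $\Oh(k^2)$ time you can check that $X\cap S$ is independent, that $N(X\cap S)\setminus S\subseteq X$, and that $\imp(X)\ge d$. For $|X|<k$ the $\Oh(hn)$ construction cost is harmless, because the subsequent call already carries a factor $(h-1)^{k-|X|}\ge h-1$. With this adjustment your argument is complete.
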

\begin{proof}
Let~$I=(G=(V,E),\omega,S,k,d)$ be an instance of~\GLWVC.
If~$k\leq 2$, the statement holds due to~\Cref{lem: polytime k small}.
Otherwise, we compute the~$h$-index~$h(G)$ of~$G$ and the set of vertices~$H$ with degree at least~$h(G)+1$ in $\Oh(n+m)$~time.
If~$h(G) = 2$, then the treewidth of~$G$ is at most~$4$ and we can compute a nice tree decomposition of~$G$ of width at most~$4$ in $\Oh(n+m)$~time and afterwards solve~$I$ in $\Oh((4^{k}+ k^2)\cdot n + m) \subseteq \Oh(k!\cdot n + m)$~time due to~\Cref{thm width and k fpt}.
Hence, we assume in the following that~$k \geq 3$ and~$h(G) \geq 3$.
Since~$h(G)$ is the~$h$-index of~$G$, $H$ contains at most~$h(G)$ vertices.
The idea is to consider all possibilities for the intersection of a \goodswap~$W$ with~$H$ and afterwards solve the resulting swap-instance with the algorithm of~\Cref{glwvc delta k d}.
To obtain a linear running time for instances where the~$h$-index and~$k$ are both constants, we initially compute the adjacency matrix of~$G[H]$ in $\Oh(h(G)^2 + m)$~time. 
For each~$W_H \subseteq H$ of size at most~$k$, we check whether~$W_H \cap S$ is an independent set and, if this is the case, compute the swap-instance~$I' :=   (G',\omega',S',k',d') :=   \swap(I, W_H)$ in $\Oh(k^2)$~time, if~$I'$ is a yes-instance of~\GLWVC, answer yes.

Note that if~$W_H$ has size exactly~$k$, then~$k' \leq 0$, which makes $I'$ a trivial instance of~\GLWVC that can be solved in $\Oh(1)$~time.

Otherwise, if~$W_H$ has size less than~$k$, we solve~$I'$ as follows.
Since we search for a \goodswap~$W$ for~$I$ with~$W \cap H = W_H$, no other vertex of~$H$ is contained in~$W$.
Hence, we can ignore choices of~$W_H$, where~$N(W_H\cap S)\cap H \not\subseteq W_H$.
For each remaining choice of~$W_H$, we can compute the exclusion instance~$I_{W_H}$ of~$H\cap V(G')$ and~$I'$ in $\Oh(h(G) \cdot n)$~time. 
Recall that due to~\Cref{lem: remove subset}, $I_{W_H}$ is a yes-instance of~\GLWVC if and only if there is a \goodswap for~$S'$ that avoids~$H\cap V(G')$.
Let~$G_{W_H}$ denote the graph of~$I_{W_H}$.
Since~$G_{W_H}$ is a subgraph of~$G$ and contains only vertices of~$V\setminus H$, $G_{W_H}$ has a maximum degree of~$h(G)$.
Moreover, note that~$k' \leq k - |W_H|$.
Consequently, $I_{W_H}$ can be solved in $\Oh((k-|W_H|)! \cdot (h(G) - 1)^{k-|W_H|} \cdot n)$~time due to~\Cref{glwvc delta k d}.
Note that this dominates the running time to compute~$I_{W_H}$ since~$|W_H| < k$.
 
The total running time of the algorithm can be upper-bounded by 
\begin{align*}
\Oh(n+m + h(G)^2) + &\Oh((h(G)-1)^{k} \cdot k\cdot n)+ \sum_{j= 1}^{\min(k-1, h(G))} \Oh(h(G)-1)^{j} / j^2) \cdot \\
& (\Oh(k^2 + h(G)\cdot n) + \Oh((k-j)!\cdot (h(G) - 1)^{k-j} \cdot n) + m)),
\end{align*} 
since due to~\Cref{subsets of size k}, for each~$j \geq 1$, $H$ contains at most~$\Oh((h(G)-1)^j/j^2)$ subsets of size at most~$j$.
Since~$m \in \Oh(h(G) \cdot n)$ and~$\sum_{j= 1}^k \frac{1}{j^2} \in \Oh(1)$, we obtain the stated running time.
\end{proof}

\section{Using Modular Decompositions}

Next, we provide FPT-algorithms that use modular decompositions which, roughly speaking, provide a hierarchical view of the different neighborhoods in a graph~$G$.

\subsection{Modular Decompositions}
A \emph{modular decomposition} of a graph~$G=(V,E)$ is a pair~$(\mT,\beta)$ consisting of a rooted tree~$\mT=(\mv, \mathcal{A}, x^*)$ with root~$x^*\in \mv$ and a function~$\beta$ that maps each node~$x\in \mv$ to a graph~$\beta(x)$.
If~$x$ is a leaf of~$\mT$, then~$\beta(x)$ contains a single vertex of~$V$ and for each vertex~$v\in V$, there is exactly one leaf~$\ell$ of~$\mathcal{T}$ such that the graph~$\beta(\ell)$ consists only of~$v$.
If~$x$ is not a leaf node, then the vertex set of~$\beta(x)$ is exactly the set of child nodes of~$x$ in~$\mathcal{T}$.
Moreover, let~$V_x$ denote the set of vertices of~$V$ contained in leaf nodes of the subtree rooted at~$x$.
Formally, $V_x$ is defined as~$V(\beta(\ell))$ for leaf nodes~$\ell$ and recursively defined as~$\bigcup_{y\in V(\beta(x))} V_y$ for each non-leaf node~$x$.
Moreover, we define~$G_x = (V_x,E_x) :=   G[V_x]$.
A modular decomposition has the property that for each non-leaf node~$x$ and any pair of distinct nodes~$y\in V(\beta(x))$ and~$z\in V(\beta(x))$, $y$ and~$z$ are adjacent in~$\beta(x)$ if and only if there is an edge in~$G$ between each pair of vertices of~$V_y$ and~$V_z$ and~$y$ and~$z$ are not adjacent if and only if there is no edge in~$G$ between any pair of vertices of~$V_y$ and~$V_z$.
Hence, it is impossible that there are vertex pairs~$(v_1,w_1)\in V_y \times V_z$ and~$(v_2,w_2)\in V_y \times V_z$ such that~$v_1$ is adjacent to~$w_1$ and~$v_2$ is not adjacent to~$w_2$.

We call~$\beta(x)$ the~\emph{quotient graph} of~$x$.
A quotient graph is~\emph{prime} if there is no set~$A\subsetneq V(\beta(x))$ of size at least~$2$ such that all vertices of~$A$ have the same neighborhood in~$V(\beta(x))\setminus A$.
The~\emph{width of a modular decomposition} is the size of a largest vertex set of any quotient graph and the \emph{modular-width} of a graph~$G$, denoted by~$\mw(G)$, is the minimal width of any modular decomposition of~$G$.

\subsection{Parameterization by Modular-Width}
We now provide a dynamic programming algorithm over the modular decomposition of~$G$. The nodes of the decomposition are processed in a bottom-up manner. 
The idea is to consider for a node~$x$ the possibilities of how a swap may interact with the vertex sets that are represented by the vertices~$y$ of~$\beta(x)$. 
We use the fact that any valid swap of~$G$ must also correspond in the natural way to a valid swap of~$\beta(x)$. 
More precisely, if some vertex in the set represented by~$y$ goes to the independent set, then the vertex cover must include the vertex set represented by~$z$ for all neighbors~$z$ of~$y$ in~$\beta(x)$. 
\begin{theorem}
\GLWVC can be solved in $\Oh(\mw(G)^k \cdot k \cdot (\mw(G) + k) \cdot n + m)$~time.
\end{theorem}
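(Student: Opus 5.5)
We plan a bottom-up dynamic program over a modular decomposition $(\mT,\beta)$ of minimum width $\mw(G)$. Such a decomposition can be computed in $\Oh(n+m)$ time, and it has $\Oh(n)$ nodes. For $k\le 2$ we use \Cref{lem: polytime k small}. For each node $x$ and each $k'\in[0,k]$ we store the value $D_x[k']$, defined as the maximal improvement of a valid $k'$-swap $W\subseteq V_x$ for $S\cap V_x$ in $G_x$, or $-\infty$ if no such swap exists.

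To combine children we need a second quantity. Consider a swap $W$ and a child $y$ of $x$. In $\beta(x)$, the child $y$ is \emph{hit} if $W\cap V_y\cap S\neq\emptyset$, that is, some vertex of $V_y$ leaves the cover. If $y$ is hit and $z$ is a neighbor of $y$ in $\beta(x)$, validity of $W$ forces every vertex of $V_z\setminus S$ into $W$ and forbids every vertex of $V_z\cap S$. So we also store $F_x:=|V_x\setminus S|$ and $\omega(V_x\setminus S)$; this is the cost of being \emph{forced}. We further note whether $V_x\subseteq S$, or more generally whether $V_x\cap S$ is empty.

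Every child $y$ of $x$ then plays one of three roles: (a) untouched; (b) forced, meaning $W\cap V_y=V_y\setminus S$, with size $|V_y\setminus S|$ and improvement $-\omega(V_y\setminus S)$; or (c) active, meaning $W\cap V_y$ contains a vertex of $S$. An active child contributes $D_y[k_y]$ restricted to swaps that hit $V_y\cap S$. To make this restriction exact we refine the table to $D_x[k',b]$, where $b$ records whether $W\cap S\cap V_x\neq\emptyset$. A child that contains only white vertices of $W$ but no black ones can be treated as forced-like: a white vertex in $W$ without a hit neighbor is useless, and we may assume $W$ is minimal. So the roles are exactly untouched, forced, or active. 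Validity at node $x$ requires that the set $A$ of active children is an independent set in $\beta(x)$, and that every neighbor in $\beta(x)$ of $A$ is forced. Since an active child contributes at least one vertex, $|A|\le k$.

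The transition at $x$ enumerates every independent set $A\subseteq V(\beta(x))$ with $|A|\le k$. There are at most $\mw(G)^k$ such sets. For each $A$, the neighborhood $N_{\beta(x)}(A)$ is forced, with total cost computed in $\Oh(\mw(G))$ time. We then distribute the remaining budget among the active children by a knapsack-style max-plus convolution over $A$, which takes $\Oh(|A|\cdot k)$ per step, so $\Oh(k^2)$ in total for combining. Only one of the leftover children (not in $A\cup N(A)$) may become forced in a way that improves the result, but forcing a child never improves the swap, so those children stay untouched. This gives $\Oh(\mw(G)^k\cdot k\cdot(\mw(G)+k))$ work per node and $\Oh(\mw(G)^k\cdot k(\mw(G)+k)\cdot n+m)$ overall. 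The answer is read off as $\max_b D_{x^*}[k,b]\ge d$ at the root.

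The main obstacle is correctness of the role classification. We must argue that an optimal swap restricted to $V_y$ for an active child $y$ is again a valid swap of $G_y$, and that, conversely, gluing optimal child swaps with forced neighbors yields a valid swap of $G_x$. The key point is that all adjacency between different children is uniform, so the only cross-child constraints are exactly the forcing constraints described above. Enumerating only independent $A$ of size at most $k$ gives the $\mw(G)^k$ factor. We also need to handle leaves and the $b$-flag consistently, so that an active child genuinely contains a black vertex.
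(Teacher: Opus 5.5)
Your dynamic program is the one in the paper. It is a bottom-up pass over a minimum-width modular decomposition. At each node $x$ you enumerate independent sets of $\beta(x)$ as the children that lose a cover vertex. You charge $|V_y\setminus S|$ and $\omega(V_y\setminus S)$ for every neighbor $y$ of that set, and you distribute the remaining budget over the chosen children by a knapsack DP. Your correctness reasoning, that module uniformity makes forcing the only cross-child constraint, is exactly what the paper relies on.

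The flag $b$ is harmless but unnecessary. The paper plugs $D_y[k'']$ with $k''\ge 1$ directly into the knapsack. If the best sub-swap of a chosen child contains no cover vertex, gluing it with the forced neighbors still yields a valid swap of $G_x$. So the table is never overestimated, and the optimum is captured by choosing exactly the hit children.

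The step that fails as written is the running-time accounting, which is the whole content of this statement. Combining the tables of the children in $A$ for all budgets up to $k$ takes $|A|$ max-plus convolutions of arrays of length $k+1$. That is $\Theta(|A|\cdot k^2)$, up to $\Theta(k^3)$, not $\Oh(k^2)$. Similarly, computing $N_{\beta(x)}(A)$ and testing independence from scratch costs $\Oh(k\cdot\mw(G))$, not $\Oh(\mw(G))$. Multiplied by your count of $\mw(G)^k$ candidate sets, this gives $\Oh(\mw(G)^k\cdot k\cdot(\mw(G)+k^2))$ per node. That exceeds the claimed bound whenever $k^2\gg\mw(G)$.

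The paper repairs exactly this with \Cref{subsets of size k}. There are only $\leqBin{\mw(G)}{k}\le 256\,(\mw(G)-1)^k/k^2$ candidate sets, and the factor $1/k^2$ absorbs the extra cost. This gives $\Oh(\leqBin{\mw(G)}{k}\cdot k^3\cdot(k+\mw(G))\cdot n+m)=\Oh(\mw(G)^k\cdot k\cdot(\mw(G)+k)\cdot n+m)$.

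Alternatively, you could enumerate the sets $A$ along a prefix tree, each set extending its parent by one vertex. You would then maintain the running convolution, the independence test, and the forced cost incrementally at $\Oh(k^2+\mw(G))$ per set, which would actually justify your per-set figure. One of these two fixes is needed; the rest of your proof is fine.
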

\begin{proof}
Let~$I=(G=(V,E),\omega,S,k,d)$ be an instance of~\GLWVC.
First, we compute a modular decomposition~$(\mathcal{T}=(\mv,\mathcal{A}, x^*),\beta)$ of minimal width in~$\Oh(n+m)$~time~\cite{MS94}. 
Note that~$\mT$ has $\Oh(n)$~nodes.
Next, we describe a dynamic program on the modular decomposition~$(\mathcal{T}, \beta)$ to solve~\GLWVC.

For each node~$x\in \mv$ in the modular decomposition, we have a dynamic programming table~$D_x$.
The table~$D_x$ has entries of type~$D_x[k']$ for~$k' \in [0, k]$.  
 Each entry~$D_x[k']$ stores the maximal improvement~$\imp_S(W)$ of a valid~$k'$-swap~$W \subseteq V_x$ for~$S\cap V_x$ in~$G_x$.

 Next, we describe how to fill the dynamic programming tables.
Let~$\ell$ be a leaf node of~$\mT$ and let~$v$ be the unique vertex of~$V(\beta(\ell))=V_\ell$. 
We fill the table~$D_\ell$ by setting~$$D_\ell[k']:=   
\begin{cases}
    0         & v\notin S \lor k' = 0\\
    \omega(v) & v\in S \land k' > 0
\end{cases}$$~for each~$k'\in[0,k]$.

To compute the entries for all remaining nodes~$x$ of~$\mT$, we use an auxiliary  table~$Q_{S_x}$.
Let~$S_x$ be an independent set in~$\beta(x)$ and let~$S_x(i)$ denote the~$i$th vertex of~$S_x$ according to some arbitrary but fixed ordering with~$i\in[1,|S_x|]$.
Moreover, let~$V^{\leq i}_x = \bigcup^{i}_{j=1} V_{S_x(j)}$.
Recall that~$S_x(j)$ is both a vertex of~$\beta(x)$ and a child node of~$x$ in~$\mT$ and that~$V_{S_x(j)}$ denotes the set of all vertices of~$G$ that are contained in the subtree of~$\mT$ rooted at~$S_x(j)$.
The dynamic programming table~$Q_{S_x}[i,k']$ has entries for~$i\in [1, |S_x|]$ and~$k'\in[0,k]$ and stores the maximal improvement of a valid~$k'$-swap~$W$ for~$S \cap V^{\leq i}_x$ in~$G[V^{\leq i}_x]$, such that for each~$j\in[1, i]$, at least one vertex of~$S \cap V_{S_x(j)}$ is contained in~$W$.
We set~$$Q_{S_x}[i, k'] :=   \begin{cases}
- \infty & k' < i,\\
D_{S_x(1)}[k']     & i = 1,\text{and}\\
\max_{1 \leq k'' \leq k'} D_{S_x(i)}[k''] + Q_{S_x}[i-1, k'-k''] & \text{otherwise}.
\end{cases}$$ 
Since we are looking for a~$k'$-swap~$W$ that contains for each~$j\in[1,i]$ at least one vertex of~$S\cap V_{S_x(j)}$, the value of the table is set to~$-\infty$ if~$k' < i$, since there is no such swap of size at most~$k' < i$. 
If~$k' \geq i$, then the value of the table is determined by finding the best way to swap~$k''$ vertices of~$V_{S_x(i)}$ and~$k'-k''$ vertices of~$V^{\leq i-1}_x$.

The entries for~$D_x$ can then be computed as follows:
$$D_x[k'] :=   \max_{\stackrel{S_x \subseteq V(\beta(x))}{\stackrel{|W^*| \leq k'}{S_x~\text{is independent}}}} Q_{S_x}[1, k' - |W^*|] - \omega(W^*)$$
where~$W^* :=   \bigcup_{y\in N_x(S_x)}(V_y \setminus S)$.
Since we are looking for a swap~$W$ where for each~$y\in S_x$, at least one vertex of~$S\cap V_y$ is contained in~$W$ and thus leaves the vertex cover, each vertex of~$W^*$ has to be added to obtain a vertex cover. 

The maximal improvement of any valid~$k$-swap for~$S$ in~$G$ can be found in~$D_{x^*}[k]$, where~$x^*$ is the root of the modular decomposition.
Moreover, the corresponding~$k$-swap can be found via traceback.

Next, we analyze the running time.
For each non-leaf node~$x$, and each independent set~$S_x$ of size at most~$k$ in~$\beta(x)$, there are~$\Oh(k^2)$ table entries in~$Q_{S_x}$ and each of these entries can be computed in $\Oh(k)$~time.
Recall that for a set of size~$x$, $\leqBin{x}{k}$ denote the number of different subsets of size at most~$k$.

Since each quotient graph has~$\Oh(\leqBin{\mw(G)}{k})$ many independent sets of size at most~$k$, all entries of all tables~$Q_{S_x}$ can be computed in $\Oh(\leqBin{\mw(G)}{k}\cdot k^3\cdot n)$~time, since the modular decomposition has $\Oh(n)$~quotient graphs.
For each node~$x$, there are~$\Oh(k)$ table entries in~$D_x$.
We will show that we can compute each of them in $\Oh(k^2\cdot (\mw(G) + k))$~time.
To this end, we precompute for each node~$x$ the size~$|V_x \setminus S|$ and the weight~$\omega(V_x \setminus S)$ to compute~$|W^*|$ and~$\omega(W^*)$ in $\Oh(k)$~time afterwards.
Since for all non-leaf nodes~$x$, $V_x \setminus S = \bigcup_{y\in V(\beta(x))} (V_y \setminus S)$, we can compute~$|V_x \setminus S|$ as~$\sum_{y\in V(\beta(x))}|V_y \setminus S|$ and~$\omega(V_x \setminus S)$ as~$\sum_{y\in V(\beta(x))}\omega(V_y \setminus S)$.
This can be done in $\Oh(\mw(G) \cdot n)$~time since the modular decomposition has $\Oh(n)$~quotient graphs.
Hence, for an independent set~$S_x$ of size at most~$k$, we can compute~$|W^*|$ and~$\omega(W^*)$ in $\Oh(k)$~time.
Since we can enumerate all subsets~$S_x$ of size at most~$k$ of~$V(\beta(x))$ in~$\Oh(\leqBin{\mw(G)}{k})$~time and check in~$\Oh(\mw(G) \cdot k)$~time if~$S_x$ is independent in~$\beta(x)$, we can compute~$D_x[k']$ in $\Oh(\leqBin{\mw(G)}{k}\cdot k^2 \cdot (k+\mw(G)))$~time.
Consequently, we can compute all entries of the dynamic programming tables in~$\Oh(\leqBin{\mw(G)}{k}\cdot k^3 \cdot (k+\mw(G)) \cdot n + m)$~time, which is $\Oh(\mw(G)^k \cdot k \cdot (\mw(G) + k) \cdot n + m)$~time due to~\Cref{subsets of size k}.

Since the value of~$Q_{S_x}[1, k']$ is only evaluated once during the whole computation of this dynamic programming algorithm, we can remove the table~$Q_{S_x}$ after evaluating~$Q_{S_x}[1, k']$ for each~$k'$.
Consequently, this algorithm also only uses polynomial space.
\end{proof}

Note that with a slight modification of the dynamic programming algorithm, we can improve the running time for the special case of~\GLVC.
Recall that~$I$ is a yes-instance of~\GLVC if and only if there is a \goodswap~$W$ for~$I$ with~$|W\cap S| \leq \frac{k+d}{2}$.
Hence for~\GLVC, it is sufficient in the computation of~$D_x[k']$ to only check for independent sets~$S_x$ in~$\beta(x)$ of size at most~$\min(k', \frac{k+d}{2})$.
Thus, we obtain the following.

\begin{corollary}\label{cor:modwidthgap}
\GLVC can be solved in $\Oh(\mw(G)^{\frac{k+d}{2}} \cdot k \cdot (\mw(G) + k) \cdot n + m)$~time.
\end{corollary}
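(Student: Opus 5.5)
We take the dynamic program from the proof of the preceding theorem over a minimum-width modular decomposition, computed in $\Oh(n+m)$ time~\cite{MS94}, and change only the enumeration in the computation of~$D_x[k']$. There we range over independent sets~$S_x$ of~$\beta(x)$ with~$|S_x| \le \min(k', \frac{k+d}{2})$ instead of~$|S_x| \le k'$. The leaf tables and the auxiliary tables~$Q_{S_x}$ keep their definitions. We then answer yes if and only if~$D_{x^*}[k] \ge d$.

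\textbf{Correctness.} Since we only restrict a maximum, every stored value is still the improvement of some valid swap, so a yes-answer is sound. For the converse, take a minimal \goodswap~$W$ for~$I$. Because~$\imp(W) = |W\cap S| - |W\setminus S| \ge d$ and~$|W| \le k$, we get~$|W\cap S| \le \frac{k+d}{2}$.

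We follow the subswaps~$W \cap V_x$ through the recursion. At node~$x$, the set~$S_x$ chosen for~$W\cap V_x$ consists of the children~$y$ of~$x$ with~$W\cap S\cap V_y \neq \emptyset$. The sets~$V_y$ of distinct children are disjoint, so~$|S_x| \le |W \cap S \cap V_x| \le \frac{k+d}{2}$. Hence the set~$S_x$ witnessing~$W \cap V_x$ is never discarded.

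What remains is to check that the restricted~$D_x$ still dominates the improvement of the restriction of~$W$ at every node. This follows by the same induction as in the unrestricted proof, since that induction only uses the witnessing~$S_x$ at each node. (Alternatively, \Cref{lem: consider only small subsets} with~$X$ the union of the relevant~$V_y$ gives the same bound.)

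\textbf{Running time.} The count of independent sets per quotient graph drops from~$\leqBin{\mw(G)}{k}$ to~$\leqBin{\mw(G)}{(k+d)/2}$. The per-set costs are unchanged: $\Oh(k^3)$ for~$Q_{S_x}$, and~$\Oh(k^2(\mw(G)+k))$ for the independence check and for computing~$|W^*|$ and~$\omega(W^*)$ across all~$k'$. Applying \Cref{subsets of size k} with exponent~$\frac{k+d}{2}$ replaces~$\mw(G)^k$ by~$\mw(G)^{\frac{k+d}{2}}$ and absorbs the~$k^2$ factor as before, giving the stated bound.

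The main obstacle is the border cases. If~$\frac{k+d}{2} < 2$, or if~$\mw(G) < 3$ so that \Cref{subsets of size k} does not apply, I would handle the instance separately. For~$k+d \le 4$, \Cref{lem: polytime k d small} solves it in~$\Oh(n+m)$ time. For~$\mw(G) \le 2$, the number of subsets is constant, so the total time is~$\Oh(k^3 n + m)$, which is within the bound.
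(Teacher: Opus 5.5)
Your proposal is correct and follows essentially the paper's argument. You restrict the enumeration in $D_x[k']$ to independent sets $S_x$ with $|S_x|\le\min(k',\frac{k+d}{2})$, observe that a minimal good swap $W$ has a witnessing $S_x$ of size at most $|W\cap S|\le\frac{k+d}{2}$ at every node, and redo the subset count with exponent $\frac{k+d}{2}$. One small slip: $|W\cap S|\le\frac{k+d}{2}$ does not follow from $\imp(W)\ge d$ and $|W|\le k$ alone. It needs $\imp(W)=d$, which holds for a minimal good swap because otherwise removing any vertex of $W\cap S$ leaves a smaller good swap; alternatively, your parenthetical route via the lemma on considering only small subsets (with $X=V$) gives the bound directly.
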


\subsection{Maximum Degree of the Modular Decomposition}
We also obtain an FPT-algorithm for~\GLVC{} for a new parameter that is upper-bounded by the maximum degree~$\Delta(G)$ and by the modular-width~$\mw(G)$. 
We call this parameter the \emph{maximum modular degree} and it is defined by taking the maximum degree over all quotient graphs of a modular decomposition of minimum width.
\begin{definition}
  Let~$(\mT=(\mv, \mathcal{A}, x^*),\beta)$ be a modular decomposition of a graph~$G$.
  Then the \emph{maximum modular degree} of~$(\mT,\beta)$ is $\Delta_{\md}(\mT,\beta):=  \max_{x\in \mv} \Delta(\beta(x))$.
  Moreover, the \emph{maximum modular degree}~$\Delta_{\md}(G)$ of~$G$ is the maximum modular degree of a modular decomposition~$(\mT',\beta')$ of~$G$ that minimizes~$\Delta_{\md}(\mT',\beta')$. 
\end{definition}
Since~$\mw(G)$ is the largest vertex count of any quotient graph, we have~$\Delta_{\md}(G) < \mw(G)$.
Moreover, for each node~$x\in V$, the graph~$\beta(x)$ is isomorphic to an induced subgraph of~$G$, which implies that~$\Delta_{\md}(G)\le \Delta(G)$.

Before describing the algorithm, we show that~$\Delta_{\md}(G)$ can be easily computed by inspecting a minimum-width modular decomposition. 
\begin{proposition}
Let~$(\mT=(\mv, \mathcal{A}, x^*),\beta)$ be a modular decomposition of a graph~$G$ of minimum width where the quotient graph~$\beta(x)$ is prime for each~$x\in \mv$.
Then, $\Delta_{\md}(\mT,\beta) = \Delta_{\md}(G)$.
\end{proposition}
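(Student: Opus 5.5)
The plan is to show that for \emph{every} modular decomposition~$(\mT',\beta')$ of~$G$, every quotient graph~$\beta(x)$ of the given prime decomposition~$(\mT,\beta)$ is isomorphic to an induced subgraph of some quotient graph~$\beta'(x')$. Since taking an induced subgraph cannot increase the maximum degree, this gives $\Delta_{\md}(\mT,\beta)\le \Delta_{\md}(\mT',\beta')$ for all~$(\mT',\beta')$. The reverse inequality $\Delta_{\md}(G)\le\Delta_{\md}(\mT,\beta)$ holds by the definition of~$\Delta_{\md}(G)$ as a minimum, since~$(\mT,\beta)$ is itself a minimum-width decomposition. Together these give equality.

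Two standard facts will be used. First, for every node~$z$ of any modular decomposition, $V_z$ is a module of~$G$: every vertex outside~$V_z$ is adjacent to all or to none of~$V_z$. This follows by induction up the tree from the defining property of quotient graphs. Second, adjacency of~$y_i,y_j$ in~$\beta(x)$ coincides with adjacency of any $u\in V_{y_i}$, $w\in V_{y_j}$ in~$G$.

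Fix a non-leaf node~$x$ of~$\mT$ with children~$y_1,\dots,y_p$, and assume $p\ge 3$. If $p\le 2$, then $\Delta(\beta(x))\le 1$, and an edge of~$\beta(x)$ is an edge of~$G$ between two vertices. It therefore appears as an edge of some quotient graph of~$\mT'$, namely the one at the lowest common ancestor of its endpoints, which settles this case. Let~$x'$ be the lowest node of~$\mT'$ with $V_x\subseteq V_{x'}$. No child of~$x'$ then contains all of~$V_x$. For a child~$z$ of~$x'$, let $A_z:=\{i : V_{y_i}\cap V_z\neq\emptyset\}$. The key claim is $|A_z|\le 1$ for every child~$z$.

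To prove the claim, first suppose $2\le |A_z|<p$. Every $y_j$ with $j\notin A_z$ has $V_{y_j}$ disjoint from the module~$V_z$, so~$y_j$ is adjacent to all~$y_i$ with $i\in A_z$ or to none of them. This exhibits a proper subset of size at least~$2$ with a common outside neighbourhood, contradicting primality of~$\beta(x)$. Next suppose $A_z$ is all of $[1,p]$. Take $u\in V_x\setminus V_z$, say $u\in V_{y_j}$. Then~$u$ is uniform towards~$V_z$, which meets every~$V_{y_i}$ with $i\neq j$. Hence~$y_j$ is isolated or universal in~$\beta(x)$, and $V(\beta(x))\setminus\{y_j\}$ (of size $\ge 2$) violates primality.

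With the claim established, every $V_{y_i}$ is contained in a single child~$z_i$ of~$x'$. The~$z_i$ are pairwise distinct, since a child containing two of them would have $|A_z|\ge 2$. Choosing $v_i\in V_{y_i}$, the map $y_i\mapsto z_i$ preserves adjacency and non-adjacency, because both are witnessed by the edge or non-edge $\{v_i,v_j\}$ in~$G$. So $\beta(x)\cong\beta'(x')[\{z_1,\dots,z_p\}]$. I expect the main obstacle to be the case analysis in the claim, in particular the case $A_z=[1,p]$, where care is needed to extract a violating set from the module property. The degenerate case $p\le2$ also needs separate handling.
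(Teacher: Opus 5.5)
Your strategy matches the paper's. You take the lowest node $x'$ of $\mT'$ containing the relevant vertices, use primality of $\beta(x)$ to show that no child of $x'$ meets two different parts, and read off an induced copy of $\beta(x)$ in $\beta'(x')$. Your proof of the key claim $|A_z|\le 1$ is correct in both subcases.

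One step does not follow as written. From $|A_z|\le 1$ for every child $z$ you conclude that every $V_{y_i}$ lies inside a single child $z_i$ of $x'$. The claim only says that each child meets at most one part; a single part can still be spread over several children. For example, let $\mT'$ be the trivial decomposition whose root has $G$ itself as quotient graph and all vertices as leaf children. Then every $V_{y_i}$ with at least two vertices is split over several children, while $|A_z|\le 1$ holds trivially.

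The repair is immediate. First fix representatives $v_i\in V_{y_i}$, and let $z_i$ be the child of $x'$ with $v_i\in V_{z_i}$. The $z_i$ are pairwise distinct by your claim. Moreover, $y_i$ and $y_j$ are adjacent in $\beta(x)$ if and only if $v_i$ and $v_j$ are adjacent in $G$, if and only if $z_i$ and $z_j$ are adjacent in $\beta'(x')$. So your conclusion stands.

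The paper builds this in from the start. It takes a transversal $S=\{v_1,\ldots,v_p\}$, so that $G[S]\cong\beta(x)$, and chooses $x'$ lowest with $S\subseteq V_{x'}$. Any child $z$ with $|V_z\cap S|\ge 2$ then gives a proper subset $V_z\cap S\subsetneq S$ of size at least $2$. This subset is a nontrivial module of the prime graph $G[S]$, because every vertex of $S\setminus V_z$ lies in another child $w'$ and $V_z$, $V_{w'}$ are completely adjacent or completely non-adjacent. That single argument replaces your two subcases ($2\le|A_z|<p$ and $A_z=[1,p]$) and also covers $p\le 2$ without separate treatment.
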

\begin{proof}
Let~$x\in \mv$ be a node of~$\mT$ such that the degree of~$\beta(x)$ is exactly~$\Delta_{\md}(\mT,\beta)$. Recall that there is a vertex set~$S\subseteq V(G)$ such that~$G[S]$ is isomorphic to~$\beta(x)$. 

Let~$(\mT'=(\mv', \mathcal{A}', y^*),\beta')$ be a modular decomposition of~$G$.
We show that~$(\mT',\beta')$ contains a quotient graph~$\beta'(x')$ which has an induced subgraph isomorphic to~$\beta(x)$. This implies~$\Delta_{\md}(\mT',\beta') \geq \Delta_{\md}(\mT,\beta)$.
Let~$\beta'(x')$ be the quotient graph of~$(\mT',\beta')$ such that~$S \subseteq V_{x'}$ and that for each descendant~$x''$ of~$x'$ in~$\mT'$, $S \not\subseteq V_{x''}$.
Moreover, let~$Z' :=   \{z \in V(\beta'(x'))\mid S \cap V_{z} \neq \emptyset\}$.
Note that~$|Z'| \leq |S|$ and by the above, $|Z'|\geq 2$.

Assume towards a contradiction that there is some~$z\in Z'$ such that~$S_z :=   V_{z}\cap S$ contains at least two vertices.
Since~$\beta(x)$ and thus~$G[S]$ is prime, there are vertices~$v_1$ and~$v_2$ in~$S_z\subseteq S$ and a vertex~$w\in S \setminus S_z$ such that~$w$ is adjacent to exactly one of~$v_1$ and~$v_2$.
Let~$w'$ be the vertex of~$\beta'(x')$ such that~$w\in V_{w'}$.
If $z$~and~$w'$ are adjacent~$\beta'(x')$, then each vertex of~$V_{w'}$ is adjacent to each vertex of~$V_{z}$ in~$G$. Similarly, $z$~and~$w'$ are adjacent~$\beta'(x')$, then no vertex of~$V_{w'}$ is adjacent to any vertex of~$V_{z}$ in~$G$. In both cases, we arrive at a contradiction the choice of~$w$ as being adjacent to exactly one of~$v_1$ and~$v_2$.

Hence, ~$|V_z \cap S| = 1$ for all~$z\in Z'$ and thus~$|Z'|=|S|$.
For each~$v\in S$, let $v'$ be the vertex of~$Z'$ with~$v\in V_{v'}$.
Since~$(\mT',\beta')$ is a modular decomposition, for each two distinct vertices~$u,v\in S$, $\{u,v\}$ is an edge in~$G[S]$ if and only if~$\{u',v'\}$ is an edge in~$\beta'(x')$.  
Hence, the subgraph of $\beta'(x')$ induced by~$Z'$ is isomorphic to~$G[S]$.
As a consequence, $\Delta_{\md}(\mT',\beta') \geq \Delta_{\md}(\mT,\beta)$ which then implies~$\Delta_{\md}(G) = \Delta_{\md}(\mT,\beta)$.
\end{proof}

In the following we present an algorithm that solves \GLVC in time~$(\Delta_{\md}(G)\cdot k)^{\Oh(k)}\cdot n^{\Oh(1)}$.
As in the dynamic program for the modular-width~$\mw(G)$, the overall strategy is to compute for each node~$x$ of the decomposition and each~$k'\in [0,k]$ the value~$D_x[k']$ which stores the maximal improvement~$\imp_S(W)$ of a valid~$k'$-swap~$W \subseteq V_x$ for~$S\cap V_x$ in~$G_x$. 
The difference lies in how this value is computed for a given node~$x$ of the modular decomposition. Instead of considering all interactions of $k'$-swaps with~$\beta(x)$ by brute force, we enumerate the connected $\widetilde{k}$-swaps of~$\beta(x)$ that achieve the best improvement for~$\widetilde{k}\le k'$. 
Here again, we face the difficulty that the best $k'$-swap may not be connected. 
To overcome this difficulty, we use again a branching strategy: if no best connected $\widetilde{k}$-swap for any~$\widetilde{k}\le k'$ is part of an optimal~$k'$-swap, then one of its vertices or some neighbor of its vertices must be part of some optimal~$k'$-swap. 

To give a clean description of the algorithm, we introduce a generalization of \LVClong{}. 
Intuitively, in this problem we aim to find a swap~$W$ for a given vertex cover~$S$ such that the cost of the new vertices in the vertex cover~$W\setminus S$ plus the best cost improvement that we can get by distributing the remaining swap budget among the vertices that are not in the new vertex cover~$W\oplus S$ is small. 
Essentially, this new problem is a combination of \LVC and a \textsc{Knapsack} problem on the independent set vertices.

\taskprob{\SWLVC}{An undirected graph~$G=(V,E)$, with a vertex cover~$S$, an integer~$k$, an \emph{internal swap reward function}~$\gamma\colon V\times [0,k]\to \mathbb{N}$ and \emph{cover-cost} function~$\delta\colon V\setminus S\to \mathbb{N}$.}
{Compute a valid swap~$W\subseteq V$ for~$S$ in~$G$ and an \emph{internal swap number}~$c(v)$ for each vertex~$v\in V\setminus (W\oplus S)$ such that

  \begin{itemize}
  \item $c(v) \geq 1$ for each $v\in W\cap S$,
  \item  $\sum_{v\in V\setminus (W\oplus S)} c(v)  + \sum_{v\in W\setminus S} \delta(v)\le k$, and 
  \item $\imp^S(W,c):=  \sum_{v\in V\setminus (W\oplus S)} \gamma(v,c(v))  - \sum_{v\in W\setminus S} \delta(v)$ is maximum.
  \end{itemize}
} 
For a given instance~$I$ of~\SWLVC, we let~$\val(I)$ denote the
improvement~$\imp^S(W,c)$ obtained by an optimal swap.  

The intuition behind this problem formulation and its application to modular decompositions is as follows. When we want to compute the values of~$D_x$ for a node~$x$ of a modular decomposition, we consider how a swap interacts with~$\beta(x)$. There are some vertices~$v$ of~$\beta(x)$ that are added to the vertex cover by the swap, in the sense that \emph{all} vertices of~$V_v$ are added to the vertex cover. The value of~$\delta(v)$ is exactly the number of vertices of~$V_v$ that are not yet in the vertex cover. For the vertices~$v$ of~$\beta(x)$ such that~$V_v$ is not fully contained in the vertex cover, one may still swap a certain number of vertices of~$V_v$ in order to further decrease the vertex cover weight. To obtain a maximal improvement, we may distribute the remaining swap budget on the disjoint vertex sets~$V_v$,~$v\in \beta(x)$. This distribution is modelled by the internal swap numbers~$c$. The function~$\gamma$ represents how much improvement we get for a certain number of internal swaps.

With this intuition in mind, we reduce the computation of the table values of~$D_x[k]$ for a \GLVC instance with graph~$G'$ and vertex cover~$S'$ to \SWLVC{} as follows:  
The graph~$G$ is simply~$\beta(x)$. 
For each vertex~$y$ of~$\beta(x)$ and each~$k'\in [0,k]$, set~$\gamma(y,k'):=  D_y[k']$. 
The vertex cover~$S$ of~$G$ consists of all vertices~$y\in \beta(x)$ with~$V_y\subseteq S'$. 
Finally, for all other vertices~$y$ of~$\beta(x)$ set~$\delta(y):=|V_y\setminus S'|$; these are the costs of adding the remaining vertices of~$V_y$ to the vertex cover. 

\begin{proposition}
  Let~$D_x[k]$ be the maximal improvement of a valid $k$-swap for $S'\cap V'_x$ in~$G'_x$ and let~$I$ be the \SWLVC{} instance constructed as described above. Then,~$D_x[k]=\val(I)$.
\end{proposition}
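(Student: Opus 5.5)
We prove $D_x[k]=\val(I)$ by two inequalities, each via an explicit translation between valid $k$-swaps $W'\subseteq V'_x$ for $S'\cap V'_x$ in $G'_x$ and feasible pairs $(W,c)$ for $I$. The modular structure makes this work: for distinct $y,z\in V(\beta(x))$, either all or none of the edges between $V_y$ and $V_z$ are present. We assume, as in the dynamic program for modular-width, that $D_y[j]$ is the maximal improvement of a valid $j$-swap inside $G'_y$. We also take $\gamma(y,\cdot):=D_y[\cdot]$ for every vertex $y$ of $\beta(x)$, including those in $S$.

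\textbf{Direction $D_x[k]\le\val(I)$.} Fix an optimal valid $k$-swap $W'$ in $G'_x$ and define $W$ on $\beta(x)$ as follows.
\begin{itemize}
\item A vertex $y\in S$ (that is, $V_y\subseteq S'$) is in $W$ if and only if $W'\cap V_y\neq\emptyset$.
\item A vertex $y\notin S$ is in $W$ if and only if $V_y\subseteq S'\oplus W'$, i.e.\ $V_y$ lies entirely in the new cover.
\end{itemize}
For every $y\in V(\beta(x))\setminus(W\oplus S)$ set $c(y):=|W'\cap V_y|$; all of these vertices lie outside $W\cap S$ except those of $W\cap S$, and for $y\in W\cap S$ we get $c(y)\ge1$ automatically.

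Validity of $W$ in $\beta(x)$ holds for two reasons.
\begin{itemize}
\item If $y\in W\cap S$, then some $v\in V_y$ leaves the cover. Every $z$ adjacent to $y$ in $\beta(x)$ then has all of $V_z$ adjacent to $v$, so all of $V_z$ must be in the new cover. Hence $z\in S$ is impossible, since $z$ would then need to stay entirely in the cover while $v$ leaves and $v\sim V_z$; this makes $W\cap S$ independent. It also forces every $z\notin S$ adjacent to $y$ to satisfy $z\in W$.
\item A vertex of $S\setminus W$ trivially stays covered.
\end{itemize}

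Next, the budget and the objective. Each $y\in W\setminus S$ contributes at least $\delta(y)=|V_y\setminus S'|$ swapped vertices, and each other $y$ contributes $c(y)$, so the budget constraint holds. The restriction $W'\cap V_y$ is a valid $c(y)$-swap in $G'_y$, so its improvement is at most $\gamma(y,c(y))$. Summing gives $\imp(W')\le\imp^S(W,c)$.

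\emph{Subtlety.} Some $y\notin S$ may have $V_y$ fully in the new cover while also swapping vertices of $V_y\cap S'$. Then $W'\cap V_y$ costs more than $\delta(y)$ with improvement larger than $-\delta(y)$. I would handle this by putting such $y$ into $W$ only if $W'\cap V_y\cap S'=\emptyset$, and otherwise treating $y$ via $c(y)$ as an internal swap. That swap is valid within $G'_y$, and the constraint "$z\notin W$ for $z$ adjacent to $W\cap S$" is only needed when $y$ is adjacent to some $W\cap S$ vertex. In that case no vertex of $V_y$ can leave the cover, so $V_y\cap S'\cap W'=\emptyset$ and the issue does not arise. The definition of $W$ must be chosen carefully along exactly these lines.

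\textbf{Direction $\val(I)\le D_x[k]$.} Given an optimal $(W,c)$, pick for each $y\in V(\beta(x))\setminus(W\oplus S)$ an optimal valid $c(y)$-swap $W_y$ in $G'_y$ achieving $D_y[c(y)]$. Set
\[
W':=\bigcup_y W_y\cup\bigcup_{y\in W\setminus S}(V_y\setminus S').
\]
Its size is within the budget and $\imp(W')=\imp^S(W,c)$. For validity, take an edge between $V_y$ and $V_z$ with $y\sim z$ in $\beta(x)$; edges inside a single $V_y$ are handled by the validity of $W_y$. If both endpoints left the cover, then $y$ and $z$ would both have some vertex of $S'$ leaving, so neither $V_y$ nor $V_z$ can be fully in $V'\setminus(S'\oplus W')$. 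The valid swap $W$ on $\beta(x)$ then forces one of them, say $y$, to be in the new cover $S\oplus W$. If $y\in S\setminus W$ we have $c$-swaps only inside $V_y$, so the real obstacle is the case where $V_y$ partially leaves. Here the argument is that $y\in S\setminus W$ means $c(y)=0$, after first reducing to $c(y)=0$ whenever the internal swap of $y$ is not required.

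\textbf{Main obstacle.} Making $c(y)\ge1$ precisely encode "some vertex of $V_y$ leaves the cover" is the hardest step. I would first try to normalize optimal solutions so that $c(y)>0$ for $y\in S\setminus W$ only when this loses nothing, and otherwise require every $y$ with $W_y\cap S'\neq\emptyset$ to lie in $W\cap S$ (for $y\in S$) or to be nonadjacent to anything leaving. I expect to need the convention that $y$ adjacent in $\beta(x)$ to a vertex of $W\cap S$ lies in $W$ or in $S\setminus W$ with all of $V_y$ staying covered, which is exactly the validity condition of $W$ in $\beta(x)$.
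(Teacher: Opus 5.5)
The paper states this proposition without proof; only the intuition paragraph before it is given. Your two-way translation between valid $k$-swaps of $G'_x$ and feasible pairs $(W,c)$ is the natural argument, and both of your constructions are the right ones. The gap is in the direction $\val(I)\le D_x[k]$. There you leave validity of $W'$ as the ``main obstacle'' and propose normalizing $c$, because you assume a vertex $y\in S\setminus W$ may carry internal swaps. It cannot. In \SWLVC the numbers $c(\cdot)$ are defined only on $V(\beta(x))\setminus(W\oplus S)$, and your $W'$ takes a $W_y$ only for those $y$.

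The missing observation is that for every $u\in S\oplus W$ the whole block $V_u$ lies in the new cover $S'\oplus W'$:
\begin{itemize}
\item if $u\in S\setminus W$, then $V_u\subseteq S'$ and $W'\cap V_u=\emptyset$;
\item if $u\in W\setminus S$, then $W'\cap V_u=V_u\setminus S'$.
\end{itemize}
Since $S\oplus W$ is a vertex cover of $\beta(x)$, every edge between $V_y$ and $V_z$ with $y\sim z$ has one endpoint in such a block. Edges inside a block $V_y$ are covered either for the same reason or by validity of $W_y$. This also covers the case your split ``both endpoints left the cover'' omits, namely one endpoint in $V'\setminus S'$ that simply stays out.

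Nor does $c(y)\ge 1$ for $y\in W\cap S$ have to certify that some vertex of $V_y$ actually leaves. If the chosen $W_y$ is empty, the only consequence of $y\in W\cap S$ is that the neighbouring blocks lie entirely in the new cover, which cannot break validity. The size and improvement of $W'$ still match the budget and objective of $(W,c)$, so no normalization is needed.

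In the direction $D_x[k]\le\val(I)$ your definition of $W$ works, but two remarks should be corrected.
\begin{itemize}
\item Your ``subtlety'' is vacuous. $V_y\subseteq S'\oplus W'$ already forces $W'\cap V_y\cap S'=\emptyset$, hence $W'\cap V_y=V_y\setminus S'$, and the accounting for $y\in W\setminus S$ is exact.
\item For a neighbour $z$ of $y\in W\cap S$, it is $z\in W\cap S$, not $z\in S$, that is impossible; neighbours in $S\setminus W$ are allowed. This is what gives independence of $W\cap S$.
\end{itemize}
In addition, $c(y)=|W'\cap V_y|\ge 1$ holds automatically on $W\cap S$, and $W'\cap V_y$ is valid in $G'_y$ because its symmetric difference with $S'\cap V_y$ is $(S'\oplus W')\cap V_y$. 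With these, that direction is complete.
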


We now describe how to solve \SWLVC{}. Observe that every optimal swap for
a given~$k$ may be disconnected since we may need to combine different connected
swaps with a positive improvement value.
 
The first step of the algorithm is to compute for each~$k'\in [1,k]$ some connected swap~$W'$ that gives a maximum improvement among all connected swaps. 
To this end, the algorithm enumerates for all~$\widetilde{k}\in [1,k]$ all connected valid size-$\widetilde{k}$~swaps~$\widetilde{W}$ of~$G$ and~$S$. 
For each such valid swap~$\widetilde{W}$, we compute the internal swap numbers~$c(v)$ for the vertices~$v$ in the independent set such that the improvement is maximized without exceeding the total budget for the connected swap. 
Let~$\{v_1,\ldots,v_t\}$ denote the vertices of~$V\setminus (\widetilde{W}\oplus S)$. We compute the optimal values of~$c$ for~$\{v_1,\ldots,v_t\}$ by dynamic programming: 
Compute table entries~$T[i,k']$ that store the maximum value of~$\sum_{j\in [1,i]} \gamma(v_j,c({v_j}))$ such that $\sum_{j\in [1,i]} c({v_j})= k'$.
Observe that this can be done in~$\Oh(k\cdot n)$~time using the standard knapsack dynamic programming algorithm. 
After computing~$T$, for each~$k'\in [\widetilde{k},k]$ the optimal improvement that can be obtained for the swap~$\widetilde{W}$ by investing a swap budget of~$k'\ge \widetilde{k}$ is

$$T[t,k'- \sum_{v\in \widetilde{W}\setminus S} \delta(v)] - \sum_{v\in \widetilde{W}\setminus S} \delta(v).$$

This gives a set of improvement values~$\imp^S_{k'}(\widetilde{W})$, $k'\in [\widetilde{k},k]$ for
the swap~$\widetilde{W}$. 
After computing these values, check for each~$k'$ whether
$\widetilde{W}$ gives a better improvement value~$\imp^S_{k'}$ than previous connected
swaps. If this is the case, then store~$\widetilde{W}$ as the currently best connected swap for the swap budget~$k'$.

After all connected swaps of size at most~$k$ have been considered, we have computed for each~$i\in [1,k]$ some connected swap~$W_{i}$ that maximizes~$\imp^S_{i}$ among all connected swaps with swap budget~$i$. 
Now we apply a branching as in the algorithm for \GLVC: either swap at least one vertex of one of these connected swaps~$W_i$ or a neighbor of some vertex of~$W_i$.   

To formulate the branching, we again use swap-instances. 
When we send a vertex~$w$ in the vertex cover~$S$ to the independent set, we will send all its neighbors permanently to the vertex cover which is essentially the same as removing them from~$G$ and paying the~$\delta$-values for those that are not yet in~$S$.   
When a vertex~$w$ that is in the independent set performs an internal swap, then this means that~$w$ stays permanently in the independent set. In turn, this implies that all its neighbors stay permanently in the vertex cover and we can simply remove them from~$G$.
In both cases, we pay exactly one internal swap for~$w$, deferring any additional internal swaps to the constructed swap-instance. Accordingly, in both cases we decrease~$k$ by~one and shift the~$\gamma$-values of~$w$ down by one, that is, $\gamma'(w,k') :=   \gamma(w,k'+1)$.
The formal definition of the swap-instances reads as follows.
\begin{definition}
  Let~$I=(G=(V,E),S,k,\gamma,\delta)$ be an instance of~\SWLVC and let~$w$ be a vertex of~$G$.
  If~$w\in S$, then

  $$\swap(I,w):=  
    (G-N(w),S\setminus \{w\},k':=  k-1-\sum_{v\in N(w)\setminus S}\delta(v),\gamma',\delta').$$
    Otherwise, if~$w\notin S$, then,    
    $$
    \swap(I,w):=  (G-\{w\},S,k':=  k-1,\gamma',\delta') .
    $$
    Herein, (1)~$\gamma'$ and~$\gamma$ agree on~$(V(G)\setminus N[w])\times [0,k']$,  (2) $\gamma'(w,\widetilde{k}):=   \gamma(w, \widetilde{k}+1)$ for each~$\widetilde{k}\in [0,k']$, and (3) $\delta'$ and~$\delta$ agree on~$V(G) \setminus N[w]$ and~$\delta'(w) := \infty$. 
\end{definition}

Before we state the branching rule, we need to specify the optimal values for the base case, when~$W$ is empty and the swap budget~$k$ (which may be zero) is distributed among the vertices that are in~$V\setminus S$. In that case, we need to compute numbers~$c(v)$ for each~$v\in V\setminus S$ such that $$\sum_{v\in V\setminus S} c(v)  \le k$$ and~$$\sum_{v\in V\setminus S} \gamma(v,c(v))$$ is maximum. We denote this value by~$\imp^S_k(\emptyset)$. 

We may now state the branching rule.
\begin{brrule}\label{rule:md-swaps}
Let~$I=(G=(V,E),S,k,\gamma,\delta)$ be an instance of~\SWLVC and, for each~$i\in[1,k]$, let~$W_i$ denote a valid connected swap of size~$i$ for~$S$ maximizing~$\imp^S_i$ among all valid connected swaps of size~$i$ for~$S$.
Return the maximum of~$\imp^S_k(\emptyset)$ and~$$\mathop{\max}_{w\in N[W_i], i\le k}\quad  \branch(I,w),$$
where $\branch(I,w)= \val(\swap(I,w))-\sum_{v\in N(w)\setminus S}\delta(v).$
\end{brrule}

The correctness proof for this branching rule is very similar to the proof of the correctness of~\Cref{rule:swaps with weight}.
Let~$W$ be some valid swap together with internal swap numbers~$c_v$ for each~$v\in V\setminus(W\oplus S)$ such that~$\imp_k^S(W)$ is maximum.

Suppose that~$W\neq \emptyset$.
Then, there is a connected component~$C$ in~$G[W]$ that minimizes~$\ell :=   \sum_{v\in V\setminus(C\oplus S)} c_v + \sum_{v\in C \setminus S} \delta(v)$.
Since~$\imp^S_\ell :=   \imp^S_\ell(W_\ell)$ is maximal, $\imp^S_\ell(W_\ell) \geq \imp^S_\ell(C)$.
Hence, one can replace the swap~$C$ in~$W$ by~$W_i$.
The only reason why this does not result in a valid swap, is because~$W$ contains at least one vertex of~$N[W_i]$ and performs an internal swap of size at least one, such internal swaps are considered in the swap-instances.
Since~\Cref{rule:md-swaps} builds the maximum over~$\imp^S_k(\emptyset)$ and all swap-instances of vertices contained in~$\bigcup_{i= 1}^k N[W_i]$, the branching is correct.

\begin{theorem}
  \GLVC{} can be solved in~$\left(\Delta_{\md}(G)(k^2+2)\right)^k\cdot n^{\Oh(1)}$ time.
\end{theorem}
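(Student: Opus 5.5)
The plan is to compute a minimum-width modular decomposition $(\mT,\beta)$ in $\Oh(n+m)$ time via~\cite{MS94}, with all quotient graphs prime (or degenerate). By the preceding proposition its maximum modular degree equals $\Delta_{\md}(G)$. We then run the bottom-up dynamic program of the modular-width algorithm, but replace the brute-force computation of $D_x[k']$ at each node $x$ by solving the corresponding \SWLVC{} instance on $\beta(x)$. By the proposition relating $D_x[k]$ to $\val(I)$, it suffices to solve each \SWLVC{} instance (for every budget $k'\le k$) in time $(\Delta_{\md}(G)(k^2+2))^k\cdot n^{\Oh(1)}$. The answer is then read off at the root: $D_{x^*}[k]\ge d$. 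Degenerate quotient graphs (edgeless or complete) are covered by the same routine, since for complete graphs the degree bound is trivially small; for edgeless graphs the connected swaps have size one and the branching collapses to the knapsack base case $\imp^S_k(\emptyset)$. For the parent-level reduction we also need $|V_y\setminus S'|$ for each child $y$, which we precompute bottom-up exactly as in the modular-width algorithm.

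To solve \SWLVC{} on a quotient graph $H=\beta(x)$ of maximum degree $\Delta:=\Delta_{\md}(G)$, we implement \Cref{rule:md-swaps} recursively. At each recursion node we first enumerate all valid connected swaps of $H$ of size at most $k$. By the enumeration argument of \Cref{running time rule small swaps weighted} (extend from each start vertex, branching on $\le\Delta-1$ outside neighbours, depth $\le k$), this takes $\Oh(2^k(\Delta-1)^k k^3|V(H)|)$ time. For each swap we run the knapsack DP over the remaining independent-set vertices in $\Oh(k\cdot|V(H)|)$ per budget, which yields the swaps $W_i$ for all $i\le k$ and the value $\imp^S_k(\emptyset)$. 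We then branch on every $w\in N[W_i]$, $i\le k$, recursing on $\swap(I,w)$. Correctness follows from the exchange argument given after \Cref{rule:md-swaps}: either the optimum is the pure knapsack solution, or a minimum-cost component $C$ of an optimal $W$ can be exchanged for $W_\ell$ unless some vertex of $N[W_\ell]$ is swapped or internally swapped in the optimum, which is exactly a branch.

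For the running time, note that in every branch $k$ drops by at least one: in both cases of the swap-instance definition we pay at least one unit of budget for $w$, and $\delta\ge 1$ for vertices outside $S$ only helps. Hence the recursion depth is at most $k$. The number of branches is $\sum_{i\le k}|N[W_i]|\le \sum_{i=1}^k (i\Delta+i)$, which is at most $\frac{k(k+1)}{2}(\Delta+1)\le \Delta(k^2+2)$ once one uses $\Delta\ge 1$ and bounds crudely. Thus the search tree has at most $(\Delta(k^2+2))^k$ nodes, each costing $2^k\Delta^k\cdot n^{\Oh(1)}$. Absorbing this $2^k\Delta^k$ factor is the delicate point. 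I would handle it by performing the enumeration only once per quotient graph and restricting attention in subinstances to swaps inside the current graph, or simply by accepting that $(2\Delta)^k$ is dominated after replacing the base $k^2+2$ with a slightly smaller branching count (using $|N[W_i]|\le i(\Delta+1)$ and $\sum_i i\le k^2/2+k/2$). If needed, one fine-tunes the constant inside the base so that the product of branch count and per-node cost stays within $(\Delta(k^2+2))^k$.

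Summing over the $\Oh(n)$ quotient graphs and the $k+1$ budgets then gives the claimed bound. I expect two steps to be the main obstacles. The first is rigorously verifying the exchange argument for \Cref{rule:md-swaps} in the presence of internal swap numbers, which must show that replacing $C$ by $W_\ell$ keeps the total budget within $k$ and that the remaining components stay valid. The second is the accounting that the enumeration overhead fits within the stated base.
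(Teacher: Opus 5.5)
Your architecture matches the paper's. You run dynamic programming over the modular decomposition and obtain each $D_x[k']$ as the value of a \SWLVC{} instance on $\beta(x)$. That instance is solved by a search tree via \Cref{rule:md-swaps}, of depth at most $k$ because every branch pays at least one unit of budget. The genuine gap is the final running-time accounting, which is exactly where the base $\Delta_{\md}(G)(k^2+2)$ comes from.

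You bound the search tree by $(\Delta(k^2+2))^k$ nodes and charge each node $(2\Delta)^k\cdot n^{\Oh(1)}$ for enumerating connected swaps. The product $(2\Delta^2(k^2+2))^k\cdot n^{\Oh(1)}$ exceeds the claim by a factor $(2\Delta)^k$, and neither of your repairs closes this:
\begin{itemize}
\item No retuning of the constant in the branching number can absorb a surplus factor $\Delta^k$.
\item Enumerating once per quotient graph is not justified. $\swap(I,w)$ deletes $N(w)$ and moves $w$ out of the cover, so a valid connected swap of the subinstance need not be valid in the original instance: an $S$-vertex may have lost a non-cover neighbour in $N(w)$. In any case, re-examining a precomputed list whose length is exponential in $k$ at every search-tree node reproduces the same product.
\end{itemize}

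The missing observation is that a node only needs the connected swaps of size at most its \emph{remaining} budget. A node at depth $t$ has budget at most $k-t$, so by \Cref{running time rule small swaps weighted} its enumeration costs $(2\Delta)^{k-t}\cdot n^{\Oh(1)}$. There are at most $B^t$ nodes at depth $t$, where $B$ is the branching number. Hence the total work is
\[
\sum_{t=0}^{k} B^t(2\Delta)^{k-t}\cdot n^{\Oh(1)}\;\le\;\sum_{t=0}^{k}\binom{k}{t}B^t(2\Delta)^{k-t}\cdot n^{\Oh(1)}\;=\;(B+2\Delta)^k\cdot n^{\Oh(1)}.
\]
To land on the stated base you therefore need $B\le\Delta k^2$, not $B\le\Delta(k^2+2)$. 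Use $|N[W_1]|\le\Delta+1\le k\Delta$ (for $k\ge 2$). By connectivity of $W_j$, also $|N[W_j]|\le j+j(\Delta-1)\le k\Delta$ for $j\ge 2$. Summing over at most $k$ swaps gives $B\le\Delta k^2$. So the $k^2$ in the base pays for branching and the $+2$ for enumeration.

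A smaller point: complete quotient graphs do not have ``trivially small'' degree, since a series node with $r$ children has degree $r-1$. You must binarize series and parallel nodes so that all quotient graphs are prime, which is what the proposition identifying $\Delta_{\md}(G)$ assumes.
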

\begin{proof}
  The algorithm uses dynamic programming over the modular decomposition of~$G$. More precisely, it computes~$D_x[k']$ for every node~$x$ of the modular decomposition and every~$k'\in [0,k]$. The instance is a yes-instance if and only if~$D_{x^*}[k]\ge d$ where~$x^*$ is the root of the modular decomposition.

  To bound the running time, it is sufficient to show that~$D_x[k']$ can be computed in ~$(\Delta_{\md}(G)\cdot k^2)^k\cdot n^{\Oh(1)}$ time for non-leaf nodes~$x$ of the decomposition. The reduction to the \SWLVC~instance can be clearly computed in polynomial time. 
  \SWLVC{} is solved by a search tree algorithm that applies Branching Rule~\ref{rule:md-swaps} as long as~$k\ge 0$. 
  In every instance that is created by the branching rule, the parameter~$k$ is decreased by at least~1. 
  Thus, the search tree has depth at most~$k$. 
  The number of instances that are created by the rule for~$k>1$ is at most~$\Delta_{\md}(G)\cdot k^2$: There are at most~$k$ sets~$W_j$ and each set~$W_j$ contains at most~$k$ vertices. 
  Thus, $N[W_j]$ contains at most~$\Delta_{\md}(G) + 1<\Delta_{\md}(G)\cdot k$~vertices for~$j=1$ and at most~$(\Delta_{\md}(G)-1)\cdot k + k= \Delta_{\md}(G)\cdot k$ vertices for~$j>1$. 
  For~$k\le 1$, the number of recursively created instances is~$n^{\Oh(1)}$. 
  Thus, the search tree has size~$(\Delta_{\md}(G)k^2)^k\cdot n^{\Oh(1)}$. 
  The only non-polynomial running time factor incurred at each search tree node is due to the enumeration of connected swaps which can be done in~$(2\Delta_{\md}(G))^{k}\cdot n^{\Oh(1)}$~time due to Proposition~\ref{running time rule small swaps weighted}. The overall running time follows.   
\end{proof}

\section{Using Split Decompositions}

Finally, we provide an FPT-algorithm for the splitwidth of~$G$, another parameter that is upper-bounded by the modular-width of~$G$.
First, we define split decompositions.

A~\emph{split} of a graph~$G=(V,E)$ is a partition~$(V_1, V_2)$ of~$V$ with~$|V_1| \geq 2$ and~$|V_2|\geq 2$ such that each vertex in~$V_1$ with at least one neighbor in~$V_2$ has the same neighborhood in~$V_2$.
In other words, there are sets~$V_1'\subseteq V_1$ and~$V_2'\subseteq V_2$ such that~$N(v)\cap V_2 = V'_2$ for each~$v\in V'_1$ and~$N(w)\cap V_2 = \emptyset$ for each~$w\in V_1\setminus V'_1$.
If there is no split for~$G$, we call~$G$~\emph{prime}.
Let~$(V_1,V_2)$ be a split of~$G$.
A~\emph{simple decomposition of~$G$ with respect to~$(V_1,V_2)$} consists of two graphs~$G_1=(V_1,E_1)$ and~$G_2=(V_2,E_2)$ where for each~$i\in \{1,2\}$
\begin{itemize}
\item $W_i = V_i \cup \{x\}$ for some vertex~$x$ which is not contained in~$V$,
\item $G_i[V_i] = G[V_i]$,
\item and~$x$ is adjacent to exactly the vertices of~$V'_i$ in~$G_i$.
\end{itemize}
The vertex~$x$ is called a~\emph{marker} vertex.
Conversely, two graphs~$G_1=(V_1,E_1)$ and~$G_2=(V_2,E_2)$ with~$V_1\cap V_2 = \{x\}$ can be \emph{composed} into a graph~$G :=  (V,E)$ as follows: 
The graph~$G$ is the union of~$G_1$ and~$G_2$ without the marker vertex~$x$ plus the edges between each neighbor of~$x$ in~$G_1$ and each neighbor of~$x$ in~$G_2$.
Formally, $V :=   (V_1 \cup V_2) \setminus \{x\}$ and~$E:=  E_{G_1}(V_1\setminus \{x\}) \cup E_{G_2}(V_2\setminus \{x\}) \cup \{\{v_1, v_2\}\mid v_1 \in N_{G_1}(x), v_2 \in N_{G_2}(x)\}$.
Note that~$(V_1\setminus \{x\}, V_2\setminus\{x\})$ is a split for~$G$ and~$G_1$ and~$G_2$ are a simple decomposition of~$G$ with respect to~$(V_1\setminus \{x\}, V_2\setminus\{x\})$.
A~\emph{split decomposition}~$(\mathcal{T},\beta)$ of a graph~$G$ consists of an undirected tree~$\mathcal{T}=(\mv, \mathcal{E})$ and a function~$\beta$ that maps each of the nodes~$x$ of~$\mv$ to a prime graph~$\beta(x)$ such that
\begin{itemize}
\item $|V(\beta(x)) \cap V(\beta(y))| = 1$ if~$x$ and~$y$ are adjacent in~$\mT$,
  \item $V(\beta(x)) \cap V(\beta(y)) = \emptyset$ if~$x$ and~$y$ are not adjacent in~$\mT$, and 
\item $G$ is equivalent to the graph obtained from recursively composing the graphs of all adjacent node pairs in~$\mT$.
\end{itemize} 
The~\emph{width of a split decomposition} is the size of the largest vertex set of any prime graph and the \emph{splitwidth} of a graph~$G$ is the minimal width of any split decomposition of~$G$ denoted by~$\sw(G)$.
A split decomposition of minimal width for a graph~$G$ can be computed in linear time~\cite{D00}.

For the following dynamic programming algorithm, we use a so-called nice split decomposition, which has some helpful properties.
A~\emph{nice split decomposition} of~$G$ consists of a rooted tree~$\mathcal{T}=(\mathcal{N}, \mathcal{A},x^*)$ and a mapping~$\beta$ and can be obtained from the split decomposition~$(\mT', \beta')$ as follows: 
First, in each prime graph~$\beta'(x)$, replace each non-marker vertex~$v$ by a marker vertex~$y_v$ and add a new neighbor~$v$ to~$x$ in~$\mT'$ where~$\beta'(v)$ consists of the single edge~$\{v, y_v\}$.
Note that afterwards, for each vertex~$v\in V$ there is only one node~$\ell$ of~$\mT'$ such that~$v\in \beta'(\ell)$ and this node is a leaf node. 
Second, we take an arbitrary non-leaf node of~$\mT'$ as the root~$x^*$ and orient all edges so that we obtain an out-tree.
For simplicity, we rename each non-root node~$x$ of~$\mT$ to~$a$, where~$a$ is the marker vertex contained in the prime graph of~$x$ and the prime graph of the unique parent of~$x$ in~$\mT$. To~$\beta(x^*)$, the prime graph of the root~$x^*$, we add an auxiliary marker vertex~$x^*$ with no neighbors.
After these replacements, for each non-leaf node~$x$ of~$\mT$, the vertex set of the prime graph~$\beta(x)$ is exactly the set of child nodes of~$x$ in~$\mT$ together with the marker vertex~$x$ itself.
Note that a nice split decomposition can be computed from a given split decomposition in linear time.

Similar to modular decomposition and tree decomposition, we define for a node~$x$ of the nice split decomposition~$V_x$ as the vertices of~$V$ in any prime graph of the subtree of~$\mT$ rooted at~$x$.
This can again be recursively defined as~$V_\ell :=   V(\beta(\ell))\cap V$ for each leaf node~$\ell$ and as~$V_x :=   \bigcup_{y\in V(\beta(x))}V_y$.
Moreover, we define the~\emph{border}~$\bor(x)$ of~$x$ as the set of vertices of~$V_x$ which are adjacent to the marker vertex~$x$ after recursively composing all adjacent prime graphs of the subtree of~$\mT$ rooted at~$x$.
This set can also be recursively defined as~$\bor(\ell) :=   V_\ell$ for each leaf node~$\ell$ and as~$\bor(x) :=   \bigcup_{y\in N_x(x)}\bor(y)$.
For the root~$x^*$ of~$\mT$, the auxiliary vertex~$x^*$ has no neighbors in~$\beta(x^*)$, so $\bor(x^*) = \emptyset$.

The main difference between the algorithm for modular-width and splitwidth is how to treat border vertices. 
The idea behind border vertices is that for a non-leaf node~$x$ and adjacent marker vertices~$y$ and~$z$ in~$\beta(x)$ each vertex~$v_y$ of~$\bor(y)$ is adjacent to each vertex~$v_z$ of~$\bor(z)$ and there is no edge between any other pair of vertices of~$V_y$ and~$V_z$.
Consequently, if a vertex of~$\bor(y)\cap S$ is contained in some valid swap~$W$ for~$S$ in~$G$, then no vertex of~$\bor(z)\cap S$ is contained in~$W$ and all vertices of~$\bor(z) \setminus S$ are contained in~$W$.
Hence, in a prime graph~$\beta(x)$, the set of marker vertices~$y$ where at least one vertex of~$\bor(y) \cap S$ is swapped, forms an independent set in~$\beta(x)$.
The following dynamic programming mainly relies on this observation and distinguishes between all possible intersections of a swap~$W$ with~$\bor(x)$.

 \begin{theorem}
\GLWVC can be solved in $\Oh(\sw(G)^{k+1} \cdot k \cdot n + m)$~time, where~$\sw(G)$ is the splitwidth of~$G$.
 \end{theorem}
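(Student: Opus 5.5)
We compute a split decomposition of minimal width in linear time~\cite{D00}, convert it into a nice split decomposition $(\mT=(\mathcal{N},\mathcal{A},x^*),\beta)$ with $\Oh(n)$ nodes, and run a bottom-up dynamic program. For each node~$x$ the table entry has to record how a swap $W\subseteq V_x$ meets the border $\bor(x)$, because only border vertices interact with the rest of the graph. The needed information is small. Let $b_S$ be a flag saying whether $W$ contains a vertex of $\bor(x)\cap S$. If so, every vertex of $\bor(p)\cap S$ for an adjacent marker $p$ outside is forbidden, and every vertex of $\bor(p)\setminus S$ is forced into $W$. Let $b_C$ be a flag saying whether all of $\bor(x)\setminus S$ lies in $W$. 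This is needed because if the outside swaps a vertex of the border of a neighbouring marker out of $S$, then all of $\bor(x)\setminus S$ must be swapped.

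So we define $D_x[b_S,b_C,k']$ as the maximal improvement of a $k'$-swap $W\subseteq V_x$ that is valid for $S\cap V_x$ in $G_x$, consistent with the flags. When $b_C=1$ we only allow swaps containing $\bor(x)\setminus S$, and the size of $\bor(x)\setminus S$ is charged to $W$. We precompute $|\bor(x)\setminus S|$ and $\omega(\bor(x)\setminus S)$ for all nodes by summing over children, which takes $\Oh(\sw(G)\cdot n)$ time. Leaves are handled directly.

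For a non-leaf node~$x$ with prime graph $\beta(x)$ on at most $\sw(G)$ vertices (the children together with the marker $x$), we guess the set $S_x$ of children $y$ whose chosen entry has $b_S=1$. It must be an independent set in $\beta(x)$, and by the observation preceding the theorem it has size at most~$k$. This gives $\Oh(\leqBin{\sw(G)}{k})=\Oh(\sw(G)^k)$ choices by \Cref{subsets of size k}. We also fix the flags of~$x$ itself, and treat the marker $x$ as a pseudo-vertex with flag $b_S(x)$. If the outside swaps into the border of $x$'s parent side, then every child adjacent to $x$ in $\beta(x)$ gets $b_S=0$ and $b_C=1$. Every child adjacent to some $y\in S_x$ must take $b_S=0,b_C=1$, and every other child has free $b_C$. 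The flags of~$x$ are then determined: $b_S(x)=1$ if and only if $S_x\cap N_{\beta(x)}(x)\neq\emptyset$, and $b_C(x)$ is the conjunction of the $b_C$ values of the children in $N_{\beta(x)}(x)$. Having fixed these constraints, the budget is distributed among the children by a knapsack DP over them, as with the table $Q_{S_x}$ in the modular-width algorithm. This costs $\Oh(\sw(G)\cdot k^2)$ per guess, or $\Oh(\sw(G)\cdot k)$ if it is organized incrementally.

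Correctness is an induction over $\mT$. The key fact is that, by the composition rule, edges between $V_y$ and $V_z$ for siblings $y,z$ are exactly $\bor(y)\times\bor(z)$ when $yz\in E(\beta(x))$, and there are none otherwise. Hence validity of $W$ in $G_x$ is equivalent to validity of each $W\cap V_y$ in $G_y$ together with the flag constraints along the edges of $\beta(x)$.

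The answer is $D_{x^*}[0,\cdot,k]$: compare it against~$d$, or take its maximum. Over all $\Oh(n)$ nodes the running time is $\Oh(\sw(G)^k\cdot\sw(G)\cdot k\cdot n+m)=\Oh(\sw(G)^{k+1}\cdot k\cdot n+m)$.

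The main obstacle is the bookkeeping for $b_C$. A child may be forced to have $b_C=1$, which requires swapping its entire $\bor(y)\setminus S$, and this can be large. Those entries are therefore simply $-\infty$ whenever the budget is insufficient. I expect that $b_C$ is forced only by neighbours in $S_x$ or by the parent's flag, so the small guesses over $S_x$ are enough and no exponential blow-up in $\sw(G)$ arises.
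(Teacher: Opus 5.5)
Your proposal is correct and follows essentially the paper's proof. Reading $b_C=0$ as ``unconstrained'', your flag combinations are exactly the paper's three tables $D^+_x$, $D^-_x$, $D^o_x$, and the paper likewise guesses an independent set $S_x$ of at most $k$ children in $\beta(x)$ (avoiding $N_x[x]$ when the parent side forces the border) and distributes the budget with a knapsack table $Q_{S_x}$ that uses $D^+$ on $S_x$, $D^-$ on $N_x(S_x)$ and $D^o$ elsewhere. For the running time you do not need the unexplained ``incremental'' organisation: the factor $1/k^2$ in \Cref{subsets of size k} already absorbs the $\Oh(\sw(G)\cdot k^2)$ knapsack cost per guess.
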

 \begin{proof}
 Let~$I=(G=(V,E),\omega,S,k,d)$ be an instance of~\GLWVC.
First, we compute a nice split decomposition~$(\mT'=(\mv,\mathcal{E}),\beta')$ of minimal width in~$\Oh(n+m)$~time~\cite{D00}. 
 
Next, for each node~$x$ of~$\mT$, we store three dynamic programming tables~$D^+_x[k']$, $D^-_x[k']$, and~$D^o_x[k']$ each with entries for~$k'\in[1,k]$.
Each entry stores the maximal improvement of a valid~$k'$-swap~$W$ for~$S\cap V_x$ in~$G[V_x]$ such that
\begin{itemize}
\item in case of~$D^+_x[k']$, vertices of~$\bor(x)\cap S$ can be contained in~$W$,
\item in case of~$D^-_x[k']$, all vertices of~$\bor(x)\setminus S$ are contained in~$W$ and no vertex of~$\bor(x)\cap S$ is contained in~$W$, and
\item in case of~$D^o_x[k']$, no vertex of~$\bor(x)\cap S$ is contained in~$W$.
\end{itemize}

Let~$\ell$ be a leaf of~$\mT$ and let~$v$ be the unique vertex of~$V$ in~$V_\ell$.
Recall that the prime graph~$\beta(\ell)$ consists of a single edge between the marker vertex~$\ell$ and the vertex~$v$.
We fill the dynamic programming tables for~$\ell$ as follows. 

$$ D^+_\ell[k'] :=
\begin{cases} 0 & k' = 0 \lor v\notin S,\\
  \omega(v)&\text{otherwise}.
\end{cases}
$$

$$D^-_\ell[k'] :=
\begin{cases}
  0 & v\in S,\\
  -\infty & v\notin S \land k' = 0,\\
  -\omega(v)&\text{otherwise}.
\end{cases}
$$

$$D^o_\ell[k'] :=   0.$$

For all non-leaf nodes~$x$, we fill the dynamic programming tables as
follows:
 $$D^{+}_x[k'] :=   \max_{\stackrel{S_x \subseteq V(\beta(x)) \setminus \{x\}}{\stackrel{|S_x| \leq k'}{S_x~\text{is independent}}}}  Q_{S_x}[1, k']
 $$
 
 $$D^{-}_x[k'] :=   \max_{\stackrel{S_x \subseteq V(\beta(x))\setminus N_x[x]}{\stackrel{|S_x| \leq k'}{S_x~\text{is independent}}}}  Q_{S_x \cup \{x\}}[1, k']
 $$
 
 $$D^{o}_x[k'] :=   \max_{\stackrel{S_x \subseteq V(\beta(x))\setminus N_x[x]}{\stackrel{|S_x| \leq k'}{S_x~\text{is independent}}}}  Q_{S_x}[1, k']
 $$
 
 In all three cases, $S_x$ denotes the set of marker vertices~$y$ of~$V(\beta(x))$ for which vertices of~$\bor(y)$ can be contained in~$W$.
 As mentioned above, this set~$S_x$ is an independent set as otherwise, the swap~$W$ is not valid.
 The auxiliary dynamic programming table~$Q_{S_x}$ is only used to prevent an exponential factor in the running time while finding an optimal distribution of the budget~$k'$ into the vertex sets~$V_y$ for~$y\in V(\beta(x))$.

 For each node~$x$, fix an arbitrary ordering of the vertices of~$V(\beta(x))$ where the marker vertex~$x$ appears in the last position of this ordering.
 Moreover, let~$V(\beta(x))(i)$ denote the~$i$th vertex of~$V(\beta(x))$ according to this ordering.
 The dynamic programming table~$Q_{S_x}[i,k']$ has entries for~$i\in[1, |V(\beta(x))|]$ and~$k'\in[0,k]$ and stores the maximal improvement of a valid~$k'$-swap~$W$ for~$S\cap \bigcup_{j= i}^{|V(\beta(x))| - 1}V(\beta(x))(j)$ in~$G[\bigcup_{j= i}^{|V(\beta(x))| - 1}V_{\beta(x)(j)}]$ such that vertices of~$\bor(y)\cap S$ can be contained in~$W$.
 
 $$Q_{S_x}[|V(\beta(x))|, k'] :=   0 $$
 $$Q_{S_x}[i,k'] :=    \max_{1\leq k'' \leq k'} D_{S_x}[i+1, k' - k''] + 
 \begin{cases}
    D^+_y[k'']& y \in S_x \setminus \{x\},\\
    D^-_y[k'']& y \in N_x(S_x) \setminus \{x\},\\
    D^o_y[k'']&\text{otherwise}.
 \end{cases}
 $$
 where~$y$ is the~$i$th vertex of~$V(\beta(x))$.

If~$y\in S_x$, vertices of~$\bor(y)\cap S$ can be contained in~$W$, and we evaluate~$D^+_y$.
Moreover, if~$y\in N_x(S_x)$, vertices of~$\bor(y)\setminus S$ are contained in~$W$, and we evaluate~$D^-_y$.
In all remaining cases, we have to evaluate~$D^o_y$.

Recall that in the root~$x^*$, the marker vertex~$x^*$ is an isolated vertex. 
Hence for all~$k'\in[1,k]$, $D^+_{x^*}[k'] = D^-_{x^*}[k'] = D^o_{x^*}[k']$.
The maximal improvement of a valid~$k$-swap for~$I$ can be found in~$D^+_{x^*}[k']$ and the corresponding swap can be found via traceback.

Next, we analyze the running time.
The split decomposition has $\Oh(n)$~nodes.
For each such node~$x$, we have three dynamic programming tables~$D^+_x[k'], D^-_x[k'],$ and~$ D^o_x[k']$ each with~$\Oh(k)$~entries.
We show that we can compute each such entry in $\Oh(\leqBin{\sw(G)}{k}\cdot \sw(G)\cdot k^2)$~time.
Since~$|V(\beta(x)) \setminus \{x\}| \leq \sw(G) $, we can enumerate all subsets~$S_x$ of size at most~$k$ of~$V(\beta(x))\setminus \{x\}$ in $\Oh(\leqBin{\sw(G)}{k})$~time.
For each such set~$S_x$, we can check in $\Oh(\sw(G) \cdot k)$~time if~$S_x$ is an independent set in~$\beta(x)$, since~$S_x$ has size at most~$k$ and each vertex of~$\beta(x)$ has at most~$\sw(G)$ neighbors in~$\beta(x)$.
In the same running time, we can compute~$N_x(S_x)$.
If~$S_x$ is an independent set, we compute~$Q_{S_x}[1,k']$.
Since~$N_x(S_x)$ is already computed, $Q_{S_x}$ has $\Oh(\sw(G) \cdot k)$~entries, and each such entry can be computed in $\Oh(k)$~time, $Q_{S_x}[1,k']$ can be computed in $\Oh(\sw(G) \cdot k^2)$~time.
Consequently, the value of~$D^+_{x^*}[k']$ can be computed in $\Oh(\leqBin{\sw(G)}{k}\cdot \sw(G)\cdot k^3 \cdot n + m)$~time.
Due to~\Cref{subsets of size k} this is $\Oh(\sw(G)^{k+1}\cdot k \cdot n + m)$~time.
 
 Note that for each~$k'$, the value of~$Q_{S_x}[1,k']$ is used at most two times during the whole computation of all entries of all dynamic programming tables.
 Consequently, we do not need to store for each~$S_x$ and each~$k'$ the value of~$Q_{S_x}[1,k']$.
 Instead, we recompute~$Q_{S_x}[1,k']$ if we need this value a second time. 
 Hence, the whole algorithm only uses polynomial space.
\end{proof}

As above, with a slight modification of the dynamic programming algorithm, we can improve on the running time for the special case of~\GLVC. 
Recall that~$I$ is a yes-instance of~\GLVC if and  only if, there is a \goodswap~$W$ for~$I$ with~$|W\cap S| \leq \frac{k+d}{2}$.
Hence for~\GLVC, it is sufficient in the computation of~$D^+_x[k']$, $D^-_x[k']$, and~$D^o_x[k']$ to only check for independent sets~$S_x$ in~$\beta(x)$ of size at most~$\min(k', \frac{k+d}{2})$.
Thus, we obtain the following.

\begin{corollary} 
\GLVC can be solved in $\Oh(\sw(G)^{\frac{k+d}{2}+1} \cdot k \cdot n + m)$~time, where $\sw(G)$~is the splitwidth of~$G$.
\end{corollary}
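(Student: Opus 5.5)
We reuse the dynamic program from the preceding theorem on the nice split decomposition: the tables $D^+_x$, $D^-_x$, $D^o_x$ and the auxiliary tables $Q_{S_x}$ keep exactly the same recurrences. The only change is in the maximisation defining $D^+_x[k']$, $D^-_x[k']$ and $D^o_x[k']$. There we range only over independent sets $S_x$ of marker vertices with $|S_x|\le \min(k',\frac{k+d}{2})$, and no longer over all sets of size at most $k'$. At the root, the instance is a yes-instance if and only if $D^+_{x^*}[k]\ge d$.

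\emph{Correctness.} Note that restricting the maximisation can only lower the table values. Every value that is still computed is realised by an actual valid swap, so soundness is clear. For completeness, we fix a minimal good swap $W$. By \Cref{lem: consider only small subsets} we may assume $|W\cap S|\le \frac{k+d}{2}$. Indeed, applying that lemma with $X=V$, every minimal good swap satisfies $|S_V\cup C_V|\le\frac{k+d}{2}$, and $S_V=W\cap S$.

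Now consider any node $x$ and the set $S_x$ of marker children $y$ for which $W$ contains a vertex of $\bor(y)\cap S$. Each such $y$ contributes at least one distinct vertex of $W\cap S$, because the sets $V_y$ are pairwise disjoint. Hence $|S_x|\le |W\cap S|\le \frac{k+d}{2}$, and of course also $|S_x|\le k'$. So the choice of $S_x$ induced by $W$ is still among the sets considered at every node, and the traceback through $W$ is preserved. Consequently the root entry attains $\imp(W)\ge d$. The one point to verify here is the step just used: an element of $S_x$ in the $D^+$ case forces an actual vertex of $S$ into $W$, rather than merely permitting one. If the recurrence allows an empty contribution, we argue on the optimal $W$ and take $S_x$ to be only the markers that are actually used.

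\emph{Running time.} We repeat the analysis of the weighted theorem with $\leqBin{\sw(G)}{\frac{k+d}{2}}$ in place of $\leqBin{\sw(G)}{k}$. There are $\Oh(n)$ nodes and $\Oh(k)$ entries per node. For each entry we enumerate the candidate sets $S_x$, test independence and compute $N_x(S_x)$ in $\Oh(\sw(G)\cdot k)$ time, and evaluate $Q_{S_x}$ in $\Oh(\sw(G)\cdot k^2)$ time. By \Cref{subsets of size k}, $\leqBin{\sw(G)}{\frac{k+d}{2}}\in\Oh(\sw(G)^{\frac{k+d}{2}}/k^2)$. This absorbs the polynomial factors in $k$ and yields $\Oh(\sw(G)^{\frac{k+d}{2}+1}\cdot k\cdot n+m)$. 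Small cases ($k+d\le 4$, or $\sw(G)<3$ so that \Cref{subsets of size k} does not apply) are handled by \Cref{lem: polytime k d small} or are trivially within the bound.
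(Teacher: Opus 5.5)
Your proposal is correct and follows the paper's argument. The paper likewise restricts the enumerated independent sets $S_x$ to size at most $\min(k',\frac{k+d}{2})$, justifies this via \Cref{lem: consider only small subsets} (minimal good swaps have $|W\cap S|\le\frac{k+d}{2}$), and redoes the running-time count with $\leqBin{\sw(G)}{\frac{k+d}{2}}$ using \Cref{subsets of size k}. Your additional observations fill in the step the paper leaves implicit: each marker in the induced $S_x$ contributes a distinct vertex of $W\cap S$, so $|S_x|\le|W\cap S|$ at every node, and you take $S_x$ to be only the markers actually used.
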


\section{Conclusion}
\label{sec:conclusion}
We introduced the notion of FPT running times that grow mildly with respect to some parameter~$\ell$ and strongly with respect to another parameter~$k$. Such running times are desirable in the setting where the parameter~$k$ is much smaller than~$\ell$. 
Parameterized local search is one scenario in which this assumption is certainly true, when~$k$ is the operational parameter that bounds the change between the current solution and its local neighbors. 
We showed that such running times are achievable for one of the most important graph problems in parameterized local search, \LVClong, and different well-known structural parameters taking the place of~$\ell$. 

There are numerous possibilities for future research.
First, it seems interesting to study further parameterized local search problems  with the aim of achieving FPT-algorithms whose running times grow strongly only with respect to the operational parameter~$k$. Such algorithms have been developed since the publication of the extended abstract of this work for local search versions of partitioning problems like \textsc{Max~$c$-Cut}~\cite{GGKM23,GMS25} and \textsc{Cluster Editing}~\cite{GMNW23,GMS25}. We also assume that some of the techniques presented in this work can be generalized to a local search version of~\textsc{0-1 Integer Linear Programming}.
More generally, algorithms with such running times could also be relevant in other scenarios with operational parameters~$k$, for example in turbo-charging of greedy algorithms~\cite{AKC+17,DSV22,GGJ+19}.

Second, it is open to improve the running time for the considered problems since our conditional lower bounds are not completely tight. For example, for~\LVClong parameterized by~$k$ and the~$h$-index, it is open whether a running time of~$\Oh(h^{k/2}\cdot n)$ is possible.
Third, it would be interesting to explore gap versions of further local search problems, both from a theoretical and a practical perspective.
In this context it would be interesting to explore whether there are parameters~$\ell$ for which the standard parameterized local search problem admits an FPT-algorithm but the gap version is~\W1-hard.  
Furthermore, it would be interesting to identify structural parameters~$\ell$ where permissive local search has an FPT-algorithm with running time~$\ell^{g(k)}\cdot n^{\Oh(1)}$ but non-permissive (often called \emph{strict}) local search does not.

Finally, it is open to further explore the concrete practical potential of our results: 
For \LVC, the FPT-algorithm with parameter~$(\Delta,k)$~\cite{KK17} was already quite efficient. Can it be improved using some of the techniques presented in this work? 
Similarly, can our theoretical results lead to good implementations of parameterized local search for \WVC? 
Here, an implementation of the FPT-algorithm for parameter~$(\Delta,k)$ (\Cref{running time rule small swaps weighted}) already gave very good results as a post-processing for state-of-the-art algorithms~\cite{Ull23}.
Would an implementation of, for example, the FPT-algorithm for~\WVC parameterized by~$h$-index and~$k$ (\Cref{thm: h index algo}) perform similarly well or even better?

\bibliographystyle{plain}
\bibliography{my_bib}

\end{document}